\newtheorem{theorem}{Th\'eor\`eme}
\newtheorem{corollary}[theorem]{Corollaire}
\newtheorem{property}[theorem]{Propri\'et\'e}
\newtheorem{definition}[theorem]{D\'efinition}
\newtheorem{example}[theorem]{Exemple}
\newtheorem{lemma}[theorem]{Lemme}
\newtheorem{notation}[theorem]{Notation}
\newtheorem{problem}[theorem]{R\`egle}
\newtheorem{proposition}[theorem]{Proposition}
\newtheorem{remark}[theorem]{Remarque}
\newenvironment{proof}[1][Preuve]{\textbf{#1.} }{\ \rule{0.5em}{0.5em}}
\begin{document}

\frontmatter
\begin{center}
UNIVERSITE D'ANTANANARIVO
\end{center}
\begin{center}
\textbf{FACULTE DES SCIENCES}
\end{center}
\begin{center}
\texttt{FORMATION DOCTORALE EN PHYSIQUE}
\end{center}
\begin{center}
\texttt{DEPARTEMENT DE PHYSIQUE}
\end{center}
\begin{center}
\textbf{Laboratoire de Rh\'eologie des Suspensions}
\end{center}
\begin{center}
\LARGE{\textbf{MEMOIRE D'HABILITATION A DIRIGER DES RECHERCHES}}
\end{center}
\begin{center}
\textit{option}: M\'ecanique et physique des suspensions
\end{center}
\begin{center}
\textit{sur}:
\end{center}
\begin{center}
\large{\textbf{MATRICES DE COMMUTATION TENSORIELLE}}:\\
\large{\textbf{DE L'EQUATION DE DIRAC VERS UNE APPLICATION EN PHYSIQUE DES PARTICULES}}
\end{center}
\begin{center}
\textit{pr\'esent\'e par}
\end{center}
\begin{center}
\large{RAKOTONIRINA Christian}
\end{center}

\begin{center}
\textit{devant la commission d'examen compos\'ee de:}
\end{center}
\begin{tabbing}
\textit{Directeur de HDR}:\=Monsieur \=RATIARISON Adolphe Andriamanga \quad \quad\= Professeur Titulaire \kill\\

\textit{Pr\a'esident}:     \>Monsieur \>RAKOTOMAHANINA RALAISOA  \quad \quad \> \\[5pt]  
\>\> Emile  \quad \> Professeur \a'em\a'erite \\[8pt]

\textit{Rapporteurs}:\>Madame \>RANDRIAMANANTANY Zely Arivelo\quad \quad\> Professeur titulaire\\[5pt]
                     \>Madame \>RAZANAJATOVO Mariette \quad \quad \> Professeur titulaire\\[5pt]
                      \>Monsieur \>RAKOTOMALALA Jean Lalaina \quad \quad\> Professeur\\[8pt]
 \textit{Examinateurs}:\>Monsieur \>RANAIVO-NOMENJANAHARY Flavien \quad \quad\> Professeur titulaire \\[5pt]  
                       \>Monsieur \>ANDRIAMAMPIANINA José \quad \quad\> Professeur\\[8pt]
	\textit{Directeur de HDR}:\>Monsieur \>RATIARISON Adolphe Andriamanga \quad \quad\> Professeur titulaire
\end{tabbing}

\title{MATRICES DE COMMUTATION TENSORIELLE: DE L'EQUATION DE DIRAC VERS UNE APPLICATION EN PHYSIQUE DES PARTICULES}

\author{RAKOTONIRINA Christian}

\maketitle
\begin{center}
\textbf{Abstract}
\end{center}
We construct two sets of representations of the Dirac equation. They transform themselves from one to another in multiplying by the tensor commutation matrix (TCM) $2\otimes 2$. The Gauss matrix can lead us to the Cholesky decomposition. The TCMs can be used in order to obtain different transforms of some matrix equations to linear matrix equations of the form $AX=B$. A TCM $n\otimes n$ is expressed in terms of the $n\otimes n$-Gell-Mann matrices. In order to generalize this expression to which of TCMs $n\otimes p$, we introduced what we call rectangle Gell-Mann matrices. The electric charge operator (ECO) for eight leptons and quarks of the Standard Model (SM) of a same generation proposed by Zenczykowski is expressed in terms of the TCM $2\otimes 2$. An ECO including all the fermions of the SM is constructed in terms of TCM $4\otimes 4$. Then the eigenvalues of a TCM take sense.\\

\textit{Keywords}: Dirac equation, Dirac-Sidharth equation, Kronecker product, Swap operator, System of linear equations, Cholesky, Generalized Gell-Mann matrices, leptons, Quarks.

\begin{center}
\textbf{R\'esum\'e}
\end{center}
Nous avons construit deux ensembles de repr\'esentations de l'\'equation de Dirac. Ils se transforment l'un en l'autre en multipliant par la matrice de commutation tensorielle  (MCT) $2\otimes2$. La m\'ethode de Gauss matricielle peut nous conduire \`a la d*\'ecomposition de Cholesky. Les MCT peuvent \^etre utilis\'ees afin d'obtenir diff\'erentes transform\'ees de certaines \'equations matricielles en \'equations matricielles lin\'eaires de la forme $AX=B$. Une MCT $n\otimes n$ a \'et\'e exprim\'ee en termes de matrices de Gell-Mann $n\otimes n$. Afin de g\'en\'raliser cette expression \`a celles des MCT $n\otimes p$, nous avons introduit ce que nous appelons matrices de Gell-Mann rectangles. L'op\'erateur charges \'electriques (OCE), pour huit leptons et quarks du mod\`ele standard (MS) d'une m\^eme g\'en\'eration propos\'e par Zenczykowski a \'et\'e exprim\'e en termes de  MCT $2\otimes 2$. Un OCE incluant tous les fermions du MS a \'et\'e construit en termes de MCT $4\otimes 4$. Alors, les valeurs propres d'une MCT ont pris sens.\\

\textit{Mots cl\'es} :   Equation de Dirac, Equation de Dirac-Sidharth, produit de Kronecker, Système d'\'quations lin\'eaires, Cholesky, Matrices de Gell-Mann g\'en\'eralis\'ees,  leptons, quarks.

\chapter*{Avant-propos}
Ce document pr\'{e}sente un ensemble de travaux de recherches que nous avons men\'{e} depuis l'obtention de notre doctorat en 2003. Nous avons rassemblons dans ce m\'{e}moire nos publications pour obtenir certaine coh\'{e}rence. Nous avons d\'{e}lib\'{e}r\'{e}ment choisi de ne pas y inclure notre travail sur les \'{e}quations diff\'{e}rentielles. En effet, l'inclusion de ce travail rendrait difficile, pour ne pas dire impossible, la coh\'{e}rence. Ces activit\'{e}s se sont d\'{e}roul\'{e}es \`a l'Institut Sup\'{e}rieur de T\'{e}chnologie d'Antananarivo, IST-T, o\`u nous enseignons en tant que ma\^itre de conf\'{e}rences et \`a l'universit\'{e} d'Antananarivo, D\'{e}partement de Physique, Laboratoire de Rh\'{e}ologies des Suspensions, LRS, o\`u nous avons travaill\'{e} pour finir ce m\'{e}moire et avons encadr\'{e} un \'{e}tudiant en DEA. Nous avons aussi travaill\'{e} avec Madagascar-Institut National des Sciences et Techniques Nucl\'{e}aires, Madagascar-INSTN, o\`u nous avions travaill\'{e} de 1999 \`a 2003 pour preparer notre th\`ese de doctorat et avons encadr\'{e} un \'{e}tudiant en DEA.\\
Ce travail est la suite logique de notre formation en math\'{e}matiques et notre th\`ese de physique th\'{e}orique. Il est incontestable que les math\'{e}matiques sont des outils que tous les pays peuvent utiliser pour contribuer au d\'eveloppement des sciences. R\'eciproquement, les sciences, de par leurs d\'{e}veloppements, enrichissent les math\'{e}matiques. Comme son titre l'indique, ce travail est un exemple qui met en \'evidence cette dialectique entre sciences, physique et math\'{e}matiques. En effet, notre \'{e}tude math\'{e}matiques de l'\'{e}quation de Dirac-Sidharth, qui est une \'{e}quation relativiste des particules de spin-$\frac{1}{2}$ nous conduit aux matrices que nous appelons matrices de commutation tensorielle. Ces matrices sont tr\`es utiles en math\'{e}matiques, dans les r\'{e}solutions des \'{e}quations matricielles. Elles se g\'{e}n\'{e}ralisent \`a ce que nous appelons matrices de permutation tensorielles qui, \`a leur tour, ont une application en m\'{e}canique quantique.\\
 Nous avons exprim\'{e} ces matrices en termes de matrices de Gell-Mann g\'{e}n\'{e}ralis\'{e}es, qui sont des matrices de la physique des particules, en esp\'{e}rant trouver application de ces matrices dans ce domaine de la physique. Nous pensons que nous nous dirigeons vers l'application de ces matrices en physique des particules lorsque, lors de la recherche que nous avons effectu\'{e} dans le Laboratoire de Rh\'{e}ologies des Suspensions, LRS, Universit\'{e} d'Antananarivo, nous avons exprim\'{e} l'op\'{e}rateur charges \'{e}l\'{e}ctrique des particules fermions du Mod\`ele Standard propos\'{e} par Zenczykowski en termes de matrice de commutation tensorielle. Ainsi, ce travail s'inscrit dans le cadre de l'application de l'alg\`ebre lin\'{e}aire et multilin\'{e}aire en physique.\\
 L'insertion de la subsection sur la m\'{e}thode de Cholesky dans ce m\'{e}moire semble perturber un peu la coh\'{e}rence. Mais comme cette subsection fait partie de l'alg\`ebre lin\'{e}aire et qu'elle figure parmi les fruits de notre recherche p\'{e}dagogique effectu\'{e}e \`a l'Institut Sup\'{e}rieur de Technologie d'Antananarivo, IST-T, nous nous excusons aupr\`es des lecteurs cette incoh\'{e}rence apparente. Nous disons "apparente" parce que pour un lecteur averti ce ne sera plus du tout une incoh\'{e}rence, puisque cette subsection est un pont menant vers l'application des matrices de commutation tensorielle aux \'{e}tudes des \'{e}quations matricielles. Cette subsection, constitue une partie d'un chapitre de notre cours d'Analyse Num\'{e}rique \`a l'Institut Sup\'{e}rieur de Technologie d'Antananarivo, IST-T, est donn\'{e}e en anglais aux \'{e}tudiants, mais expliqu\'{e}e en Malagasy.
\maketitle

\tableofcontents

\mainmatter

\chapter*{INTRODUCTION}

Le produit tensoriel de matrices ou produit de kronecker ou encore produit direct de matrices n'est pas commutatif en g\'en\'eral. Cependant, le produit de certaines matrices avec un produit tensoriel de matrices commute ce produit. Ces matrices sont les matrices de commutation tensorielle (MCT) ou matrices de commutation de kronecker (Cf. par exemple \cite{Lin12,Zhang11,Gilchrist11,Fujii01}). Les MCT ont des applications en m\'ecanique quantique, en particulier en th\'eorie quantique de l'information (Cf. par exemple \cite{Zhang11,Gilchrist11,Fujii01,Chefles00,Wilmott08}). Nous avons d\'ecouvert les MCT quand nous travallions sur l'\'equation de Dirac \cite{Wang01,Rakotonirina03}, qui est l'\'equation quantique relativiste des particules de spin $\frac{1}{2}$ telles qu'un \'electron. Cependant, dans ce m\'emoire c'est l'\'{e}quation de Dirac-Sidharth qui nous conduit \`a ces matrices. C'est une \'{e}quation de Dirac modifi\'ee. Les repr\'esentations qui sont communes \`a ces \'equations sont ce qui nous y conduisent.\\
Nous pensons que ces matrices m\'eritent plus d'attention puisqu'elles ont aussi des applications en math\'ematiques, plus pr\'ecisement pour les r\'esolutions des \'equations matricielles (Cf. par exemple \cite{Lin12,Li13,Zhou12}). Lorsque nous \'etudiions \cite{Zenczykowski07}, l'expression de la MCT $2\otimes2$ en termes  de matrices de Pauli,
\begin{equation}\label{eq5}
 \textbf{U}_{2\otimes2}= \frac{1}{2} I_2 \otimes I_{2} + \frac{1}{2} \sum_{i
   = 1}^3 \sigma^i \otimes \sigma^i 
\end{equation}
 nous a fait remarqu\'e que ce MCT pourrait avoir applications en physique des particules.\\
Les MCT se g\'en\'eralisent  aux matrices de permutation tensorielle (MPT), qui permutent le produit tensoriel de matrices. Les MPT ont aussi des applications en m\'ecanique quantique.\\
Ce m\'emoire est divis\'e en trois chapitres de la mani\`ere suivante. Dans le chapitre I, nous \'etudierons une \'equation de Dirac modifi\'ee, l'\'equation de Dirac-Sidharth. C'est l\`a que la MCT $2\otimes2$  
\begin{equation*}
\textbf{U}_{2\otimes2}=\left(%
\begin{array}{cccc}
  1 & 0 & 0 & 0 \\
  0 & 0 & 1 & 0 \\
  0 & 1 & 0 & 0 \\
  0 & 0 & 0 & 1 \\
\end{array}%
\right) 
\end{equation*}
nous appara\^itra comme quand nous \'etudiions l'\'equation de Dirac. Le chapitre II est l'\'etudes et applications math\'ematiques des MCT et leur g\'en\'eralisation aux MPT. Dans le chapitre III, nous g\'en\'eraliserons l'expression (\ref{eq5}) \`a  celle de la MCT $n\otimes  n$ en termes de matrices de Gell-Mann $n\times n$, qui sont des matrices de la physique des particules. Pour g\'en\'eraliser \`a son tour cette relation \`a celle de MCT $n\otimes p$, nous introduirons ceux que nous appelons  matrices de Gell-Mann rectangles. Nous verrons aussi dans ce chapitre comment exprimer une MPT en termes de matrices de Gell-Mann g\'en\'eralis\'ees. A la fin de ce chapitre, nous exprimerons l'op\'erateur charges \'electriques (OCE) de fermions propos\'e par Zenczykowski 
\begin{equation*}
Q=\frac{1}{2}\sigma_0\otimes\sigma_0\otimes\sigma_3+\frac{1}{6}\left(\sum_{i=1}^3\sigma_i\otimes \sigma_i\right)\otimes \sigma_0
\end{equation*} 
en termes de la MCT $2\otimes2$. Puis nous introduirons les MCT $3\otimes3$ et $4\otimes4$ pour obtenir des op\'erateurs charges \'electriques pour plus de fermions.    

\chapter{MCT A PARTIR DE L'EQUATION DE DIRAC-SIDHARTH}
Ce chapitre est bas\'e sur Ref. \cite{Raoelina11}
\section{EQUATION DE DIRAC-SIDHARTH ET PRODUIT TENSORIEL DE MATRICES    }

Nous allons construire l'\'equation de Dirac-Sidharth, \`a partir de l'hamiltonien de Sidharth, par quantification de l'\'energie et de l'impulsion dans l'alg\`ebre de Pauli. Nous allons r\'esoudre cette \'equation en utilisant le produit tensoriel de matrices.

Selon la relativit\'e restreinte d'Einstein \cite{Einstein1905}, la relation entre l'\'energie et l'impulsion est 
\begin{equation*}
E^2=c^{2}p^{2}+m^{2}c^{4}
\end{equation*}
\`a partir de laquelle nous pouvons d\'eduire l'\'equation de Klein-Gordon et l'\'equation de Dirac. Cette th\'eorie utilise le concept de l'espace-temps continu.  \\
L'espace-temps quantifi\'e a \'et\'e introduit pour la premi\`ere fois par Snyder \cite{Snyder472,Snyder471}, sous le nom de g\'eom\'etrie non commutative de Snyder, \`a cause de la modification sur les relations de commutation. Dans cette th\'eorie les relations de commutation sont \cite{Snyder472,Snyder471}\\
\begin{equation*}
\left[x^{\mu},x^{\nu}\right]=i\alpha\frac{\ell^{2}c^{2}}{\hbar}\left(x^{\mu}p^{\nu}-x^{\nu}p^{\mu}\right),
\end{equation*}
\begin{equation*}
\left[x^{\mu},p_{\nu}\right]=i\hbar\left[\delta^{\mu}_\nu+i\alpha\frac{\ell^{2}c^{2}}{\hbar^{2}}p^{\mu}p_{\nu}\right],
\end{equation*}
\begin{equation*}
\left[p_{\mu},p_{\nu}\right]=0
\end{equation*}

\noindent o\`u  $\ell$ est une \'echelle de longueur quelconque en physique. Par exemple, $\ell=\ell_p=\sqrt{\frac{\hbar G}{c^{3}}}\approx1.6\times10^{-33}cm$ la longueur de Planck , la longueur minimale possible de mesurer en physique, o\`u G est la constante gravitationnelle.
Comme consequence, la relation entre \'energie et implulsion est modifi\'ee et devient (en unit\'e SUN, $c =\hbar=1$)\cite{Sidharth04}
\begin{equation*}
E^2 = p^2 + m^2 + \alpha l^2p^4 
\end{equation*}
 ou (en unit\'e SI)\cite{Glinka101, Glinka102}
\begin{equation}\label{eq1}
E^2=c^{2}p^{2}+m^{2}c^{4}+\alpha\left(\frac{c}{\hbar}\right)^{2}\ell^{2}p^{4}
\end{equation}
o\`u $\alpha$ une constante adimensionnelle.
\begin{equation}\label{eq2}
\epsilon=\frac{\hbar c}{\sqrt{\alpha}\ell} 
\end{equation}
est l'\'energie de Planck d\^ue \`a l'\'echelle de longueur de Planck $\ell=\ell_p$ \cite{Glinka101, Glinka102}.
Alors, \cite{Glinka102}
\begin{equation*}
E^2=c^{2}p^{2}+m^{2}c^{4}+\frac{c^4p^4}{\epsilon^2}
\end{equation*}
 
 \noindent Le r\^ole fondamental de $\epsilon$ est expliqu\'e dans \cite{Glinka101, Glinka102, Glinka11}.\\
\noindent En fait, en appliquant l'hamiltonien de Snyder-Sidharth (\ref{eq1}) Sidharth a construit l'\'equation de Dirac-Sidharth \cite{Sidharth04,Sidharth05}, i.e.  l'\'equation de Dirac modifi\'ee d\^ue \`a la g\'eom\'etrie non commutative de l'espace de phase.\\
Dans la sous-section \ref{DDSE}, nous construirons l'\'equation de Dirac-Sidharth, \`a partir de la relation (\ref{eq1}), par quantification de l'\'energie et l'impulsion. Dans la sous-section \ref{RDSE}, nous r\'esoudrons l'\'equation de Dirac-Sidharth par utilisation du produit tensoriel de matrices. \\

\subsection{Une derivation de l'\'equation de Dirac-Sidharth}\label{DDSE}
Pour \'etablir l'\'equation de Dirac-Sidharth nous allons utiliser la m\'ethode de J.J. Sakurai \cite{Sakurai67} pour la d\'erivation de l'\'equation de Dirac.

La fonction d'onde d'une particule de spin-$\frac{1}{2}$ doit \^etre \`a deux composantes. Ainsi, pour  quantifier la relation  \'energy-impulsion relativiste afin d'obtenir  l'\'equation modifi\'ee de Klein-Gordon \cite{Sidharth04,Sidharth05}, ou \'equation de Klein-Gordon-Sidharth, d'une particule de spin-$\frac{1}{2}$, les operateurs qui prennent part dans la quantification doivent \^etre des  matrices $2\times2$ . Ainsi, prenons comme r\`egles de quantification 

$E   \longrightarrow  i\hbar\sigma^{0}\frac{\partial}{\partial t}=i\hbar\frac{\partial}{\partial t}$

$\vec{p} \longrightarrow  -i\hbar\sigma^{1}\frac{\partial}{\partial x^{1}}-i\hbar\sigma^{2}\frac{\partial}{\partial x^{2}}-i\hbar\sigma^{3}\frac{\partial}{\partial x^{3}}=-i\hbar\vec{\sigma}\vec{\nabla}=\hat{p}_{1}\sigma^{1}+\hat{p}_{2}\sigma^{2}+\hat{p}_{3}\sigma^{3}$

\noindent o\`u \begin{center}$\sigma_1$ = $\left(
\begin{array}{cc}
  0 & 1 \\
  1 & 0 \\
\end{array}
\right)$, $\sigma_2$\;=\;$\left(
\begin{array}{cc}
  0 & -i \\
  i & 0 \\
\end{array}
\right)$, $\sigma_3$\;=\;$\left(
\begin{array}{cc}
  1 & 0 \\
  0 & -1 \\
\end{array}
\right)$
\end{center} sont les matrices de Pauli.
Alors, nous avons d'abord l'\'equation de Klein-Gordon-Sidharth
\begin{equation*}
c^{2}\hbar^{2}\left(\frac{\partial^{2}}{c^{2}\partial t^{2}}-\Delta-m^{2}c^{2}-\alpha\frac{\ell^{2}}{\hbar^{2}}\vec{\nabla}^{4}\right)\phi=0
\end{equation*}
\begin{equation*}
\begin{split}
\left(i\hbar\frac{\partial}{\partial t}+ic\hbar\vec{\sigma}\vec{\nabla}\right)\frac{1}{mc^{2}}\left\{\sum^{+\infty}_{k=0}\left(-1\right)^k\left [\frac{i\sqrt{\alpha}}{mc\hbar}\ell \left(-i\hbar\vec{\sigma}\vec{\nabla}\right)^{2}\right]^k\right\}\times \\ \left(i\hbar\frac{\partial}{\partial t}-ic\hbar\vec{\sigma}\vec{\nabla}\right)\phi
 = \left[mc^{2}+i\sqrt{\alpha}\frac{c}{\hbar}\ell \left(-i\hbar\vec{\sigma}\vec{\nabla}\right)^{2}\right]\phi
 \end{split}
\end{equation*}
\noindent dont l'operateur agit sur la fonction d'onde \`a deux composantes $\phi$, qui est solution de l'\'equation de Klein-Gordon-Sidharth.
Soit
\begin{equation*}
\chi=\frac{1}{mc^{2}}\left\{\sum^{+\infty}_{k=0}\left(-1\right)^k\left [\frac{i\sqrt{\alpha}}{mc\hbar}\ell \left(-i\hbar\vec{\sigma}\vec{\nabla}\right)^{2}\right]^k\right\}\\
  \left(i\hbar\frac{\partial}{\partial t}-ic\hbar\vec{\sigma}\vec{\nabla}\right)\phi
\end{equation*}
\noindent alors, nous avons le syst\`eme d'\'equations aux d\'eriv\'ees partielles suivantes
\begin{equation*}
\left\{\begin{aligned}
i\hbar\frac{\partial}{c\partial t}\chi+i\hbar\vec{\sigma}\vec{\nabla}\chi & = mc\phi+i\sqrt{\alpha}\frac{\ell}{\hbar} \left(i\hbar\vec{\sigma}\vec{\nabla}\right)^{2}\phi\\
i\hbar\frac{\partial}{c\partial t}\phi-i\hbar\vec{\sigma}\vec{\nabla}\phi & = mc\chi-i\sqrt{\alpha}\frac{\ell}{\hbar} \left(i\hbar\vec{\sigma}\vec{\nabla}\right)^{2}\chi
\end{aligned} \right.
\end{equation*}
En additionnant et en soustrayant ces \'equations, et en transformant  les \'equations obtenues sous forme matricielle, nous avons l'\'equation de Dirac-Sidharth
\begin{equation*}
i\hbar\gamma^{\mu}_{D}\partial_{\mu}\psi_{D}-mc\psi_{D}-i\sqrt{\alpha}\ell\hbar\gamma^{5}_{D}\Delta\psi_{D}=0
\end{equation*}
\noindent dans la representation de Dirac (ou "Standard") des $\gamma$-matrices, o\`u \\
$\gamma^{0}_{D}=\begin{pmatrix}
\sigma^{0} & 0 \\
0 & -\sigma^{0}
\end{pmatrix}=\sigma^{3}\otimes\sigma^{0}$, $\gamma^{j}_{D}=\begin{pmatrix}
0 & \sigma^{j} \\
-\sigma^{j} & 0
\end{pmatrix}=i\sigma^{2}\otimes\sigma^{j}$, $j=1, 2, 3$, \\
 $\gamma^{5}_{D}=i\gamma^{0}_{D}\gamma^{1}_{D}\gamma^{2}_{D}\gamma^{3}_{D}=\begin{pmatrix}
0 & \sigma^{0} \\
\sigma^{0} & 0
\end{pmatrix}=\sigma^{1}\otimes\sigma^{0}$, et $\psi_{D}=\begin{pmatrix}
\chi+\phi\\
\chi-\phi
\end{pmatrix}$,\\
 $\Delta=\frac{\partial^{2}}{\partial x^{2}_{1}}+\frac{\partial^{2}}{\partial x^{2}_{2}}+\frac{\partial^{2}}{\partial x^{2}_{3}}$.\\
 
 Les matrices $4\times4$ $\left(\gamma^{\mu}_{D}\right)_{0\leq \mu \leq 3}$ satisfont les relations suivantes
 \begin{equation}\label{eq3prime}
\gamma^{\mu}_{D}\gamma^{\nu}_{D}+\gamma^{\nu}_{D}\gamma^{\mu}_{D}=2g^{\mu\nu}I_4, \; \; \mu, \nu \in\left\{0, 1, 2, 3\right\}
\end{equation}

\noindent o\`u $g^{\mu\nu}=0$ if $\mu\neq\nu$, \; $g^{jj}=-1, j \in\left\{1, 2, 3\right\}$, \; \; $g^{00}=1$

Nous savons que \cite{BjorkenDrell64}
\begin{equation*}
P\gamma^{5}=-\gamma^{5}P
\end{equation*}
Il s'ensuit que l'\'equation de Dirac-Sidharth est non invariant sous l'op\'erateur parit\'e (ou r\'eflexion dans l'espace) \cite{Sidharth09}.\\
L'\'equation
\begin{equation*}
i\hbar\gamma^{\mu}_{W}\partial_{\mu}\psi_{W}-mc\psi_{W}-i\sqrt{\alpha}\ell\hbar\gamma^{5}_{W}\Delta\psi_{W}=0
\end{equation*}
est l'\'equation de Dirac-Sidharth dans la representation de Weyl (ou chiral), o\`u  $\psi_{W}=\begin{pmatrix}
\chi\\
\phi
\end{pmatrix}$.\\
Ainsi, $\chi$ et $\phi$ sont les composantes (ou chiralit\'e) respectivement gauche et droite.
 
\subsection{R\'esolution de l'\'equation de Dirac-Sidharth}\label{RDSE}
Dans cette section, nous cherchons les solutions de l'\'equation de Dirac-Sidharth, en forme d'onde plane en utilisant le produit tensoriel de matrices. Nous avions utilis\'e cette m\'ethode  pour r\'esoudre l'\'equation de Dirac \cite{Rakotonirina03}.\\
Ainsi, cherchons une solution sous la forme
\begin{equation*}
\psi_{D}=U(p)e^{\frac{i}{\hbar}\left(\vec{p}\vec{x}-Et\right)}
\end{equation*}
Soit $\Psi$ un spineur \`a quatre composantes qui est un \'etat propre de $\hat{p}_{j}=-i\hbar\frac{\partial}{\partial x^{j}}$ et de $\hat{E}=i\hbar\frac{\partial}{\partial t}$, $\vec{p}=\begin{pmatrix}
p^{1}\\
p^{2}\\
p^{3}
\end{pmatrix}$, et $\vec{n}=\frac{\vec{p}}{p}=\begin{pmatrix}
n^{1}\\
n^{2}\\
n^{3}
\end{pmatrix}$.\\
L'\'equation de Dirac-Sidharth devient
\begin{multline*}
\sigma^{0}\otimes\sigma^{0}U(p)-\frac{2}{\hbar}cp\sigma^{1}\otimes\left(\frac{\hbar}{2}\vec{\sigma}\vec{n}\right)U(p)-mc^{2}\sigma^{3}\otimes\sigma^{0}U(p)\\
+c\sqrt{\alpha}p^{2}\frac{\ell}{\hbar}\sigma^{2}\otimes\sigma^{0}U(p)=0
\end{multline*}
Prenons $U(p)$ sous la forme
\begin{equation*}
U(p)=\varphi\otimes u
\end{equation*}
o\`u $u$ est un vecteur propre de l'op\'erateur spin $\frac{\hbar}{2}\vec{\sigma}\vec{n}$. $\varphi=\begin{pmatrix}
\varphi^{1}\\
\varphi^{2}
\end{pmatrix}$ et $u$ sont \`a deux componsantes.\\
Comme $u\neq0$, donc
\begin{equation}\label{eq4prime}
\left(\eta cp\sigma^{1}+mc^{2}\sigma^{3}-c\sqrt{\alpha}p^{2}\frac{\ell}{\hbar}\sigma^{2}\right)\varphi= E\varphi
\end{equation}
avec $\eta=
\begin{cases}
+1 & \text{spin haut}\\
-1 & \text{spin bas}
\end{cases}$ \\
En r\'esolvant cette \'equation par rapport \`a $\varphi^{1}$ et $\varphi^{2}$, nous avons 
 \begin{equation*}
\Psi_{+}=\sqrt{\frac{E+mc^{2}}{2E}}\begin{pmatrix}
1\\
\frac{\eta cp-i\frac{c}{\hbar}\sqrt{\alpha}p^{2}\ell}{mc^{2}+E}
\end{pmatrix}\otimes se^{\frac{i}{\hbar}\left(\vec{p}\vec{x}-Et\right)}
\end{equation*}
la solution \`a \'energie positive, o\`u $s=\frac{1}{\sqrt{2\left(1+n^{3}\right)}}\begin{pmatrix}
-n^{1}+in^{2}\\
1+n^{3}
\end{pmatrix}$ spin haut, $s=\frac{1}{\sqrt{2\left(1+n_{3}\right)}}\begin{pmatrix}
1+n^{3}\\
n^{1}+in^{2}
\end{pmatrix}$ spin bas.\\

D'apr\`es l'\'equation (\ref{eq4prime}), cette m\'ethode fait appara\^itre la matrice $h=\eta cp\sigma^{1}-c\sqrt{\alpha}p^{2}\frac{\ell}{\hbar}\sigma^{2}+mc^{2}\sigma^{3}$, ou $h=\eta cp\sigma^{1}-\frac{c^2p^2}{\epsilon}\sigma^{2} + mc^{2}\sigma^{3}$ (si $\ell$ est l'\'echelle de longueur de Planck), dont les valeurs propres sont les \'energies positive et negative. $h$ est comme un vecteur dans l'alg\`ebre de Pauli. Ainsi, l'\'energie d'une particule de spin-$\frac{1}{2}$ peut \^etre consid\'er\'ee comme un vecteur dans l'alg\`ebre de Pauli, dont la longueur ou l'intensit\'e est donn\'ee par la relation  \'energie-impulsion relativiste.
\begin{equation*}
h^{2}=E^{2}
\end{equation*}
\subsection{Representation de l'\'equation de Dirac-Sidharth}
\begin{definition}
Un syst\`eme de matrices $4\times4$ $\left(\gamma^\mu\right)_{0\leq\mu\leq3}$ satisfaisant la relation $(\ref{eq3prime})$, c'est-\`a-dire $\gamma^{\mu}\gamma^{\nu}+\gamma^{\nu}\gamma^{\mu}=2g^{\mu\nu}I_4$, est appel\'e une repr\'esentation de l'\'equation de Dirac-Sidharth. Si de plus, $\left(\gamma^\mu\right)_{0\leq\mu\leq3}$ est un syst\`eme de matrices unitaires  , on dit qu'il est une repr\'esentation unitaire de l'\'equation de Dirac-Sidharth.
\end{definition}
Les th\'eor\`emes suivants concernent les relations entre diff\'erentes repr\'esentations \cite{messiah58}.
\begin{theorem}Th\'eor\`eme Fondamental de Pauli.\\
Pour deux repr\'esentations de l'\'equation de Dirac-Sidharth $\left(\gamma^\mu\right)_{0\leq\mu\leq3}$, $\left(\gamma^{'\mu}\right)_{0\leq\mu\leq3}$, il existe une matrice $S$, d\'efinie \`a une constante multiplicative pr\`es et non singulier $\left(det(S)\neq0\right)$, telle que 
\begin{equation*}
\gamma^\mu=S\gamma^{'\mu} S^{-1}, \mu=0, 1, 2, 3.
\end{equation*} 
\end{theorem}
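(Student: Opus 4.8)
\emph{Esquisse de démonstration proposée.} L'approche que je propose est la construction classique de Pauli, qui ne repose que sur la relation de Clifford $(\ref{eq3prime})$ commune aux deux systèmes; la modification de Sidharth n'y joue donc aucun rôle. D'abord je formerais les seize produits
\[
\Gamma_A\in\left\{I_4,\ \gamma^\mu,\ \gamma^\mu\gamma^\nu\ (\mu<\nu),\ \gamma^\mu\gamma^\nu\gamma^\rho\ (\mu<\nu<\rho),\ \gamma^5\right\}
\]
ainsi que les produits homologues $\Gamma'_A$ bâtis sur $\left(\gamma'^\mu\right)$. À l'aide de $(\ref{eq3prime})$ je vérifierais que chaque $\Gamma_A$ est inversible, car $\Gamma_A^2=\pm I_4$; que pour $\Gamma_A\neq I_4$ il existe $\Gamma_B$ avec $\Gamma_B\Gamma_A\Gamma_B^{-1}=-\Gamma_A$, d'où $\operatorname{Tr}(\Gamma_A)=0$; et que la famille $\left(\Gamma_A\right)$ est libre, donc une base de l'espace des matrices $4\times4$. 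Le point central est que la multiplication à gauche par $\gamma^\mu$ permute cette base à une phase près, $\gamma^\mu\Gamma_A=\omega_\mu(A)\,\Gamma_{\sigma_\mu(A)}$ avec $\omega_\mu(A)\in\{\pm1,\pm i\}$, et que la bijection $\sigma_\mu$ et la phase $\omega_\mu$ sont \emph{identiques} dans le système primé, puisqu'elles ne dépendent que des relations $(\ref{eq3prime})$.

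Ensuite je poserais $S=\sum_A\Gamma_A\,F\,\Gamma'^{-1}_A$ pour une matrice $F$ arbitraire. Une simple réindexation fondée sur l'identité précédente donne alors l'entrelacement $\gamma^\mu S=S\gamma'^\mu$ pour tout $\mu$, c'est-à-dire $\gamma^\mu=S\gamma'^\mu S^{-1}$ dès que $S$ est inversible.

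L'obstacle principal sera précisément l'inversibilité de $S$. Comme les $\Gamma_A$ engendrent toute l'algèbre $M_4(\mathbb{C})$, les deux représentations sont irréductibles; le lemme de Schur assure qu'un entrelacement non nul entre deux représentations irréductibles de même dimension est un isomorphisme. Il suffira donc de choisir $F$ de sorte que $S\neq0$, ce que garantit un argument de trace reposant sur la complétude de la base $\left(\Gamma_A\right)$. Enfin, l'unicité à une constante multiplicative près résultera de ce que, si $S_1$ et $S_2$ conviennent, alors $S_2^{-1}S_1$ commute avec tous les $\gamma'^\mu$, donc avec $M_4(\mathbb{C})$ tout entier, et vaut $c\,I_4$ par le lemme de Schur; d'où $S_1=c\,S_2$.
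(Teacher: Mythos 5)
Le m\'emoire ne d\'emontre pas ce th\'eor\`eme : il l'\'enonce tel quel, avec renvoi \`a Messiah \cite{messiah58}, ce qui est l\'egitime puisque la relation (\ref{eq3prime}) est exactement l'alg\`ebre de Clifford de l'\'equation de Dirac ordinaire -- votre remarque liminaire, selon laquelle la modification de Sidharth ne joue aucun r\^ole, est pr\'ecis\'ement ce qui justifie ce renvoi. Votre esquisse reconstruit donc l'argument classique de Pauli que le m\'emoire se contente de citer, et elle est correcte dans ses grandes lignes : les seize $\Gamma_A$ v\'erifient $\Gamma_A^2=\pm I_4$, sont de trace nulle hors de $I_4$, donc libres et formant une base de $M_4(\mathbb{C})$ ; les donn\'ees $\omega_\mu$ et $\sigma_\mu$ sont bien identiques pour les deux syst\`emes, car elles se calculent \`a partir des seules relations (\ref{eq3prime}), et la r\'eindexation par la bijection $\sigma_\mu$ donne exactement $\gamma^\mu S=S\gamma^{'\mu}$ pour $S=\sum_A\Gamma_A F(\Gamma'_A)^{-1}$ ; l'unicit\'e \`a constante pr\`es s'obtient comme vous le dites, le commutant de $M_4(\mathbb{C})$ \'etant r\'eduit aux scalaires. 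Le seul pas que vous ne faites qu'indiquer, et qu'il faudrait r\'ediger, est le choix de $F$ assurant $S\neq0$ : si l'application $F\mapsto\sum_A\Gamma_A F(\Gamma'_A)^{-1}$ \'etait identiquement nulle, on aurait pour chaque couple d'indices $(l,j)$ l'identit\'e matricielle $\sum_A\bigl((\Gamma'_A)^{-1}\bigr)_{lj}\,\Gamma_A=0$, donc $\bigl((\Gamma'_A)^{-1}\bigr)_{lj}=0$ pour tous $A$, $l$, $j$ par ind\'ependance des $\Gamma_A$, ce qui contredirait l'inversibilit\'e des $(\Gamma'_A)^{-1}=\pm\Gamma'_A$ ; un $F$ \'el\'ementaire bien choisi donne donc $S\neq0$, et votre recours au lemme de Schur (ou, plus \'el\'ementairement, l'invariance de $\ker S$ sous l'alg\`ebre $M_4(\mathbb{C})$ engendr\'ee par les $\gamma^{'\mu}$) conclut bien \`a l'inversibilit\'e de $S$.
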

\begin{corollary}
Pour deux repr\'esentations unitaires de l'\'equation de Dirac-Sidharth $\left(\gamma^\mu\right)_{0\leq\mu\leq3}$, $\left(\gamma^{'\mu}\right)_{0\leq\mu\leq3}$, il existe une matrice unitaire $U$, d\'efinie \`a une phase pr\`es  
\begin{equation*}
\gamma^\mu=U\gamma^{'\mu} U^{-1}, \mu=0, 1, 2, 3.
\end{equation*} 
\end{corollary}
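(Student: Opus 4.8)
Ce corollaire raffine le Th\'eor\`eme Fondamental de Pauli en montrant que, lorsque les deux repr\'esentations sont unitaires, la matrice de passage $S$ peut \^etre choisie unitaire. Le point de d\'epart serait d'appliquer le th\'eor\`eme pr\'ec\'edent pour obtenir une matrice inversible $S$ telle que $\gamma^\mu=S\gamma'^\mu S^{-1}$ pour tout $\mu$. J'exploiterais ensuite les contraintes d'unitarit\'e: en combinant $\left(\gamma^\mu\right)^\dagger\gamma^\mu=I_4$ avec les relations de Clifford $(\ref{eq3prime})$, qui donnent $\left(\gamma^0\right)^2=I_4$ et $\left(\gamma^j\right)^2=-I_4$, on d\'eduit que $\gamma^0$ est hermitienne et que les $\gamma^j$ sont anti-hermitiennes, et de m\^eme pour les $\gamma'^\mu$.

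La deuxi\`eme \'etape consisterait \`a prendre l'adjoint de la relation d'entrelacement. De $\gamma^\mu=S\gamma'^\mu S^{-1}$ on tire $\left(\gamma^\mu\right)^\dagger=\left(S^\dagger\right)^{-1}\left(\gamma'^\mu\right)^\dagger S^\dagger$. En rempla\c{c}ant chaque adjoint par sa valeur — $\gamma^0$ et $\gamma'^0$ hermitiennes, $\gamma^j$ et $\gamma'^j$ anti-hermitiennes, les deux signes n\'egatifs s'\'eliminant dans ce dernier cas — on obtient dans tous les cas $\gamma^\mu=\left(S^\dagger\right)^{-1}\gamma'^\mu S^\dagger$. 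En \'egalant cette relation avec $\gamma^\mu=S\gamma'^\mu S^{-1}$, je poserais $H=S^\dagger S$ et je v\'erifierais, apr\`es multiplication \`a gauche par $S^\dagger$ et \`a droite par $S$, que $H\gamma'^\mu=\gamma'^\mu H$ pour tout $\mu$, c'est-\`a-dire que $H$ commute avec toutes les matrices de la repr\'esentation.

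La troisi\`eme \'etape reposerait sur l'irr\'eductibilit\'e: les seize matrices $I_4$, $\gamma'^\mu$, les produits $\gamma'^\mu\gamma'^\nu$ ($\mu<\nu$), $\gamma'^\mu\gamma'^\nu\gamma'^\rho$ et $\gamma'^5$ forment une base de l'alg\`ebre $M_4(\mathbb{C})$; une matrice qui commute avec tous les g\'en\'erateurs commute donc avec toute l'alg\`ebre et, par le lemme de Schur, est un multiple scalaire de l'identit\'e. Comme $H=S^\dagger S$ est hermitienne d\'efinie positive, on aurait $H=\lambda I_4$ avec $\lambda>0$ r\'eel. Il suffirait alors de poser $U=S/\sqrt{\lambda}$: on v\'erifie imm\'ediatement $U^\dagger U=H/\lambda=I_4$, donc $U$ est unitaire, et le facteur scalaire se simplifie dans $U\gamma'^\mu U^{-1}=S\gamma'^\mu S^{-1}=\gamma^\mu$. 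Pour l'unicit\'e \`a une phase pr\`es, si $U_1$ et $U_2$ sont deux telles matrices unitaires, alors $U_2^{-1}U_1$ commute avec tous les $\gamma'^\mu$, donc vaut $cI_4$ par Schur, et l'unitarit\'e force $\lvert c\rvert=1$.

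L'obstacle principal serait l'\'etape d'irr\'eductibilit\'e: il faut justifier soigneusement que seules les matrices scalaires commutent avec l'ensemble des $\gamma'^\mu$, ce qui \'equivaut \`a l'irr\'eductibilit\'e de la repr\'esentation $4\times4$ de l'alg\`ebre de Clifford — c'est d'ailleurs cette m\^eme irr\'eductibilit\'e qui sous-tend le Th\'eor\`eme Fondamental de Pauli d\'ej\`a admis. Un point technique \`a surveiller est la gestion des signes lors du passage \`a l'adjoint, afin de confirmer que la relation $H\gamma'^\mu=\gamma'^\mu H$ tient aussi bien pour la composante temporelle hermitienne que pour les composantes spatiales anti-hermitiennes.
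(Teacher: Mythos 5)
Votre esquisse est correcte et suit pour l'essentiel la d\'emonstration classique \`a laquelle le m\'emoire renvoie --- le corollaire y est \'enonc\'e sans preuve, avec r\'ef\'erence \`a Messiah --- \`a savoir : hermiticit\'e de $\gamma^0$ et anti-hermiticit\'e des $\gamma^j$ d\'eduites de l'unitarit\'e et des relations de Clifford, passage \`a l'adjoint de la relation d'entrelacement pour obtenir que $H=S^\dagger S$ commute avec tous les $\gamma'^\mu$, lemme de Schur via l'irr\'eductibilit\'e (les seize produits de matrices gamma engendrant $M_4(\mathbb{C})$), puis normalisation $U=S/\sqrt{\lambda}$ avec $\lambda>0$ puisque $H$ est hermitienne d\'efinie positive. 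Les deux points d\'elicats que vous identifiez vous-m\^eme --- la compensation des signes pour les composantes anti-hermitiennes et la justification de l'irr\'eductibilit\'e, d\'ej\`a sous-jacente au Th\'eor\`eme Fondamental de Pauli admis en amont --- sont trait\'es correctement dans votre sch\'ema, y compris l'unicit\'e \`a une phase pr\`es.
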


Dans la section \ref{DDSE} nous avons vu comment les matrices gamma dans la representation de Dirac peuvent \^etre exprim\'ees \`a l'aide des matrices de Pauli. De mani\`ere analogue nous pouvons obtenir, en construisant l'\'equation de Dirac-Sidharth six representations unitaires de cette equation, \`a savoir\\

\noindent $\left(\sigma^3\otimes\sigma^0, i\sigma^2\otimes\sigma^1, i\sigma^2\otimes\sigma^2, i\sigma^2\otimes\sigma^3\right)$ Repr\'esentation de Dirac\\

\noindent $\left(\sigma^3\otimes\sigma^0, i\sigma^1\otimes\sigma^1, i\sigma^1\otimes\sigma^2, i\sigma^1\otimes\sigma^3\right)$\\
 
\noindent $\left(\sigma^2\otimes\sigma^0, i\sigma^1\otimes\sigma^1, i\sigma^1\otimes\sigma^2, i\sigma^1\otimes\sigma^3\right)$\\

\noindent $\left(\sigma^2\otimes\sigma^0, i\sigma^3\otimes\sigma^1, i\sigma^3\otimes\sigma^2, i\sigma^3\otimes\sigma^3\right)$\\

\noindent $\left(\sigma^1\otimes\sigma^0, i\sigma^2\otimes\sigma^1, i\sigma^2\otimes\sigma^2, i\sigma^2\otimes\sigma^3\right)$ Repr\'esentation de Weyl\\

\noindent $\left(\sigma^1\otimes\sigma^0, i\sigma^3\otimes\sigma^1, i\sigma^3\otimes\sigma^2, i\sigma^3\otimes\sigma^3\right)$\\

\noindent et six autres repr\'esentations unitaires de l'\'equation de Dirac-Sidharth obtenues en commutant les produits tensoriels de matrices ci-dessus\\

\noindent $\left(\sigma^0\otimes\sigma^3, i\sigma^1\otimes\sigma^2, i\sigma^2\otimes\sigma^2, i\sigma^3\otimes\sigma^2\right)$ \\

\noindent $\left(\sigma^0\otimes\sigma^3, i\sigma^1\otimes\sigma^1, i\sigma^2\otimes\sigma^1, i\sigma^3\otimes\sigma^1\right)$\\
 
\noindent $\left(\sigma^0\otimes\sigma^2, i\sigma^1\otimes\sigma^1, i\sigma^2\otimes\sigma^1, i\sigma^3\otimes\sigma^1\right)$\\

\noindent $\left(\sigma^0\otimes\sigma^2, i\sigma^1\otimes\sigma^3, i\sigma^2\otimes\sigma^3, i\sigma^3\otimes\sigma^3\right)$\\

\noindent $\left(\sigma^0\otimes\sigma^1, i\sigma^1\otimes\sigma^2, i\sigma^2\otimes\sigma^2, i\sigma^3\otimes\sigma^2\right)$ \\

\noindent $\left(\sigma^0\otimes\sigma^1, i\sigma^1\otimes\sigma^3, i\sigma^2\otimes\sigma^3, i\sigma^3\otimes\sigma^3\right)$\\.

La matrice unitaire du corollaire qui transforme la premi\`ere famille de six repr\'esentations unitaires de l'\'equation de Dirac-Sidharth \`a la seconde famille est la matrice unitaire  
\begin{equation*}
\textbf{U}_{2\otimes2}=\left(%
\begin{array}{cccc}
  1 & 0 & 0 & 0 \\
  0 & 0 & 1 & 0 \\
  0 & 1 & 0 & 0 \\
  0 & 0 & 0 & 1 \\
\end{array}%
\right) 
\end{equation*}
puisqu'elle commute le produit tensoriel de deux matrices $2\times2$ quelconques, $\textbf{A}$,
$\textbf{B} \in\mathbb{C}^{2\times 2}$ de la mani\`ere suivante
\begin{equation*}
\textbf{U}_{2\otimes2}\cdot\left(\textbf{A}\otimes\textbf{B}\right)\cdot\textbf{U}_{2\otimes2} = \textbf{B}\otimes\textbf{A}
\end{equation*}
 Nous l'appelons MCT $2\otimes2$. Elle commute aussi le produit tensoriel de deux matrices complexes unicolonnes de la mani\`ere suivante. Pour $\textbf{a}=\begin{pmatrix}
  a^1 \\
  a^2 \\
\end{pmatrix}%
$, $\textbf{b}=\begin{pmatrix}
  b^1 \\
  b^2 \\
\end{pmatrix}\in\mathbb{C}^{2\times 1}$
quelconques 
\begin{equation*}
\textbf{U}_{2\otimes2}.\left(\textbf{a}\otimes \textbf{b}\right)=\textbf{b}\otimes \textbf{a}
\end{equation*}

La MCT $2\otimes2$ est fr\'equemment trouv\'ee en th\'eorie quantique de l'information 
\cite{Fujii01}, \cite{Faddev95}, \cite{Verstraete02}, o\`u on \'ecrit \`a l'aide des matrices de Pauli \cite{Fujii01},\cite{Faddev95}
\begin{equation}\label{eq5prime}
 \textbf{U}_{2\otimes2}= \frac{1}{2} I_2 \otimes I_{2} + \frac{1}{2} \sum_{i
   = 1}^3 \sigma^i \otimes \sigma^i 
\end{equation}
La MCT $3\otimes3$ a \'et\'e \'ecrite par KAZUYUKI FUJII \cite{Fujii01} de la mani\`ere suivante\\

\begin{equation}\textbf{U}_{3\otimes3}=
\begin{pmatrix}\begin{pmatrix}
  1 & 0 & 0 \\
  0 & 0 & 0 \\
  0 & 0 & 0 \\
\end{pmatrix} & 
\begin{pmatrix}
  0 & 0 & 0 \\
  1 & 0 & 0 \\
  0 & 0 & 0 \\
\end{pmatrix}
 & \begin{pmatrix}
  0 & 0 & 0 \\
  0 & 0 & 0 \\
  1 & 0 & 0 \\
\end{pmatrix}\\
  \begin{pmatrix}
  0 & 1 & 0 \\
  0 & 0 & 0 \\
  0 & 0 & 0 \\
\end{pmatrix} & \begin{pmatrix}
  0 & 0 & 0 \\
  0 & 1 & 0 \\
  0 & 0 & 0 \\
\end{pmatrix} & 
\begin{pmatrix}
  0 & 0 & 0 \\
  0 & 0 & 0 \\
  0 & 1 & 0 \\
\end{pmatrix} \\
  \begin{pmatrix}
  0 & 0 & 1 \\
  0 & 0 & 0 \\
  0 & 0 & 0 \\
\end{pmatrix} & \begin{pmatrix}
  0 & 0 & 0 \\
  0 & 0 & 1 \\
  0 & 0& 0 \\
\end{pmatrix} & \begin{pmatrix}
  0 & 0 & 0\\
  0 & 0 & 0 \\
  0 & 0 & 1 \\
\end{pmatrix} \\
\end{pmatrix}
\end{equation}
afin d'obtenir une conjecture pour la forme de la MCT $n\otimes n$, pour tout $n \in \mathbb{N}^\star$.\\

D\'signons par $\textbf{U}_{n\otimes p}$ la MCT $n\otimes p$, $n$, $p\in\mathbb{N}$.
\chapter{MATRICES DE PERMUTATION TENSORIELLE}\label{chap3}
Ce chapitre est bas\'e sur Refs. \cite{Rakotonirina05,Rakotonirina07,Rakotonirina09,Rakotonirina11}

\section{MATRICES DE COMMUTATION TENSORIELLE}
\subsection{Matrices de commutation tensorielle}
\begin{definition}
Pour $p$, $q\in\mathbb{N}$, $p\geq 2$, $q\geq 2$, nous appelons MCT $p\otimes q$ la matrice de permutation  $\textbf{U}_{p\otimes
q}\in \mathbb{C}^{pq\times pq}$, v\'erifiant la propri\'et\'e suivante
\begin{equation}\label{eq6}
\textbf{U}_{p\otimes q}.(\textbf{a}\otimes \textbf{b}) = \textbf{b}\otimes \textbf{a}
\end{equation}
 pour tous $\textbf{a}\in
\mathbb{C}^{p\times 1}$, $\textbf{b}\in
\mathbb{C}^{q\times 1}$.
\end{definition}
 En considerant $\textbf{U}_{p\otimes q}$ comme une matrice d'un tenseur d'ordre deux (Cf. Annexe \ref{appA}), nous pouvons la construire en utilisant la r\`egle suivante.

\begin{problem}\label{rl}
 Commen\c cons par mettre 1 sur la premi\`ere ligne et premi\`ere colonne, puis passons \`a la deuxi\`eme colonne en descendant $p$ lignes suivant cette colonne pla\c cons 1 \`a cette place qui est la $p+1$-i\`eme ligne et deuxi\`eme colonne. Puis passons \`a la troisi\`eme colonne en descendant $p$ lignes suivant cette colonne pla\c cons 1 \`a cette place qui est la $2p+2$-i\`eme ligne et troisi\`eme colonne, et ainsi de suite jusqu'\`a nous n'avons que $p-1$ lignes pour descendre (alors nous avons comme nombre de 1: $q$). Puis passons \`a la colonne suivante qui est la $(q + 1)$-i\`eme colonne, mettre 1 sur la deuxi\`eme ligne de cette colonne (puisque $(p-1)+1=p$) et r\'ep\'etons le processus jusqu'\`a nous n'avons que $p-2$
lignes pour descendre (alors nous avons comme nombre de 1: $2q$). Apr\`es, passons \`a la colonne suivante qui est la $(2q + 1)$ -i\`eme colonne, mettre 1 sur la troisi\`eme ligne de cette colonne (puisque $(p-2)+2=p$) et r\'ep\'etons le processus jusqu'\`a nous n'avons que $p-3$ lignes pour descendre (alors nous avons comme nombre de 1: $3q$). En Continuant de cette mani\`ere nous aurons que l'\'el\'ement sur $p\times q$-i\`eme ligne et $p\times q$-i\`eme colonne est 1.  Les autres \'el\'ements sont 0.
\end{problem}

\begin{proposition}\label{prop23}
  Pour $n, p \in \mathbb{N}$, $n, p \geqslant 2$,
  \[ \textbf{U}_{n \otimes p} = \sum_{( i, j )}^{( p, n )} \textbf{E}_{p \times n}^{( i, j )}
     \otimes \textbf{E}_{p \times n}^{( i, j )^t} = \sum_{( i, j )}^{( p, n )} \textbf{E}_{p
     \times n}^{( i, j )} \otimes \textbf{E}_{n \times p}^{( j, i )} \]
  o\`u $\textbf{E}_{p \times n}^{( i, j )}$ est la matrice $p \times n$ \'el\'ementaire 
  form\'ee par des zeros sauf l'\'el\'ement sur la $i$-i\`eme ligne et $j$-i\`eme colonne qui est \'egal \`a $1$.
  \end{proposition}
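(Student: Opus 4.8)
The plan is to exploit the fact that the defining relation (\ref{eq6}) characterises $\textbf{U}_{n\otimes p}$ completely. Indeed, taking $\textbf{a}=\textbf{e}_i^{(n)}$ and $\textbf{b}=\textbf{e}_j^{(p)}$ running over the standard bases of $\mathbb{C}^{n\times1}$ and $\mathbb{C}^{p\times1}$, the vectors $\textbf{a}\otimes\textbf{b}$ exhaust the standard basis of $\mathbb{C}^{np\times1}$; hence any matrix obeying (\ref{eq6}) for all $\textbf{a}$, $\textbf{b}$ is uniquely determined. So it suffices to show that the matrix $\textbf{M}:=\sum_{(i,j)}^{(p,n)}\textbf{E}_{p\times n}^{(i,j)}\otimes\textbf{E}_{n\times p}^{(j,i)}$ satisfies $\textbf{M}(\textbf{a}\otimes\textbf{b})=\textbf{b}\otimes\textbf{a}$. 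I would first dispose of the second equality in the statement: transposing the elementary matrix $\textbf{E}_{p\times n}^{(i,j)}$, which carries its unique nonzero entry $1$ from position $(i,j)$ to position $(j,i)$ while interchanging the two dimensions, gives $\textbf{E}_{p\times n}^{(i,j)^t}=\textbf{E}_{n\times p}^{(j,i)}$, so the two displayed sums coincide term by term.

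For the verification itself I would use the mixed-product rule $(\textbf{A}\otimes\textbf{B})(\textbf{C}\otimes\textbf{D})=(\textbf{AC})\otimes(\textbf{BD})$, valid whenever $\textbf{AC}$ and $\textbf{BD}$ are defined. Applying each summand to $\textbf{a}\otimes\textbf{b}$ yields $\bigl(\textbf{E}_{p\times n}^{(i,j)}\textbf{a}\bigr)\otimes\bigl(\textbf{E}_{n\times p}^{(j,i)}\textbf{b}\bigr)$. Since $\textbf{E}_{p\times n}^{(i,j)}$ has a single $1$ in row $i$ and column $j$, it acts on $\textbf{a}=(a^1,\dots,a^n)^t$ by $\textbf{E}_{p\times n}^{(i,j)}\textbf{a}=a^{j}\,\textbf{e}_i^{(p)}$, and likewise $\textbf{E}_{n\times p}^{(j,i)}\textbf{b}=b^{i}\,\textbf{e}_j^{(n)}$, where $b^{i}$ is the $i$-th entry of $\textbf{b}$. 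Thus the $(i,j)$ term equals $a^{j}b^{i}\,(\textbf{e}_i^{(p)}\otimes\textbf{e}_j^{(n)})$, and summing over $1\leq i\leq p$, $1\leq j\leq n$ produces $\sum_{i,j}b^{i}a^{j}\,(\textbf{e}_i^{(p)}\otimes\textbf{e}_j^{(n)})$, which is precisely the bilinear expansion of $\textbf{b}\otimes\textbf{a}$ obtained from $\textbf{b}=\sum_i b^{i}\textbf{e}_i^{(p)}$ and $\textbf{a}=\sum_j a^{j}\textbf{e}_j^{(n)}$. This establishes $\textbf{M}(\textbf{a}\otimes\textbf{b})=\textbf{b}\otimes\textbf{a}$, whence $\textbf{M}=\textbf{U}_{n\otimes p}$ by the uniqueness noted above.

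There is no genuinely hard step here; the entire content is a bookkeeping verification. The only point demanding care is index and dimension discipline: one must keep straight that $\textbf{a}\in\mathbb{C}^{n\times1}$ while $\textbf{b}\in\mathbb{C}^{p\times1}$, that $\textbf{E}_{p\times n}^{(i,j)}$ maps $\mathbb{C}^{n}$ into $\mathbb{C}^{p}$ (picking out the $j$-th coordinate and depositing it in slot $i$), and that the summation ranges are $i\leq p$ and $j\leq n$. As an alternative I could instead compute the entries of $\textbf{M}$ directly and match them against the placement of the $1$'s prescribed by the R\`egle~\ref{rl}, but the action-on-vectors argument is shorter and makes the reduction to the characterising property transparent; that is the route I would present.
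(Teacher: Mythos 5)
Your proposal is correct and follows essentially the same route as the paper: apply the candidate sum to $\textbf{a}\otimes\textbf{b}$, use the mixed-product rule to reduce each term to $\bigl(\textbf{E}_{p\times n}^{(i,j)}\textbf{a}\bigr)\otimes\bigl(\textbf{E}_{n\times p}^{(j,i)}\textbf{b}\bigr)=a^{j}b^{i}\bigl(\textbf{e}_i^{(p)}\otimes\textbf{e}_j^{(n)}\bigr)$, and sum to recover $\textbf{b}\otimes\textbf{a}$, exactly as in the paper's computation with Kronecker deltas. Your explicit remarks on uniqueness (basis vectors $\textbf{a}\otimes\textbf{b}$ determining the matrix) and on the transpose identity $\textbf{E}_{p\times n}^{(i,j)^t}=\textbf{E}_{n\times p}^{(j,i)}$ are minor tidy additions the paper leaves implicit, not a different method.
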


\begin{proof}
  Soient $\textbf{a} =  \begin{pmatrix}
    a_1\\
    a_2\\
    \vdots\\
    a_n
  \end{pmatrix} \in \mathbb{C}^{n \times 1}$, $\textbf{b} =
  \begin{pmatrix}
    b_1\\
    b_2\\
    \vdots\\
    b_p
  \end{pmatrix} \in \mathbb{C}^{p \times 1}$
\begin{equation*}
\begin{split}
   \textbf{U}_{n \otimes p} \cdot(\textbf{a} \otimes \textbf{b}) &= \sum_{( i, j )}^{( p, n )}
     \textbf{E}_{p \times n}^{( i, j )} \otimes \textbf{E}_{n \times p}^{( j, i )} \cdot ( \textbf{a}
     \otimes \textbf{b} ) = \sum_{( i, j )}^{( p, n )} ( \textbf{E}_{p \times n}^{( i, j )}
     \cdot \textbf{a} ) \otimes ( \textbf{E}_{n \times p}^{( j, i )} \cdot \textbf{b} )\\
     &= \sum_{( i, j )}^{( p, n )} ( \delta_{ik} a_j )_{1 \leqslant k \leqslant
     p} \otimes ( \delta_{jl} b_i )_{1 \leqslant l \leqslant n}
     = \sum_{( i, j )}^{( p, n )} ( \delta_{ik} b_i )_{1 \leqslant k \leqslant
     p} \otimes ( \delta_{jl} a_j )_{1 \leqslant l \leqslant n}\\
     &= \sum_{( i, j )}^{( p, n )} \begin{pmatrix}
       0\\
       \vdots\\
       0\\
       b_i\\
       0\\
       \vdots\\
       0
     \end{pmatrix} \otimes \begin{pmatrix}
       0\\
       \vdots\\
       0\\
       a_j\\
       0\\
       \vdots\\
       0
     \end{pmatrix} 
     = \textbf{b} \otimes \textbf{a}
\end{split}
\end{equation*}\\
\noindent o\`u $\delta_{ij}$ est le symbole de kronecker.
 \end{proof} 
  \begin{example}
    L'application de la r\`egle \ref{rl} nous donne 

\begin{equation*}
       \textbf{U}_{2 \otimes 3} = \begin{pmatrix}
         1 & 0 & 0 & 0 & 0 & 0  \\
         0 & 0 & 0 & 1 & 0 & 0  \\
         0 & 1 & 0 & 0 & 0 & 0  \\
         0 & 0 & 0 & 0 & 1 & 0  \\
         0 & 0 & 1 & 0 & 0 & 0  \\
         0 & 0 & 0 & 0 & 0 & 1 
       \end{pmatrix} 
       \end{equation*}
       \begin{equation*}
       \begin{split}
    U_{2 \otimes 3} &= \begin{pmatrix}
      1 & 0\\
      0 & 0\\
      0 & 0
    \end{pmatrix} \otimes \begin{pmatrix}
      1 & 0 & 0\\
      0 & 0 & 0
    \end{pmatrix} + \begin{pmatrix}
      0 & 1\\
      0 & 0\\
      0 & 0
    \end{pmatrix} \otimes \begin{pmatrix}
      0 & 0 & 0\\
      1 & 0 & 0
    \end{pmatrix} + \begin{pmatrix}
      0 & 0\\
      1 & 0\\
      0 & 0
    \end{pmatrix} \otimes \begin{pmatrix}
      0 & 1 & 0\\
      0 & 0 & 0
    \end{pmatrix}\\
     &+ \begin{pmatrix}
      0 & 0\\
      0 & 1\\
      0 & 0
    \end{pmatrix} \otimes \begin{pmatrix}
      0 & 0 & 0\\
      0 & 1 & 0
    \end{pmatrix} + \begin{pmatrix}
      0 & 0\\
      0 & 0\\
      1 & 0
    \end{pmatrix} \otimes \begin{pmatrix}
      0 & 0 & 1\\
      0 & 0 & 0
    \end{pmatrix} + \begin{pmatrix}
      0 & 0\\
      0 & 0\\
      0 & 1
    \end{pmatrix} \otimes \begin{pmatrix}
      0 & 0 & 0\\
      0 & 0 & 1
    \end{pmatrix}
    \end{split}
\end{equation*}
  \end{example}
  \begin{remark}\label{rmk}
Consid\'erons la fonction $L$ de l'ensemble de toutes les matrices de dimension finies vers l'ensemble des matrices unicolonnes . Pour une matrice $\textbf{X}=\begin{pmatrix}
  x_{11} & x_{12} & \ldots & x_{1p} \\
  x_{21} & x_{22} & \ldots & x_{2p}\\
  \ldots & \ldots & \ldots & \ldots  \\
  x_{n1} &  x_{n2} & \ldots & x_{np} \\
   \end{pmatrix}$,\\
   $L\left(\textbf{X}\right)=\begin{pmatrix}x_{11}\\
   x_{12}\\
   \vdots\\
   x_{1p}\\
   x_{21}\\
   x_{22}\\
   \vdots\\
   x_{2p}\\
   \vdots\\
   x_{n1}\\
   x_{n2}\\
   \vdots\\
   x_{np}\\
   \end{pmatrix}$.
    On peut obtenir facilement que $\textbf{U}_{n\otimes p}\cdot L\left(\textbf{X}\right)=L\left(\textbf{X}^T\right)$.
   
\end{remark}
\subsection{Expression d'un \'el\'ement d'une matrice de commutation tensorielle}
 Ici $n$ et $p$ sont des \'el\'ements quelconques de $\mathbb{N^{\star}}$, $n, p\geq 2$. Ainsi, il s'agit d'une g\'en\'eralisation de l' expression d'un \'el\'ement d'une MCT $n\otimes n$ pour un $n\;\in\;\mathbb{N^{\star}}$ \cite{Fujii01}. D'abord, \'etudions
 l'exemple ci-dessous pour conjecturer l'expression pour le cas plus g\'en\'eral. Ainsi, nous suivons la m\'ethode de \cite{Fujii01}. Ecrivons alors 
 $\textbf{U}_{3\otimes 5}$ de la fa\c con suivante:\\

$\textbf{U}_{3\otimes 5}=
                        \begin{pmatrix}
                           \begin{pmatrix}
                               1 & 0 & 0 & 0 & 0 \\
                               0 & 0 & 0 & 0 & 0 \\
                               0 & 0 & 0 & 0 & 0 \\
                             \end{pmatrix}
                           & \begin{pmatrix}
                               0 & 0 & 0 & 0 & 0 \\
                               1 & 0 & 0 & 0 & 0 \\
                               0 & 0 & 0 & 0 & 0 \\
                             \end{pmatrix} & \begin{pmatrix}
                               0 & 0 & 0 & 0 & 0 \\
                               0 & 0 & 0 & 0 & 0 \\
                               1 & 0 & 0 & 0 & 0 \\
                             \end{pmatrix} \\
                          \begin{pmatrix}
                               0 & 1 & 0 & 0 & 0 \\
                               0 & 0 & 0 & 0 & 0 \\
                               0 & 0 & 0 & 0 & 0 \\
                             \end{pmatrix} & \begin{pmatrix}
                               0 & 0 & 0 & 0 & 0 \\
                               0 & 1 & 0 & 0 & 0 \\
                               0 & 0 & 0 & 0 & 0 \\
                             \end{pmatrix} & \begin{pmatrix}
                               0 & 0 & 0 & 0 & 0 \\
                               0 & 0 & 0 & 0 & 0 \\
                               0 & 1 & 0 & 0 & 0 \\
                             \end{pmatrix} \\
                          \begin{pmatrix}
                               0 & 0 & 1 & 0 & 0 \\
                               0 & 0 & 0 & 0 & 0 \\
                               0 & 0 & 0 & 0 & 0 \\
                             \end{pmatrix} & \begin{pmatrix}
                               0 & 0 & 0 & 0 & 0 \\
                               0 & 0 & 1 & 0 & 0 \\
                               0 & 0 & 0 & 0 & 0 \\
                             \end{pmatrix} & \begin{pmatrix}
                               0 & 0 & 0 & 0 & 0 \\
                               0 & 0 & 0 & 0 & 0 \\
                               0 & 0 & 1 & 0 & 0 \\
                             \end{pmatrix} \\
                          \begin{pmatrix}
                               0 & 0 & 0 & 1 & 0 \\
                               0 & 0 & 0 & 0 & 0 \\
                               0 & 0 & 0 & 0 & 0 \\
                             \end{pmatrix} & \begin{pmatrix}
                               0 & 0 & 0 & 0 & 0 \\
                               0 & 0 & 0 & 1 & 0 \\
                               0 & 0 & 0 & 0 & 0 \\
                             \end{pmatrix} & \begin{pmatrix}
                               0 & 0 & 0 & 0 & 0 \\
                               0 & 0 & 0 & 0 & 0 \\
                               0 & 0 & 0 & 1 & 0 \\
                             \end{pmatrix} \\
                          \begin{pmatrix}
                               0 & 0 & 0 & 0 & 1 \\
                               0 & 0 & 0 & 0 & 0 \\
                               0 & 0 & 0 & 0 & 0 \\
                             \end{pmatrix} & \begin{pmatrix}
                               0 & 0 & 0 & 0 & 0 \\
                               0 & 0 & 0 & 0 & 1 \\
                               0 & 0 & 0 & 0 & 0 \\
                             \end{pmatrix} & \begin{pmatrix}
                               0 & 0 & 0 & 0 & 0 \\
                               0 & 0 & 0 & 0 & 0 \\
                               0 & 0 & 0 & 0 & 1 \\
                             \end{pmatrix} \\
                        \end{pmatrix}
                      $\\

Consid\'erons les matrices rectangles $I_{n\times
p}=\left(\delta_{j}^{i}\right)_{1\leq i\leq n, 1\leq j
\leq p}$, $I_{p\times
n}=\left(\delta_{j}^{i}\right)_{1\leq i\leq p, 1\leq j
\leq n}$, o\`u $\delta_{j}^{i}$ est le symbole de Kronecker. La matrice $np\times np$
\begin{center}
$I_{p\times n}\otimes I_{n\times
p}=\left(\delta_{j_{1}j_{2}}^{i_{1}i_{2}}\right)\;
=\left(\delta_{j_{1}}^{i_{1}}\delta_{j_{2}}^{i_{2}}\right)$
\end{center}
o\`u,\\
$i_{1}i_{2}=11, 12, \ldots, 1n, 21, 22, \ldots, 2n, \ldots, p1,
 p2, \ldots, pn$\\
 indices de lignes,\\
$j_{1}j_{2}=11, 12, \ldots, 1p, 21, 22, \ldots, 2p, \ldots, n1,
 n2, \ldots, np$\\
 indices de colonnes, \\
 nous sugg\`ere la proposition suivante.

\begin{proposition}
\begin{equation}\label{e61}
\textbf{U}_{n\otimes
p}=\left(U_{j_{1}j_{2}}^{i_{1}i_{2}}\right)=\left(\delta_{j_{2}j_{1}}^{i_{1}i_{2}}\right)
=\left(\delta_{j_{2}}^{i_{1}}\delta_{j_{1}}^{i_{2}}\right)
\end{equation}
o\`u,\\
$i_{1}i_{2}=11, 12, \ldots, 1n, 21, 22, \ldots, 2n, \ldots, p1,
 p2, \ldots, pn$\\
 indices de lignes,\\
$j_{1}j_{2}=11, 12, \ldots, 1p, 21, 22, \ldots, 2p, \ldots, n1,
 n2, \ldots, np$\\
 indices de colonnes, \\
\end{proposition}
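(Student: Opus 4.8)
The plan is to read the entry-wise formula straight off the defining property (\ref{eq6}), which already prescribes the action of $\textbf{U}_{n\otimes p}$ on every Kronecker product of column vectors. First I would specialise (\ref{eq6}) to standard basis vectors: taking $\textbf{a}=\textbf{e}_k\in\mathbb{C}^{n\times1}$ (the vector with a single $1$ in position $k$) and $\textbf{b}=\textbf{e}_\ell\in\mathbb{C}^{p\times1}$, the product $\textbf{a}\otimes\textbf{b}$ is the $np\times1$ basis vector whose lone nonzero entry sits at the flattened column position labelled by the pair $(j_1,j_2)=(k,\ell)$, while $\textbf{b}\otimes\textbf{a}$ is the $pn\times1$ basis vector whose nonzero entry sits at the row position $(i_1,i_2)=(\ell,k)$. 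Since $\textbf{U}_{n\otimes p}\cdot(\textbf{e}_k\otimes\textbf{e}_\ell)=\textbf{e}_\ell\otimes\textbf{e}_k$, the column of $\textbf{U}_{n\otimes p}$ indexed by $(k,\ell)$ is exactly the basis vector supported at the row $(\ell,k)$. Reading off coefficients, $U_{j_1j_2}^{i_1i_2}$ equals $1$ precisely when $(i_1,i_2)=(\ell,k)$ and $(j_1,j_2)=(k,\ell)$, i.e. when $i_1=\ell=j_2$ and $i_2=k=j_1$, and vanishes otherwise; this is the asserted identity $U_{j_1j_2}^{i_1i_2}=\delta_{j_2}^{i_1}\delta_{j_1}^{i_2}$.

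Alternatively---and perhaps more in keeping with the earlier development---I would start from Proposition \ref{prop23}, which gives $\textbf{U}_{n\otimes p}=\sum_{(i,j)}^{(p,n)}\textbf{E}_{p\times n}^{(i,j)}\otimes\textbf{E}_{n\times p}^{(j,i)}$. Writing the entries of the elementary matrices as $\left(\textbf{E}_{p\times n}^{(i,j)}\right)_{i_1j_1}=\delta_i^{i_1}\delta_j^{j_1}$ and $\left(\textbf{E}_{n\times p}^{(j,i)}\right)_{i_2j_2}=\delta_j^{i_2}\delta_i^{j_2}$, and using the rule that the $\big((i_1,i_2),(j_1,j_2)\big)$ entry of a Kronecker product is the product of the corresponding entries of the two factors, the summand becomes $\delta_i^{i_1}\delta_j^{j_1}\delta_j^{i_2}\delta_i^{j_2}$. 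Summing over $i=1,\ldots,p$ collapses $\delta_i^{i_1}\delta_i^{j_2}$ to $\delta_{j_2}^{i_1}$, and summing over $j=1,\ldots,n$ collapses $\delta_j^{j_1}\delta_j^{i_2}$ to $\delta_{j_1}^{i_2}$, yielding $U_{j_1j_2}^{i_1i_2}=\delta_{j_2}^{i_1}\delta_{j_1}^{i_2}$ as claimed.

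The one genuinely delicate point in either route is bookkeeping the flattening conventions: the rows of $\textbf{U}_{n\otimes p}$ are enumerated as $11,12,\ldots,1n,21,\ldots,pn$, so $i_1$ runs over $\{1,\ldots,p\}$ and $i_2$ over $\{1,\ldots,n\}$ (matching a $p\times n$ outer factor), while the columns are enumerated as $11,\ldots,1p,\ldots,n1,\ldots,np$, so $j_1$ runs over $\{1,\ldots,n\}$ and $j_2$ over $\{1,\ldots,p\}$. Once these ranges are pinned against the block structure of the Kronecker product, the appearance of the \emph{crossed} deltas $\delta_{j_2}^{i_1}\delta_{j_1}^{i_2}$ rather than the uncrossed $\delta_{j_1}^{i_1}\delta_{j_2}^{i_2}$ of the motivating identity $I_{p\times n}\otimes I_{n\times p}=\left(\delta_{j_1}^{i_1}\delta_{j_2}^{i_2}\right)$ falls out automatically, and the remainder of the argument reduces to the routine delta contractions above.
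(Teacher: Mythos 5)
Your proposal is correct, and your first route is essentially the paper's own proof: the paper takes arbitrary $\textbf{a}=\left(a^{j_1}\right)\in\mathbb{C}^{n\times1}$, $\textbf{b}=\left(b^{j_2}\right)\in\mathbb{C}^{p\times1}$ and checks by the same crossed-delta contraction, $\delta_{j_2}^{i_1}\delta_{j_1}^{i_2}a^{j_1}b^{j_2}=b^{i_1}a^{i_2}$, that the candidate matrix sends $\textbf{a}\otimes\textbf{b}$ to $\textbf{b}\otimes\textbf{a}$; your basis-vector computation is exactly the special case $\textbf{a}=\textbf{e}_k$, $\textbf{b}=\textbf{e}_\ell$, and the paper's computation is its linear extension. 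The only (harmless) difference of direction is that the paper verifies that the matrix with the stated entries satisfies the defining property, leaving uniqueness implicit in the fact that the vectors $\textbf{a}\otimes\textbf{b}$ span $\mathbb{C}^{np\times1}$, whereas you read each column off from the defining property, which yields both existence and uniqueness at once. Your second route, via Proposition \ref{prop23} and entrywise contraction of $\sum_{(i,j)}\textbf{E}_{p\times n}^{(i,j)}\otimes\textbf{E}_{n\times p}^{(j,i)}$, is a genuinely different derivation and your delta bookkeeping there is sound (the sums over $i$ and $j$ do collapse to $\delta_{j_2}^{i_1}$ and $\delta_{j_1}^{i_2}$ with the correct index ranges $i_1,j_2\in\{1,\ldots,p\}$ and $i_2,j_1\in\{1,\ldots,n\}$); note, however, that it is not logically more economical, since Proposition \ref{prop23} was itself proved by the same kind of verification against the defining property --- what it buys is an explicit consistency check between the elementary-matrix decomposition and the entrywise formula.
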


\begin{proof} Soient $\textbf{a}=\left(a^{j_{1}}\right)_{1\leq j_{1}\leq
n}\in\mathbb{C}^{n\times1}$,
$\textbf{b}=\left(b^{j_{2}}\right)_{1\leq j_{2}\leq p}\in\mathbb{C}^{p\times1}$\\
\begin{multline*}
\left(\textbf{a}\otimes \textbf{b}\right)^{i_{1}i_{2}}\longrightarrow
\left(\textbf{U}_{n\otimes p}\cdot\left(\textbf{a}\otimes
\textbf{b}\right)\right)^{i_{1}i_{2}}=\delta_{j_{2}}^{i_{1}}\delta_{j_{1}}^{i_{2}}a^{j_{1}}b^{j_{2}}
=\delta_{j_{2}}^{i_{1}}b^{j_{2}}\delta_{j_{1}}^{i_{2}}a^{j_{1}}
=b^{i_{1}}a^{i_{2}}\\
 =\left(\textbf{b}\otimes
\textbf{a}\right)^{i_{1}i_{2}}
\end{multline*}
 \end{proof}
\section{MATRICES DE PERMUTATION TENSORIELLE}

Dans cette section nous suivons l'id\'ee dans \cite{Raoelina86}, en alg\`ebre lin\'eaire et multilin\'eaire, en \'etablissant d'abord les th\'eor\`emes pour les op\'erateurs, de fa\c con intrins\`eque, c'est-\`a-dire ind\'ependamment des bases, puis les th\'eor\`emes correspondant pour les matrices. Ainsi, nous allons d'abord parler d'op\'erateurs de permutation tensorielle (OPT). Nous utiliserons aussi ses notations pour les vecteurs et covecteurs, en surlignant les vecteurs, $\overline{x}$, et en soulignant les covecteurs, $\underline{\varphi}$.

\subsection{Op\'erateurs de permutation tensorielle}
\begin{definition}
Consid\'erons les  $\mathbb{C}$-espaces vectoriels de dimensions finies $\mathcal{E}_1,
\mathcal{E}_2, \ldots, \mathcal{E}_k$ et une permutation $\sigma$ sur
$\left\{1, 2,\ldots , k\right\}$. L'op\'erateur lin\'eaire $U_{\sigma}$ 
   de $\mathcal{E}_1\otimes
\mathcal{E}_2\otimes\ldots \otimes \mathcal{E}_k$ \`a
$\mathcal{E}_{\sigma(1)}\otimes
\mathcal{E}_{\sigma(2)}\otimes\ldots\otimes
\mathcal{E}_{\sigma(k)}$, $U_{\sigma}\in
\mathcal{L}(\mathcal{E}_1\otimes
\mathcal{E}_2\otimes\ldots\otimes \mathcal{E}_k,
\mathcal{E}_{\sigma(1)}\otimes
\mathcal{E}_{\sigma(2)}\otimes\ldots\otimes
\mathcal{E}_{\sigma(k)} )$, d\'efini par
\begin{center}
$U_{\sigma}(\overline{x_1}\otimes
\ldots\otimes\overline{x_k})=
\overline{x_{\sigma(1)}}\otimes\ldots
\otimes\overline{x_{\sigma(k)}}$
\end{center}
pour tous $\overline{x_1}$\;$\in$\;$\mathcal{E}_1$,
$\overline{x_2}$\;$\in$\;$\mathcal{E}_2$, \ldots,
$\overline{x_k}$\;$\in$\;$\mathcal{E}_k$ est appel\'e un $\sigma$-OPT.\\
 Si $n=2$, alors nous disons que $U_{\sigma}$  est un op\'erateur de commutation tensorielle.
\end{definition}
\begin{proposition}
Si $U_{\sigma}$ est un $\sigma$-OPT, alors son transpos\'e $U_{\sigma}^T$ est un ${\sigma}^{-1}$-OPT,  $U_{\sigma^{-1}}=U_{\sigma}^{-1}$.\\
\end{proposition}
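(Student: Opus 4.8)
Le plan est d'\'etablir les deux assertions sur les tenseurs (resp.\ covecteurs) \emph{d\'ecomposables}, puis de conclure par lin\'earit\'e, ces \'el\'ements engendrant les espaces concern\'es. Tout repose sur la substitution $j=\sigma(i)$, i.e.\ $i=\sigma^{-1}(j)$, dans les indices.

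\textbf{(Transpos\'e.)} Par d\'efinition de l'application duale, pour un covecteur d\'ecomposable $\underline{\varphi_1}\otimes\ldots\otimes\underline{\varphi_k}$ avec $\underline{\varphi_i}\in\mathcal{E}_{\sigma(i)}^{*}$, j'\'ecrirais
\[
U_{\sigma}^{T}(\underline{\varphi_1}\otimes\ldots\otimes\underline{\varphi_k})(\overline{x_1}\otimes\ldots\otimes\overline{x_k}) = (\underline{\varphi_1}\otimes\ldots\otimes\underline{\varphi_k})\bigl(U_{\sigma}(\overline{x_1}\otimes\ldots\otimes\overline{x_k})\bigr).
\]
En rempla\c cant $U_{\sigma}(\overline{x_1}\otimes\ldots\otimes\overline{x_k})$ par $\overline{x_{\sigma(1)}}\otimes\ldots\otimes\overline{x_{\sigma(k)}}$ et en appliquant l'action d'un produit tensoriel de covecteurs, le membre de droite vaut $\prod_{i=1}^{k}\underline{\varphi_i}(\overline{x_{\sigma(i)}})$. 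L'\'etape cl\'e est le changement d'indice $j=\sigma(i)$, qui transforme ce produit en $\prod_{j=1}^{k}\underline{\varphi_{\sigma^{-1}(j)}}(\overline{x_j})$, d'o\`u
\[
U_{\sigma}^{T}(\underline{\varphi_1}\otimes\ldots\otimes\underline{\varphi_k}) = \underline{\varphi_{\sigma^{-1}(1)}}\otimes\ldots\otimes\underline{\varphi_{\sigma^{-1}(k)}}.
\]
En posant $\mathcal{F}_i:=\mathcal{E}_{\sigma(i)}^{*}$, ceci est exactement la propri\'et\'e caract\'eristique d'un $\sigma^{-1}$-OPT sur $\mathcal{F}_1\otimes\ldots\otimes\mathcal{F}_k$, ce qui prouve la premi\`ere partie.

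\textbf{($U_{\sigma^{-1}}=U_{\sigma}^{-1}$.)} Je calculerais la compos\'ee $U_{\sigma^{-1}}\circ U_{\sigma}$ sur $\overline{x_1}\otimes\ldots\otimes\overline{x_k}$. Comme $U_{\sigma}(\overline{x_1}\otimes\ldots\otimes\overline{x_k})=\overline{x_{\sigma(1)}}\otimes\ldots\otimes\overline{x_{\sigma(k)}}$, en posant $\overline{y_i}=\overline{x_{\sigma(i)}}$ on obtient $U_{\sigma^{-1}}(\overline{y_1}\otimes\ldots\otimes\overline{y_k})=\overline{y_{\sigma^{-1}(1)}}\otimes\ldots\otimes\overline{y_{\sigma^{-1}(k)}}$ avec $\overline{y_{\sigma^{-1}(i)}}=\overline{x_{\sigma(\sigma^{-1}(i))}}=\overline{x_i}$. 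La compos\'ee est donc l'identit\'e, et sym\'etriquement pour $U_{\sigma}\circ U_{\sigma^{-1}}$, d'o\`u $U_{\sigma^{-1}}=U_{\sigma}^{-1}$.

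Le principal obstacle n'est pas conceptuel mais tient \`a la gestion des indices: il faut distinguer l'indice de position dans le produit tensoriel de l'indice de l'espace facteur, et ne jamais confondre $\sigma$ et $\sigma^{-1}$ lors des substitutions, la simplification $\sigma\circ\sigma^{-1}=\mathrm{id}$ \'etant le ressort de chaque calcul. Il conviendrait aussi d'expliciter au pr\'ealable l'identification $(\mathcal{E}_{\sigma(1)}\otimes\ldots\otimes\mathcal{E}_{\sigma(k)})^{*}\cong\mathcal{E}_{\sigma(1)}^{*}\otimes\ldots\otimes\mathcal{E}_{\sigma(k)}^{*}$ afin que l'assertion \guillemotleft\ $U_{\sigma}^{T}$ est un $\sigma^{-1}$-OPT \guillemotright\ ait un sens pr\'ecis.
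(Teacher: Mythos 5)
Your proof is correct and, for the transpose assertion, follows essentially the same route as the paper: apply the definition of the transpose operator to a decomposable covector, evaluate against a decomposable vector, replace $U_{\sigma}(\overline{x_1}\otimes\cdots\otimes\overline{x_k})$ by $\overline{x_{\sigma(1)}}\otimes\cdots\otimes\overline{x_{\sigma(k)}}$, and reorder the resulting product of scalars. The only cosmetic difference is the labelling: the paper names the input covectors $\underline{\varphi^{\sigma(i)}}$ so that its conclusion reads $U_{\sigma}^{T}\cdot(\underline{\varphi^{\sigma(1)}}\otimes\cdots\otimes\underline{\varphi^{\sigma(k)}})=\underline{\varphi^{1}}\otimes\cdots\otimes\underline{\varphi^{k}}$, whereas you name them $\underline{\varphi_i}$ and perform the substitution $j=\sigma(i)$ to land on $\underline{\varphi_{\sigma^{-1}(1)}}\otimes\cdots\otimes\underline{\varphi_{\sigma^{-1}(k)}}$; the two formulations are equivalent, yours matching the defining property of a $\sigma^{-1}$-OPT more directly. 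Where you genuinely go beyond the paper is the second assertion: the paper's proof stops after the transpose computation and never establishes $U_{\sigma^{-1}}=U_{\sigma}^{-1}$ at all, while you verify $U_{\sigma^{-1}}\circ U_{\sigma}=\mathrm{id}$ (and symmetrically) on decomposable tensors via the cancellation $\sigma(\sigma^{-1}(i))=i$, then conclude by linearity. Your two explicit preliminary remarks --- that decomposable tensors span the spaces involved, so identities checked on them extend by linearity, and that the identification $(\mathcal{E}_{\sigma(1)}\otimes\cdots\otimes\mathcal{E}_{\sigma(k)})^{*}\cong\mathcal{E}_{\sigma(1)}^{*}\otimes\cdots\otimes\mathcal{E}_{\sigma(k)}^{*}$ (valid here since the spaces are of dimension finie) is needed for the claim about $U_{\sigma}^{T}$ to even make sense --- are both left implicit in the paper, and making them explicit is an improvement in rigor rather than a detour.
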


\begin{proof} Consid\'erons les $\mathbb{C}$-espaces vectoriels
$\mathcal{E}_1$, $\mathcal{E}_2$, \ldots, $\mathcal{E}_k$ de dimensions finies et ${\mathcal{E}_1}^{\star},
{\mathcal{E}_2}^{\star},\ldots,
{\mathcal{E}_k}^{\star}$ sont leurs espaces duaux. $U_{\sigma}\in
\mathcal{L}(\mathcal{E}_1\;\otimes\;
\mathcal{E}_2\;\otimes\;\ldots\;\otimes \mathcal{E}_k,
\mathcal{E}_{\sigma(1)}\;\otimes\;
\mathcal{E}_{\sigma(2)}\;\otimes\;\ldots\;\otimes\;
\mathcal{E}_{\sigma(k)})$ $\sigma$-OPT.
Alors ${U_{\sigma}}^T\;\in\; \mathcal{L}(
{\mathcal{E}_{\sigma(1)}}^{\star}\;\otimes\;
{\mathcal{E}_{\sigma(2)}}^{\star}\;\otimes\;\ldots\;\otimes\;
{\mathcal{E}_{\sigma(k)}}^{\star},
{\mathcal{E}_1}^{\star}\;\otimes\;
{\mathcal{E}_2}^{\star}\;\otimes\;\ldots\;\otimes
{\mathcal{E}_k}^{\star})$.\\
Soient
$\underline{\varphi^{\sigma(1)}}$\;$\in\;$${\mathcal{E}_{\sigma(1)}}^{\star}$,
 $\underline{\varphi^{\sigma(2)}}$\;$\in\;$${\mathcal{E}_{\sigma(2)}}^{\star}$,
 \ldots$\underline{\varphi^{\sigma(k)}}$\;$\in\;$${\mathcal{E}_{\sigma(k)}}^{\star}$,
 $\overline{x_1}$\;$\in$\;$\mathcal{E}_1$,
$\overline{x_2}$\;$\in$\;$\mathcal{E}_2$, \ldots,
$\overline{x_k}$\;$\in$\;$\mathcal{E}_k$.\\
${U_{\sigma}}^T\cdot\left(
\underline{\varphi^{\sigma(1)}}\;\otimes\;\underline{\varphi^{\sigma(2)}}
\;\otimes\ldots\otimes\;\underline{\varphi^{\sigma(k)}}
\right)$($\overline{x_1}$\;$\otimes$\;$\overline{x_2}$\;
$\otimes$\;\ldots\;$\otimes$\;$\overline{x_k}$)\;\\

=\;$\underline{\varphi^{\sigma(1)}}\;\otimes\;\underline{\varphi^{\sigma(2)}}
\;\otimes\ldots\otimes\;\underline{\varphi^{\sigma(k)}}$\;[$U_{\sigma}$($\overline{x_1}$\;$\otimes$\;$\overline{x_2}$\;
$\otimes$\;\ldots\;$\otimes$\;$\overline{x_k}$)]\\
(par d\'efinition de l'op\'erateur transpos\'e \cite{Raoelina86})\\
=\;$\underline{\varphi^{\sigma(1)}}\;\otimes\;\underline{\varphi^{\sigma(2)}}
\;\otimes\ldots\otimes\;\underline{\varphi^{\sigma(k)}}$($\overline{x_{\sigma(1)}}$\;$\otimes$\;$\overline{x_{\sigma(2)}}$\;$\otimes$\;\ldots
\;$\otimes$\;$\overline{x_{\sigma(k)}}$)\\

=\;$\underline{\varphi^{\sigma(1)}}(\overline{x_{\sigma(1)}})\otimes\underline{\varphi^{\sigma(2)}}(\overline{x_{\sigma(2)}})
\otimes\ldots\otimes\underline{\varphi^{\sigma(k)}}(\overline{x_{\sigma(k)}})$
=\;$\underline{\varphi^{\sigma(1)}}(\overline{x_{\sigma(1)}})\underline{\varphi^{\sigma(2)}}(\overline{x_{\sigma(2)}})
\ldots\underline{\varphi^{\sigma(k)}}(\overline{x_{\sigma(k)}})$\\

(puisque les $\underline{\varphi^{\sigma(i)}}(\overline{x_{\sigma(i)}})$
sont des \'el\'ements de $\mathbb{C}$)\\

=\;$\underline\varphi^1(\overline{x_1})\otimes
\underline\varphi^2(\overline{x_2})
\otimes \ldots\otimes \underline\varphi^k(\overline{x_k})$\\

=\;$(\underline\varphi^1\otimes \underline\varphi^2\otimes
\ldots\otimes
\underline\varphi^k)(\overline{x_1}\;\otimes\;\overline{x_2}\;
\otimes\;\ldots\;\otimes\;\overline{x_k})$\\
Nous avons \\
${U_{\sigma}}^T\left(
\underline{\varphi^{\sigma(1)}}\;\otimes\;\underline{\varphi^{\sigma(2)}}
\;\otimes\ldots\otimes\;\underline{\varphi^{\sigma(k)}}
\right)$($\overline{x_1}$\;$\otimes$\;$\overline{x_2}$\;
$\otimes$\;\ldots\;$\otimes$\;$\overline{x_k}$)\;\\

=\;$(\underline\varphi^1\otimes \underline\varphi^2\otimes
\ldots\otimes
\underline\varphi^k)(\overline{x_1}\;\otimes\;\overline{x_2}\;
\otimes\;\ldots\;\otimes\;\overline{x_k})$\\
pour tous $\overline{x_1}$\;$\in$\;$\mathcal{E}_1$,
$\overline{x_2}$\;$\in$\;$\mathcal{E}_2$, \ldots,
$\overline{x_k}$\;$\in$\;$\mathcal{E}_k$.\\
D'o\`u,\\
${U_{\sigma}}^T\cdot\left(
\underline{\varphi^{\sigma(1)}}\;\otimes\;\underline{\varphi^{\sigma(2)}}
\;\otimes\ldots\otimes\;\underline{\varphi^{\sigma(k)}}
\right)=\underline\varphi^1\otimes \underline\varphi^2\otimes
\ldots\otimes \underline\varphi^k$
\end{proof}
\begin{proposition}\label{tm32}
Consid\'erons les $\mathbb{C}$-espaces vectoriels de dimensions finies $\mathcal{E}_1$,
$\mathcal{E}_2$, \ldots, $\mathcal{E}_k$, $\mathcal{F}_1$,
$\mathcal{F}_2$, \ldots, $\mathcal{F}_k$, une permutation $\sigma$ sur
\{${1, 2,\ldots , k}$\} et un $\sigma$-OPT $U_{\sigma}\in \mathcal{L}\left(\mathcal{F}_1\otimes\ldots\otimes \mathcal{F}_k,
\mathcal{F}_{\sigma(1)}\otimes\ldots\otimes\mathcal{F}_{\sigma(k)}\right)$.
Alors, pour tous $\phi_1\in \mathcal{L}\left(\mathcal{E}_1,\mathcal{F}_1\right)$,
$\phi_2\in \mathcal{L}\left(\mathcal{E}_2,\mathcal{F}_2\right)$, \ldots,
$\phi_k\in \mathcal{L}\left(\mathcal{E}_k,\mathcal{F}_k\right)$
\begin{equation*}
U_{\sigma}\cdot\left(\phi_1 \otimes  \ldots \otimes \phi_k\right)
=\left(\phi_{\sigma(1)}\otimes\ldots\otimes
\phi_{\sigma(k)}\right)\cdot V_{\sigma}
\end{equation*}
o\`u $V_{\sigma}\in \mathcal{L}(\mathcal{E}_1\otimes\mathcal{E}_2\otimes\ldots\otimes \mathcal{E}_k,
\mathcal{E}_{\sigma(1)}\otimes\mathcal{E}_{\sigma(2)}\otimes\ldots\otimes\mathcal{E}_{\sigma(k)})$ est un $\sigma$-OPT.  
\end{proposition}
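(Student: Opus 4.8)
Le plan est de construire d'abord le candidat $V_{\sigma}$, puis de v\'erifier l'identit\'e sur les tenseurs d\'ecomposables et de conclure par lin\'earit\'e. Je d\'efinirais $V_{\sigma}\in \mathcal{L}(\mathcal{E}_1\otimes\ldots\otimes \mathcal{E}_k, \mathcal{E}_{\sigma(1)}\otimes\ldots\otimes\mathcal{E}_{\sigma(k)})$ comme le $\sigma$-OPT associ\'e aux espaces $\mathcal{E}_i$, c'est-\`a-dire l'unique op\'erateur lin\'eaire v\'erifiant $V_{\sigma}(\overline{x_1}\otimes\ldots\otimes\overline{x_k})=\overline{x_{\sigma(1)}}\otimes\ldots\otimes\overline{x_{\sigma(k)}}$ pour tous $\overline{x_i}\in\mathcal{E}_i$. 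Son existence est garantie par la d\'efinition m\^eme d'un $\sigma$-OPT, et c'est pr\'ecis\'ement ce $V_{\sigma}$ qui figurera dans l'\'egalit\'e. Les types se recollent correctement : $V_{\sigma}$ envoie $\mathcal{E}_1\otimes\ldots\otimes \mathcal{E}_k$ sur $\mathcal{E}_{\sigma(1)}\otimes\ldots\otimes\mathcal{E}_{\sigma(k)}$, sur lequel agit $\phi_{\sigma(1)}\otimes\ldots\otimes\phi_{\sigma(k)}$ \`a valeurs dans $\mathcal{F}_{\sigma(1)}\otimes\ldots\otimes\mathcal{F}_{\sigma(k)}$, qui est aussi l'espace d'arriv\'ee de $U_{\sigma}$.

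Les deux membres de l'\'egalit\'e \'etant des op\'erateurs lin\'eaires et les tenseurs d\'ecomposables $\overline{x_1}\otimes\ldots\otimes\overline{x_k}$ engendrant $\mathcal{E}_1\otimes\ldots\otimes \mathcal{E}_k$, il suffit de tester l'identit\'e sur de tels tenseurs. Pour le membre de gauche, je calculerais d'abord $(\phi_1\otimes\ldots\otimes\phi_k)(\overline{x_1}\otimes\ldots\otimes\overline{x_k})=\phi_1(\overline{x_1})\otimes\ldots\otimes\phi_k(\overline{x_k})$, qui est un tenseur d\'ecomposable des $\mathcal{F}_i$ ; en posant $\overline{y_i}=\phi_i(\overline{x_i})\in\mathcal{F}_i$ et en appliquant la d\'efinition de $U_{\sigma}$, j'obtiendrais $\phi_{\sigma(1)}(\overline{x_{\sigma(1)}})\otimes\ldots\otimes\phi_{\sigma(k)}(\overline{x_{\sigma(k)}})$. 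Pour le membre de droite, j'appliquerais d'abord $V_{\sigma}$, ce qui donne $\overline{x_{\sigma(1)}}\otimes\ldots\otimes\overline{x_{\sigma(k)}}$, puis $\phi_{\sigma(1)}\otimes\ldots\otimes\phi_{\sigma(k)}$, ce qui redonne exactement $\phi_{\sigma(1)}(\overline{x_{\sigma(1)}})\otimes\ldots\otimes\phi_{\sigma(k)}(\overline{x_{\sigma(k)}})$. Les deux membres co\"incident donc sur les tenseurs d\'ecomposables, d'o\`u l'\'egalit\'e sur tout l'espace par lin\'earit\'e.

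La difficult\'e principale n'est pas calculatoire mais tient \`a la comptabilit\'e des indices : il faut suivre avec soin quel facteur $\phi_{\sigma(i)}$ agit sur quel facteur $\overline{x_{\sigma(i)}}$, et contr\^oler \`a chaque \'etape la coh\'erence des espaces de d\'epart et d'arriv\'ee. Le point d\'elicat \`a \'enoncer clairement est que l'action de $U_{\sigma}$ sur un tenseur d\'ecomposable des $\mathcal{F}_i$ permute les facteurs \emph{sans toucher aux $\phi_i$}, de sorte que l'indice port\'e par $\phi$ et celui port\'e par $\overline{x}$ sont permut\'es simultan\'ement ; c'est exactement ce qui rend les deux membres identiques. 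Aucune hypoth\`ese suppl\'ementaire sur les $\phi_i$ (ni inversibilit\'e, ni unitarit\'e) n'intervient, seule la lin\'earit\'e est utilis\'ee.
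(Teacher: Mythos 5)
Votre preuve est correcte et suit essentiellement la m\^eme d\'emarche que celle du m\'emoire : v\'erification des espaces de d\'epart et d'arriv\'ee des deux membres, puis calcul sur les tenseurs d\'ecomposables $\overline{x_1}\otimes\ldots\otimes\overline{x_k}$ en appliquant la propri\'et\'e d\'efinissante des OPT $U_{\sigma}$ et $V_{\sigma}$, les deux membres donnant $\phi_{\sigma(1)}\left(\overline{x_{\sigma(1)}}\right)\otimes\ldots\otimes\phi_{\sigma(k)}\left(\overline{x_{\sigma(k)}}\right)$. Votre seule addition est d'expliciter l'argument de lin\'earit\'e (les tenseurs d\'ecomposables engendrent le produit tensoriel), que le m\'emoire laisse implicite ; c'est une pr\'ecision bienvenue mais pas une autre approche.
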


\begin{proof}
$\phi_1\otimes\ldots\otimes\phi_k
\in\mathcal{L}\left(\mathcal{E}_1\otimes
\ldots\otimes
\mathcal{E}_k,\mathcal{F}_1\otimes\ldots\otimes \mathcal{F}_k,
\right)$, donc $U_{\sigma}\cdot(\phi_1\otimes\phi_2\otimes\ldots\otimes\phi_k)
\in\mathcal{L}\left(\mathcal{E}_1\otimes\ldots\otimes \mathcal{E}_k,
\mathcal{F}_{\sigma(1)}\otimes
\ldots\otimes\mathcal{F}_{\sigma(k)}\right)$.
$\left(\phi_{\sigma(1)}\otimes\phi_{\sigma(2)}\otimes\ldots\otimes\phi_{\sigma(k)}\right)\cdot V_{\sigma}\\
\in\mathcal{L}\left(\mathcal{E}_1\otimes\ldots\otimes \mathcal{E}_k,
\mathcal{F}_{\sigma(1)}\otimes\ldots\otimes\mathcal{F}_{\sigma(k)}\right)$\\
Si $\overline{x_1}\in\mathcal{E}_1$, 
$\overline{x_2}\in\mathcal{E}_2$, \ldots,
$\overline{x_k}\in\mathcal{E}_k$, alors
$U_{\sigma}\cdot\left(\phi_1\otimes\ldots\otimes\phi_k\right)
\left(\overline{x_1}\otimes\ldots\otimes\overline{x_k}\right)
=U_{\sigma}\left[\phi_1\left(\overline{x_1}\right)\otimes\phi_2\left(\overline{x_2}\right)\otimes\ldots\otimes\phi_k\left(\overline{x_k}\right)\right]
=\phi_{\sigma(1)}\left(\overline{x_{\sigma(1)}}\right)\otimes\ldots\otimes\phi_{\sigma(k)}\left(\overline{x_{\sigma(k)}}\right)$\\
(puisque $U_{\sigma}$ est un OPT)\\
$=\left(\phi_{\sigma(1)}\otimes\ldots\otimes\phi_{\sigma(k)}\right)\left(\overline{x_{\sigma(1)}}\otimes\ldots
\otimes\overline{x_{\sigma(k)}}\right)=\left(\phi_{\sigma(1)}\otimes\ldots\otimes\phi_{\sigma(k)}\right)\cdot V_{\sigma}\left(\overline{x_1}\otimes\ldots\otimes\overline{x_k}\right)$.
\end{proof}
\begin{definition} Consid\'erons les $\mathbb{C}$-espaces vectoriels
$\mathcal{E}_1$, $\mathcal{E}_2$, \ldots, $\mathcal{E}_k$ de
dimensions $n_1$, $n_2$, \ldots, $n_k$ et $\sigma$-OPT
$U_{\sigma}$\;$\in\;$ $\mathcal{L}$($\mathcal{E}_1$\;$\otimes$\;
$\mathcal{E}_2$\;$\otimes$\;\ldots\;$\otimes$ $\mathcal{E}_k$,
$\mathcal{E}_{\sigma(1)}$\;$\otimes$\;
$\mathcal{E}_{\sigma(2)}$\;$\otimes$\;\ldots\;$\otimes$\;
$\mathcal{E}_{\sigma(k)}$).
Soit\\
$\mathcal{B}_1=\left(\overline{e_{11}},\overline{e_{12}},\ldots,\overline{e_{1n_1}}\right)$ une base de
$\mathcal{E}_1$;\\
$\mathcal{B}_2=\left(\overline{e_{21}},\overline{e_{22}},\ldots,\overline{e_{2n_2}}\right)$ une base de
$\mathcal{E}_2$;\\\ldots\\
$\mathcal{B}_k=\left(\overline{e_{k1}},\overline{e_{k2}},\ldots,\overline{e_{kn_k}}\right)$ une base de
$\mathcal{E}_k$.\\
$\textbf{U}_{\sigma}$ la matrice de $U_{\sigma}$ par rapport au couple de bases\\
\noindent $\left(\mathcal{B}_1\otimes\mathcal{B}_2\otimes \ldots \otimes \mathcal{B}_k, \mathcal{B}_{\sigma(1)}
\otimes\mathcal{B}_{\sigma(2)}\otimes \ldots\otimes\mathcal{B}_{\sigma(k)}\right)$. La matrice carr\'ee $\textbf{U}_{\sigma}$ de dimension $n_1\times n_2\times\ldots \times n_k$ est ind\'ependante des bases $\mathcal{B}_1$, $\mathcal{B}_2$,\ldots , $\mathcal{B}_k$.
Nous appelons cette matrice la $\sigma$-MPT $n_1\otimes
n_2\otimes\ldots \otimes n_k$.
\end{definition}
\begin{proposition}\label{prop7}
Soient $\textbf{U}_{\sigma}$ la $\sigma$-MPT $n_1\otimes
n_2\otimes\ldots \otimes n_k$ et $\textbf{V}_{\sigma}$ la $\sigma$-MPT $m_1\otimes
m_2\otimes\ldots \otimes m_k$. Alors, pour toutes matrices
$\textbf{A}_1$, $\textbf{A}_2$,\ldots, $\textbf{A}_k$, de dimensions respectives, $m_1\times n_1$,
$m_2\times n_2$, \ldots, $m_k\times n_k$
\begin{equation}\label{eq33}
\textbf{U}_{\sigma}\cdot\left(\textbf{A}_1\otimes 
\ldots\otimes\textbf{A}_k\right)\cdot\textbf{V}_{\sigma}^T=\textbf{A}_{\sigma(1)}\otimes
 \ldots\otimes
\textbf{A}_{\sigma(k)}
\end{equation}
\end{proposition}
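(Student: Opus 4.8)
Le plan est de d\'eduire l'\'egalit\'e matricielle $(\ref{eq33})$ directement de l'\'egalit\'e entre op\'erateurs de la Proposition \ref{tm32}, en passant des op\'erateurs \`a leurs matrices relativement \`a des bases bien choisies. L'id\'ee centrale repose sur deux propri\'et\'es fonctorielles de la repr\'esentation matricielle : la matrice d'une composition d'applications lin\'eaires est le produit des matrices, et la matrice d'un produit tensoriel d'op\'erateurs, relativement aux bases produit tensoriel, est le produit de Kronecker des matrices.

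D'abord je choisirais, pour chaque $i$, une base $\mathcal{B}_i$ de $\mathcal{E}_i$ et une base $\mathcal{C}_i$ de $\mathcal{F}_i$, et je noterais $\textbf{A}_i$ la matrice de $\phi_i\in\mathcal{L}(\mathcal{E}_i,\mathcal{F}_i)$ par rapport au couple $(\mathcal{B}_i,\mathcal{C}_i)$. Par d\'efinition, $\textbf{U}_\sigma$ est la matrice de $U_\sigma$ relativement aux bases produit tensoriel des $\mathcal{C}_i$, et $\textbf{V}_\sigma$ celle de $V_\sigma$ relativement aux bases produit tensoriel des $\mathcal{E}_i$. En prenant les matrices des deux membres de l'identit\'e d'op\'erateurs de la Proposition \ref{tm32}, et en utilisant les deux propri\'et\'es ci-dessus, j'obtiendrais
\[
\textbf{U}_\sigma\cdot\left(\textbf{A}_1\otimes\ldots\otimes\textbf{A}_k\right)
=\left(\textbf{A}_{\sigma(1)}\otimes\ldots\otimes\textbf{A}_{\sigma(k)}\right)\cdot\textbf{V}_\sigma .
\]

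Il resterait alors \`a multiplier \`a droite par $\textbf{V}_\sigma^T$. C'est ici qu'intervient le point d\'elicat : il faut disposer de la relation $\textbf{V}_\sigma^T=\textbf{V}_\sigma^{-1}$. Celle-ci d\'ecoule de la proposition pr\'ec\'edente, selon laquelle le transpos\'e d'un $\sigma$-OPT est un $\sigma^{-1}$-OPT avec $U_{\sigma^{-1}}=U_\sigma^{-1}$ ; de mani\`ere \'equivalente, $\textbf{V}_\sigma$ \'etant une matrice de permutation est orthogonale. On a donc $\textbf{V}_\sigma\cdot\textbf{V}_\sigma^T=I$, et en multipliant l'\'egalit\'e pr\'ec\'edente \`a droite par $\textbf{V}_\sigma^T$ je trouverais
\[
\textbf{U}_\sigma\cdot\left(\textbf{A}_1\otimes\ldots\otimes\textbf{A}_k\right)\cdot\textbf{V}_\sigma^T
=\textbf{A}_{\sigma(1)}\otimes\ldots\otimes\textbf{A}_{\sigma(k)},
\]
ce qui est exactement $(\ref{eq33})$.

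Le principal obstacle n'est pas le calcul, qui est immediat une fois la traduction faite, mais la coh\'erence des bases. Il faut v\'erifier que $U_\sigma$ agit sur les espaces $\mathcal{F}_i$ tandis que $V_\sigma$ agit sur les espaces $\mathcal{E}_i$, et que les couples de bases source et but des deux compositions co\"incident, de sorte que l'\'egalit\'e d'op\'erateurs se transporte sans \'equivoque en une \'egalit\'e de matrices. Le fait, rappel\'e dans la d\'efinition de la $\sigma$-MPT, que $\textbf{U}_\sigma$ et $\textbf{V}_\sigma$ ne d\'ependent pas du choix des bases garantit que l'identit\'e obtenue ne d\'epend pas non plus de ce choix.
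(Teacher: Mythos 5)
Votre proposition est correcte et reprend essentiellement la d\'emarche du m\'emoire : transporter l'identit\'e d'op\'erateurs de la Proposition \ref{tm32} en identit\'e matricielle via les bases produit tensoriel, pour aboutir \`a $\textbf{U}_{\sigma}\cdot\left(\textbf{A}_1\otimes\ldots\otimes\textbf{A}_k\right)=\left(\textbf{A}_{\sigma(1)}\otimes\ldots\otimes\textbf{A}_{\sigma(k)}\right)\cdot\textbf{V}_{\sigma}$. Vous rendez m\^eme explicite l'\'etape finale ($\textbf{V}_{\sigma}\cdot\textbf{V}_{\sigma}^T=I$, justifi\'ee par le fait que $\textbf{V}_{\sigma}$ est une matrice de permutation, ou par la proposition sur le transpos\'e d'un $\sigma$-OPT), \'etape que la preuve du m\'emoire laisse implicite en s'arr\^etant \`a cette identit\'e interm\'ediaire.
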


\begin{proof} Soient $A_1\in\mathcal{L}\left(\mathcal{E}_1,\mathcal{F}_1\right)$,
$A_2\in\mathcal{L}\left(\mathcal{E}_2,\mathcal{F}_2\right)$, \ldots,
$A_k\in\mathcal{L}\left(\mathcal{E}_k,\mathcal{F}_k\right)$. Leurs matrices par rapport aux couples de bases $\left(\mathcal{B}_1,\mathcal{B}^{'}_1\right)$, $\left(\mathcal{B}_2,\mathcal{B}^{'}_2\right)$,\ldots , $\left(\mathcal{B}_k,\mathcal{B}^{'}_k\right)$
sont respectivement $\textbf{A}_1$, $\textbf{A}_2$,\ldots, $\textbf{A}_k$. Alors,
$\textbf{A}_1\otimes \textbf{A}_2\otimes \ldots\otimes\textbf{A}_k$ est la matrice de
$A_1\otimes A_2\otimes\ldots\;\otimes A_k$ par rapport \`a
$\left(\mathcal{B}_1\otimes\mathcal{B}_2\otimes
\ldots\otimes\mathcal{B}_k,\mathcal{B}{'}_1\otimes\mathcal{B}{'}_2\otimes
\ldots\otimes\mathcal{B}{'}_k\right)$.\\
Cependant,  $A_1\otimes A_2\otimes \ldots\otimes A_k\\
\in\mathcal{L}\left(\mathcal{E}_1\otimes
\mathcal{E}_2\otimes\ldots\otimes
\mathcal{E}_k,\mathcal{F}_1\otimes\mathcal{F}_2\otimes\ldots\otimes \mathcal{F}_k
\right)$ et
$A_{\sigma(1)}\otimes A_{\sigma(2)}\otimes \ldots\otimes
A_{\sigma(k)}\\
\in \mathcal{L}\left(\mathcal{E}_{\sigma(1)}\otimes\;\ldots\otimes
\mathcal{E}_{\sigma(k)},\mathcal{F}_{\sigma(1)}\otimes\ldots\otimes\mathcal{F}_{\sigma(k)}\right)$,\\
 donc\\
  $U_{\sigma}\cdot\left(A_1\otimes  \ldots\otimes A_k\right)$,
$\left(A_{\sigma(1)}\otimes  \ldots\otimes
A_{\sigma(k)}\right)\cdot V_{\sigma}
\in \mathcal{L}\left(\mathcal{E}_1\otimes\ldots\otimes\mathcal{E}_k,
\mathcal{F}_{\sigma(1)}\otimes\ldots\otimes\mathcal{F}_{\sigma(k)}\right)$.\\
$\textbf{A}_{\sigma(1)}\otimes \textbf{A}_{\sigma(2)}\otimes \ldots\otimes
\textbf{A}_{\sigma(k)}$ est la matrice de $A_{\sigma(1)}\otimes
A_{\sigma(2)}\otimes \ldots\otimes A_{\sigma(k)}$ par rapport \`a $\left(\mathcal{B}_{\sigma(1)}\otimes \ldots\otimes\mathcal{B}_{\sigma(k)},\mathcal{B}{'}_{\sigma(1)}\otimes \ldots\otimes\mathcal{B}{'}_{\sigma(k)}\right)$. Par suite $(\textbf{A}_{\sigma(1)}\otimes
 \ldots\otimes
\textbf{A}_{\sigma(k)})\cdot\textbf{V}_{\sigma}$ est celle de $(A_{\sigma(1)}\otimes \ldots\otimes
A_{\sigma(k)})\cdot V_{\sigma}$ par rapport \`a $\left(\mathcal{B}_1\otimes\ldots\otimes\mathcal{B}_k, \mathcal{B}{'}_{\sigma(1)}\otimes
\ldots \otimes \mathcal{B}{'}_{\sigma(k)}\right)$ .\\
 $\textbf{U}_{\sigma}\cdot\left(\textbf{A}_1\otimes \textbf{A}_2\otimes
\ldots\otimes\textbf{A}_k\right)$ est la matrice de $U_{\sigma}\cdot\left(A_1\otimes
A_2\otimes \ldots\otimes A_k\right)$ par rapport au m\^eme couple de bases.\\
D'apr\`es la proposition \ref{tm32}, nous avons \\
$\textbf{U}_{\sigma}\cdot\left(\textbf{A}_1\otimes 
\ldots \otimes\textbf{A}_k\right)
=\left(\textbf{A}_{\sigma(1)}\otimes
\ldots\otimes
\textbf{A}_{\sigma(k)}\right)\cdot\textbf{V}_{\sigma}$
\end{proof}
\begin{proposition}\label{tm41}
La matrice $\textbf{U}_{\sigma}$ est une $\sigma$-MPT $n_1\otimes
n_2\otimes\ldots \otimes n_k$ si, et seulement si, pour tous $\textbf{a}_1\in\mathbb{C}^{n_1\times 1}$,
$\textbf{a}_2\in\mathbb{C}^{n_2\times 1}$,\ldots, $\textbf{a}_k\in\mathbb{C}^{n_k\times 1}$\\
$\textbf{U}_{\sigma}\cdot(\textbf{a}_1\otimes \ldots\otimes \textbf{a}_k)$=$\textbf{a}_{\sigma(1)}\otimes
\ldots\otimes \textbf{a}_{\sigma(k)}$

\end{proposition}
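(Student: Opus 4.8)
The plan is to prove the two implications separately, relying on the operator/matrix results already established. The forward direction will follow at once from the way the $\sigma$-MPT is defined (as the matrix of the $\sigma$-OPT $U_\sigma$), and can also be read off as a special case of Proposition~\ref{prop7}. The converse will follow from the fact that the decomposable tensors built from standard basis vectors form a basis of the full space, so that the prescribed action determines the matrix uniquely.

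\emph{Forward direction.} Suppose $\textbf{U}_{\sigma}$ is the $\sigma$-MPT $n_1\otimes\ldots\otimes n_k$, i.e. the matrix of $U_{\sigma}$ with respect to the pair of bases $(\mathcal{B}_1\otimes\ldots\otimes\mathcal{B}_k,\ \mathcal{B}_{\sigma(1)}\otimes\ldots\otimes\mathcal{B}_{\sigma(k)})$. Write $\textbf{a}_i\in\mathbb{C}^{n_i\times 1}$ as the coordinate column of a vector $\overline{x_i}\in\mathcal{E}_i$ in the basis $\mathcal{B}_i$. Then $\textbf{a}_1\otimes\ldots\otimes\textbf{a}_k$ is the coordinate column of $\overline{x_1}\otimes\ldots\otimes\overline{x_k}$ in $\mathcal{B}_1\otimes\ldots\otimes\mathcal{B}_k$, and multiplying by the matrix $\textbf{U}_{\sigma}$ yields the coordinate column, in $\mathcal{B}_{\sigma(1)}\otimes\ldots\otimes\mathcal{B}_{\sigma(k)}$, of $U_{\sigma}(\overline{x_1}\otimes\ldots\otimes\overline{x_k})=\overline{x_{\sigma(1)}}\otimes\ldots\otimes\overline{x_{\sigma(k)}}$, which is exactly $\textbf{a}_{\sigma(1)}\otimes\ldots\otimes\textbf{a}_{\sigma(k)}$. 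Equivalently, I would specialize Proposition~\ref{prop7} to the case where each linear map is a column $\mathbb{C}\to\mathcal{E}_i$, so that the companion MPT on the source side is the $\sigma$-MPT of one-dimensional factors, namely the scalar $1$; the right factor $\textbf{V}_{\sigma}^{T}=(1)$ then disappears and the relation reduces to the desired identity.

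\emph{Converse direction.} Now assume a matrix $\textbf{U}_{\sigma}\in\mathbb{C}^{(n_1\cdots n_k)\times(n_1\cdots n_k)}$ satisfies $\textbf{U}_{\sigma}\cdot(\textbf{a}_1\otimes\ldots\otimes\textbf{a}_k)=\textbf{a}_{\sigma(1)}\otimes\ldots\otimes\textbf{a}_{\sigma(k)}$ for all column vectors. I would first invoke the prior definition to know that a genuine $\sigma$-MPT $\textbf{M}$ exists (and is basis-independent). Applying the hypothesis to the standard basis vectors $\textbf{a}_i=\textbf{e}_{j_i}\in\mathbb{C}^{n_i\times 1}$, the tensors $\textbf{e}_{j_1}\otimes\ldots\otimes\textbf{e}_{j_k}$ range over a basis of $\mathbb{C}^{(n_1\cdots n_k)\times 1}$, and their images under $\textbf{U}_{\sigma}$ are the $\textbf{e}_{j_{\sigma(1)}}\otimes\ldots\otimes\textbf{e}_{j_{\sigma(k)}}$. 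By the forward direction these are also the images under $\textbf{M}$. Since a matrix is determined by its action on a basis, $\textbf{U}_{\sigma}=\textbf{M}$, hence $\textbf{U}_{\sigma}$ is a $\sigma$-MPT.

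\emph{Main obstacle.} There is no deep difficulty; the two real points of care are the index bookkeeping in the iterated tensor products (keeping track of which factor is sent where by $\sigma$) and the justification that testing only on decomposable column vectors is enough. The latter is the crux of the converse, and it is resolved by noting that the decomposables formed from standard basis vectors already span the whole space, so that the stated property pins $\textbf{U}_{\sigma}$ down completely against the known $\sigma$-MPT.
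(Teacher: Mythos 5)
Your proof is correct, and in the forward direction it coincides with the paper's, which likewise disposes of that implication in one line by citing Proposition~\ref{prop7}; your observation that the companion matrix $\textbf{V}_{\sigma}^{T}$ degenerates to the scalar $1$ when each factor is a column map $\mathbb{C}\to\mathcal{E}_i$ is exactly the justification the paper leaves implicit. In the converse the packaging differs slightly: the paper lifts the matrix hypothesis back to the intrinsic level --- it takes arbitrary vectors $\overline{a_1},\ldots,\overline{a_k}$ whose coordinate columns are the $\textbf{a}_i$, considers the operator $U_{\sigma}$ whose matrix with respect to the tensor-product bases is $\textbf{U}_{\sigma}$, deduces $U_{\sigma}(\overline{a_1}\otimes\ldots\otimes\overline{a_k})=\overline{a_{\sigma(1)}}\otimes\ldots\otimes\overline{a_{\sigma(k)}}$ for all vectors, and concludes that $U_{\sigma}$ is a $\sigma$-OPT by definition, hence $\textbf{U}_{\sigma}$ a $\sigma$-MPT --- whereas you remain entirely at the matrix level, comparing $\textbf{U}_{\sigma}$ with an already-constructed reference $\sigma$-MPT $\textbf{M}$ on the standard decomposable basis $\textbf{e}_{j_1}\otimes\ldots\otimes\textbf{e}_{j_k}$ and invoking the fact that a matrix is determined by its action on a basis. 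Both arguments rest on the same underlying fact, namely that decomposable tensors built from standard basis vectors span the whole space; your version makes this spanning step explicit and avoids the operator formalism, at the modest cost of invoking the prior existence of the reference matrix $\textbf{M}$, while the paper's version needs no reference matrix and stays consistent with its intrinsic-first methodology, though it leaves the spanning argument tacit inside the definition of a $\sigma$-OPT. Neither route has a gap.
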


\begin{proof} $"\Longrightarrow"$  C'est \'evident d'apr\`es la proposition \ref{prop7}.\\
$"\Longleftarrow"$ Supposons que pour tous\\
$\textbf{a}_1\in\mathbb{C}^{n_1\times 1}$,
$\textbf{a}_2\in\mathbb{C}^{n_2\times 1}$,\ldots,
$\textbf{a}_k\in\mathbb{C}^{n_k\times 1}$

$\textbf{U}_{\sigma}\cdot(\textbf{a}_1\otimes \ldots\otimes
\textbf{a}_k)$=$\textbf{a}_{\sigma(1)}\otimes
\ldots\otimes \textbf{a}_{\sigma(k)}$

Soient $\overline{a_1}$\;$\in$\;$\mathcal{E}_1$,
$\overline{a_2}$\;$\in$\;$\mathcal{E}_2$, \ldots,
$\overline{a_k}$\;$\in$\;$\mathcal{E}_k$ et $\mathcal{B}_1$,
$\mathcal{B}_2$,\ldots , $\mathcal{B}_k$ bases de
$\mathcal{E}_1$, $\mathcal{E}_2$, \ldots, $\mathcal{E}_k$ par rapport auxquelles les composantes de $\overline{a_1}$, $\overline{a_2}$,\ldots,
$\overline{a_k}$ forment les matrices colonnes $\textbf{a}_1$,
$\textbf{a}_2$,\ldots, $\textbf{a}_k$. Le $\sigma$-OPT 
$U_{\sigma}$\;$\in\;$ $\mathcal{L}$($\mathcal{E}_1$\;$\otimes$\;
$\mathcal{E}_2$\;$\otimes$\;\ldots\;$\otimes$ $\mathcal{E}_k$,
$\mathcal{E}_{\sigma(1)}$\;$\otimes$\;
$\mathcal{E}_{\sigma(2)}$\;$\otimes$\;\ldots\;$\otimes$\;
$\mathcal{E}_{\sigma(k)}$) dont la matrice par rapport \`a ($\mathcal{B}_1$$\otimes$$\mathcal{B}_2$$\otimes$ \ldots$\otimes$
$\mathcal{B}_k$, $\mathcal{B}_{\sigma(1)}$$\otimes$
$\mathcal{B}_{\sigma(2)}$$\otimes$ \ldots$\otimes$
$\mathcal{B}_{\sigma(k)}$ ) est $\textbf{U}_{\sigma}$. Donc\\
$U_{\sigma}$($\overline{a_1}$\;$\otimes$\;\ldots\;$\otimes$\;$\overline{a_k}$)=
$\overline{a_{\sigma(1)}}$\;$\otimes$\;\ldots
\;$\otimes$\;$\overline{a_{\sigma(k)}}$.
Ceci est vraie pour tous $\overline{a_1}$\;$\in$\;$\mathcal{E}_1$,
$\overline{a_2}$\;$\in$\;$\mathcal{E}_2$, \ldots,
$\overline{a_k}$\;$\in$\;$\mathcal{E}_k$.\\
Puisque $U_{\sigma}$ est un $\sigma$-OPT,
$\textbf{U}_{\sigma}$ est une $\sigma$-MPT $n_1\otimes
n_2\otimes\ldots \otimes n_k$ .
\end{proof}
\subsection{Expression d'un \'el\'ement d'une matrice de permutation tensorielle}

Maintenant, nous allons g\'en\'eraliser la formule \eqref{e61}. Consid\'erons les matrices $I_{n_{\sigma(r)}\times n_{r}}
  =\left(\delta_{j_{r}}^{i_{r}}\right)_{1\leq i_{r} \leq n_{\sigma(r)}, 1\leq j_{r} \leq n_{r}}$,
  $r=1, 2, \ldots,k$.

  \begin{center}
$I_{n_{\sigma(1)}\times
n_{1}}\otimes I_{n_{\sigma(2)}\times
n_{2}}\otimes\ldots\otimes I_{n_{\sigma(k)}\times
n_{k}}=\left(\delta_{j_{1}j_{2}\ldots
j_{k}}^{i_{1}i_{2}\ldots i_{k}}\right)
=\left(\delta_{j_{1}}^{i_{1}}\delta_{j_{2}}^{i_{2}}\ldots
\delta_{j_{k}}^{i_{k}}\right)$,
  \end{center}
  est une matrice carr\'ee $n_{1}n_{2}\ldots n_{k}\times n_{\sigma(1)}n_{\sigma(2)}\ldots n_{\sigma(k)}$, qui nous sugg\`ere  la proposition suivante.
\begin{proposition}
\begin{equation*}
\textbf{U}_{n_1\otimes n_2\otimes\ldots\otimes
n_k}(\sigma)=\left(U_{j_{1}j_{2}\ldots j_{k}}^{i_{1}i_{2}\ldots
i_{k}}\right)=\left(\delta_{j_{\sigma(1)}}^{i_{1}}\delta_{j_{\sigma(2)}}^{i_{2}}\ldots
\delta_{j_{\sigma(k)}}^{i_{k}}\right)
\end{equation*}
\end{proposition}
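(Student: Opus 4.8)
Le plan est de ramener la preuve \`a la propri\'et\'e caract\'eristique d'une $\sigma$-MPT \'etablie \`a la proposition \ref{tm41}: une matrice est une $\sigma$-MPT $n_1\otimes\ldots\otimes n_k$ si, et seulement si, elle envoie $\textbf{a}_1\otimes\ldots\otimes\textbf{a}_k$ sur $\textbf{a}_{\sigma(1)}\otimes\ldots\otimes\textbf{a}_{\sigma(k)}$ pour tous vecteurs colonnes. Il suffit donc de v\'erifier cette action pour la matrice propos\'ee, dont l'\'el\'ement g\'en\'erique est $\delta_{j_{\sigma(1)}}^{i_1}\ldots\delta_{j_{\sigma(k)}}^{i_k}$. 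On suit ainsi pas \`a pas la m\'ethode employ\'ee pour la formule \eqref{e61}, en passant du cas $k=2$ au cas g\'en\'eral.

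Concr\`etement, j'\'ecrirais d'abord la composante d'indice de ligne $(i_1 i_2\ldots i_k)$ du produit matrice-vecteur sous la forme $\sum_{j_1,\ldots,j_k}\left(\prod_{r=1}^k\delta_{j_{\sigma(r)}}^{i_r}\right)\prod_{m=1}^k a_m^{j_m}$, o\`u $\textbf{a}_r=(a_r^{j_r})_{1\leq j_r\leq n_r}$. Comme $\sigma$ est une bijection de $\{1,\ldots,k\}$, chaque indice de sommation $j_m$ figure dans un et un seul symbole de Kronecker, \`a savoir celui correspondant \`a $r=\sigma^{-1}(m)$; la sommation impose alors $j_m=i_{\sigma^{-1}(m)}$ pour chaque $m$, ce qui effondre toutes les sommes et laisse le produit $\prod_{m=1}^k a_m^{i_{\sigma^{-1}(m)}}$. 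Enfin, le changement d'indice $m=\sigma(r)$ r\'e\'ecrit ce produit en $\prod_{r=1}^k a_{\sigma(r)}^{i_r}$, qui est exactement la composante $(i_1\ldots i_k)$ de $\textbf{a}_{\sigma(1)}\otimes\ldots\otimes\textbf{a}_{\sigma(k)}$; la proposition \ref{tm41} permet alors de conclure.

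Le point d\'elicat ne sera pas le calcul lui-m\^eme mais la gestion coh\'erente de $\sigma$ et de $\sigma^{-1}$. Il faut s'assurer que la convention de l'\'enonc\'e --- le symbole $\delta_{j_{\sigma(r)}}^{i_r}$ fixant l'indice de colonne en position $\sigma(r)$ \`a la valeur de l'indice de ligne $i_r$ --- redonne apr\`es sommation le produit tensoriel permut\'e $\textbf{a}_{\sigma(1)}\otimes\ldots\otimes\textbf{a}_{\sigma(k)}$, et non $\textbf{a}_{\sigma^{-1}(1)}\otimes\ldots\otimes\textbf{a}_{\sigma^{-1}(k)}$. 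Le test du cas $k=2$ avec la transposition $\sigma=(1\,2)$, qui doit restituer la formule \eqref{e61}, \`a savoir $\delta_{j_2}^{i_1}\delta_{j_1}^{i_2}$, sert de garde-fou \`a chaque \'etape de la r\'e\'ecriture des indices.
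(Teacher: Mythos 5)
Your proposal is correct and follows essentially the same route as the paper's proof: both compute the action of the candidate matrix on an arbitrary elementary tensor $\textbf{a}_1\otimes\ldots\otimes\textbf{a}_k$, show the result is $\textbf{a}_{\sigma(1)}\otimes\ldots\otimes\textbf{a}_{\sigma(k)}$, and conclude via the characterization of Proposition \ref{tm41}. The only difference is presentational: where you collapse the sums explicitly by $j_m=i_{\sigma^{-1}(m)}$ and then relabel $m=\sigma(r)$, the paper performs the same contraction by first reordering the scalar factors as $a_{\sigma(1)}^{j_{\sigma(1)}}a_{\sigma(2)}^{j_{\sigma(2)}}\cdots a_{\sigma(k)}^{j_{\sigma(k)}}$ so that each Kronecker delta absorbs its matching factor directly.
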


\begin{proof} Pour $\textbf{a}_{r}=\left(a_{r}^{j_{r}}\right)_{1\leq
j_{r} \leq
n_{r}}\in\mathbb{C}^{n_{r}\times1}$,
$r=1, 2, \ldots,k$,\\
\begin{multline*}
\left(\textbf{a}_{1}\otimes \textbf{a}_{2}\otimes\ldots
\otimes\textbf{a}_{k}\right)^{i_{1}i_{2}\ldots
i_{k}}\longrightarrow \left(\textbf{U}_{n_1\otimes n_2\otimes\ldots\otimes
n_k}(\sigma)\cdot\left(\textbf{a}_{1}\otimes
\textbf{a}_{2}\otimes\ldots
\otimes\textbf{a}_{k}\right)\right)^{i_{1}i_{2}\ldots
i_{k}}=\\
=\delta_{j_{\sigma(1)}}^{i_{1}}\delta_{j_{\sigma(2)}}^{i_{2}}\ldots
\delta_{j_{\sigma(k)}}^{i_{k}}a_{1}^{j_{1}}a_{2}^{j_{2}}\ldots
a_{k}^{j_{k}}\\
=\delta_{j_{\sigma(1)}}^{i_{1}}a_{\sigma(1)}^{j_{\sigma(1)}}
\delta_{j_{\sigma(2)}}^{i_{2}}a_{\sigma(2)}^{j_{\sigma(2)}}\ldots
\delta_{j_{\sigma(k)}}^{i_{k}}a_{\sigma(k)}^{j_{\sigma(k)}}\\
=a_{\sigma(1)}^{i_{1}}a_{\sigma(2)}^{i_{2}}\ldots
a_{\sigma(k)}^{i_{k}}\\
=\left(\textbf{a}_{\sigma(1)}\otimes
\textbf{a}_{\sigma(2)}\otimes\ldots
\otimes\textbf{a}_{\sigma(k)}\right)^{i_{1}i_{2}\ldots i_{k}}
\end{multline*}
 \end{proof}

\subsection{D\'ecomposition d'une matrice de permutation tensorielle}\label{soussec323}
\begin{definition}
Pour $k\in\mathbb{N}$, $k>2$ et pour une permutation $\sigma$ sur $\{1,
2,\ldots, k\}$, nous appelons $\sigma$-matrice de transposition  tensorielle $n_{1}\otimes n_{2}\otimes\ldots\otimes n_{k}$ une $\sigma$-MPT $n_{1}\otimes n_{2}\otimes\ldots\otimes n_{k}$ si $\sigma$ est une transposition.\\
\end{definition}

Consid\'erons la $\sigma$-matrice de transposition  tensorielle $n_{1}\otimes
n_{2}\otimes \ldots\otimes n_{k}$, $\textbf{U}_{n_{1}\otimes n_{2}\otimes
\ldots\otimes n_{k}}(\sigma)$ avec $\sigma$ la transposition
$(i\;\;j)$.
\begin{multline*}
\textbf{U}_{n_{1}\otimes n_{2}\otimes \ldots\otimes
n_{k}}(\sigma)\cdot\left(\textbf{a}_{1}\otimes\ldots\otimes \textbf{a}_{i}\otimes
\textbf{a}_{i+1}\otimes\ldots\otimes \textbf{a}_{j}\otimes \textbf{a}_{j+1}\otimes\ldots\otimes
\textbf{a}_{k}\right)\;
\\
=\textbf{a}_{1}\otimes\ldots\otimes \textbf{a}_{i-1}\otimes \textbf{a}_{j}\otimes
\textbf{a}_{i+1}\otimes\ldots\otimes \textbf{a}_{j-1}\otimes \textbf{a}_{i}\otimes
\textbf{a}_{j+1}\otimes\ldots\otimes \textbf{a}_{k}
\end{multline*}
pour tous
$\textbf{a}_{l}\in\mathbb{C}^{n_{l}\times1}$.\\
Si $\left(\textbf{B}_{l_{i}}\right)_{1\leq l_{i}\leq n_{j}n_{i}}$,
$\left(\textbf{B}_{l_{j}}\right)_{1\leq l_{j}\leq n_{i}n_{j}}$ sont
respectivement des bases de $\mathbb{C}^{n_{j}\times
n_{i}}$ et $\mathbb{C}^{n_{i}\times
n_{j}}$, alors la MCT $\textbf{U}_{n_{i}\otimes n_{j}}$ peut \^etre decompos\'ee comme une combinaison lin\'eaire de la base $\left(\textbf{B}_{l_{i}}\otimes \textbf{B}_{l_{j}}\right)_{1\leq
l_{i}\leq n_{j}n_{i},1\leq l_{j}\leq n_{i}n_{j}}$ de $\mathbb{C}^{n_{i}n_{j}\times n_{i}n_{j}}$.
 Nous voulons prouver que $\textbf{U}_{n_{1}\otimes n_{2}\otimes \ldots\otimes n_{k}}(\sigma)$ est une combinaison lin\'eaire de\\

$\left(I_{n_{1}n_{2}\ldots n_{i-1}}\otimes \textbf{B}_{l_{i}}\otimes
I_{n_{i+1}n_{i+2}\ldots n_{j-1}}\otimes \textbf{B}_{l_{j}}\otimes
I_{n_{j+1}n_{j+2}\ldots n_{k}}\right)_{1\leq l_{i}\leq
n_{j}n_{i},1\leq l_{j}\leq n_{i}n_{j}}$. \\
Pour ce faire, il nous suffit de prouver la proposition suivante.

\begin{proposition}\label{thm21}
Supposons $\sigma=\left(
                      \begin{array}{ccc}
                        1 & 2 & 3 \\
                        3 & 2 & 1 \\
                      \end{array}
                    \right)=\left(
                                  \begin{array}{cc}
                                    1 & 3 \\
                                  \end{array}
                                \right)$ permutation sur $\{1, 2,
                                3\}$, $\left(\textbf{B}_{i_{1}}\right)_{1\leq i_{1}\leq
                                N_{3}N_{1}}$, $\left(\textbf{B}_{i_{3}}\right)_{1\leq i_{3}\leq
                                N_{1}N_{3}}$ sont respectivement des bases de $\mathbb{C}^{N_{3}\times
N_{1}}$ et $\mathbb{C}^{N_{1}\times
N_{3}}$. Si $\textbf{U}_{N_{1}\otimes
N_{3}}=\displaystyle\sum_{i_{1}=1}^{N_{1}N_{3}}\displaystyle\sum_{i_{3}=1}^{N_{1}N_{3}}\alpha^{i_{1}i_{3}}\textbf{B}_{i_{1}}\otimes
\textbf{B}_{i_{3}}$, $\alpha^{i_{1}i_{3}}\in\mathbb{C}$, alors\\
$\textbf{U}_{N_{1}\otimes N_{2}\otimes
N_{3}}\left(\sigma\right)=\displaystyle\sum_{i_{1}=1}^{N_{1}N_{3}}\displaystyle\sum_{i_{3}=1}^{N_{1}N_{3}}\alpha^{i_{1}i_{3}}\textbf{B}_{i_{1}}\otimes
I_{N_{2}}\otimes \textbf{B}_{i_{3}}$.
\end{proposition}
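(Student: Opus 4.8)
Le plan est de recourir \`a la caract\'erisation des MPT fournie par la proposition \ref{tm41} : pour prouver qu'une matrice est la $\sigma$-MPT cherch\'ee, il suffit de v\'erifier qu'elle agit comme $\sigma$ sur tout produit tensoriel de vecteurs colonnes. Posons donc $\textbf{M}=\sum_{i_{1}=1}^{N_{1}N_{3}}\sum_{i_{3}=1}^{N_{1}N_{3}}\alpha^{i_{1}i_{3}}\textbf{B}_{i_{1}}\otimes I_{N_{2}}\otimes \textbf{B}_{i_{3}}$. Je montrerais que pour tous $\textbf{a}\in\mathbb{C}^{N_{1}\times1}$, $\textbf{b}\in\mathbb{C}^{N_{2}\times1}$, $\textbf{c}\in\mathbb{C}^{N_{3}\times1}$, on a $\textbf{M}\cdot(\textbf{a}\otimes \textbf{b}\otimes \textbf{c})=\textbf{c}\otimes \textbf{b}\otimes \textbf{a}$, ce qui est exactement $\textbf{a}_{\sigma(1)}\otimes \textbf{a}_{\sigma(2)}\otimes \textbf{a}_{\sigma(3)}$ puisque $\sigma=(1\;3)$. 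Comme les tenseurs d\'ecomposables engendrent tout l'espace, cette propri\'et\'e d\'etermine $\textbf{M}$ de fa\c con unique, et la proposition \ref{tm41} permettra de conclure que $\textbf{M}=\textbf{U}_{N_{1}\otimes N_{2}\otimes N_{3}}(\sigma)$.

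D'abord, j'appliquerais la propri\'et\'e du produit mixte du produit de Kronecker \`a chaque terme de la somme : puisque $I_{N_{2}}\textbf{b}=\textbf{b}$,
\begin{equation*}
(\textbf{B}_{i_{1}}\otimes I_{N_{2}}\otimes \textbf{B}_{i_{3}})\cdot(\textbf{a}\otimes \textbf{b}\otimes \textbf{c})=(\textbf{B}_{i_{1}}\textbf{a})\otimes \textbf{b}\otimes (\textbf{B}_{i_{3}}\textbf{c}),
\end{equation*}
de sorte que $\textbf{M}\cdot(\textbf{a}\otimes \textbf{b}\otimes \textbf{c})=\sum_{i_{1},i_{3}}\alpha^{i_{1}i_{3}}(\textbf{B}_{i_{1}}\textbf{a})\otimes \textbf{b}\otimes (\textbf{B}_{i_{3}}\textbf{c})$.

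Ensuite, j'exploiterais l'hypoth\`ese sur $\textbf{U}_{N_{1}\otimes N_{3}}$. D'une part, par d\'efinition de la MCT, $\textbf{U}_{N_{1}\otimes N_{3}}\cdot(\textbf{a}\otimes \textbf{c})=\textbf{c}\otimes \textbf{a}$ ; d'autre part, en rempla\c cant $\textbf{U}_{N_{1}\otimes N_{3}}$ par son d\'eveloppement et en r\'eutilisant le produit mixte, $\textbf{U}_{N_{1}\otimes N_{3}}\cdot(\textbf{a}\otimes \textbf{c})=\sum_{i_{1},i_{3}}\alpha^{i_{1}i_{3}}(\textbf{B}_{i_{1}}\textbf{a})\otimes (\textbf{B}_{i_{3}}\textbf{c})$. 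On obtient ainsi l'identit\'e \`a deux facteurs
\begin{equation*}
\sum_{i_{1},i_{3}}\alpha^{i_{1}i_{3}}(\textbf{B}_{i_{1}}\textbf{a})\otimes (\textbf{B}_{i_{3}}\textbf{c})=\textbf{c}\otimes \textbf{a}.
\end{equation*}

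Il reste l'\'etape d\'elicate, qui sera le principal obstacle : ins\'erer le facteur central $\textbf{b}$ et faire passer la combinaison lin\'eaire \`a travers cette insertion. Pour cela, je consid\'ererais l'application lin\'eaire $\Psi_{\textbf{b}}$ de $\mathbb{C}^{N_{3}}\otimes\mathbb{C}^{N_{1}}$ dans $\mathbb{C}^{N_{3}}\otimes\mathbb{C}^{N_{2}}\otimes\mathbb{C}^{N_{1}}$ d\'efinie sur les tenseurs d\'ecomposables par $\textbf{u}\otimes \textbf{v}\mapsto \textbf{u}\otimes \textbf{b}\otimes \textbf{v}$ puis prolong\'ee par lin\'earit\'e (elle est bien d\'efinie, \'etant issue d'une application bilin\'eaire). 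Sa lin\'earit\'e permet de commuter avec la somme, d'o\`u $\textbf{M}\cdot(\textbf{a}\otimes \textbf{b}\otimes \textbf{c})=\Psi_{\textbf{b}}\big(\sum_{i_{1},i_{3}}\alpha^{i_{1}i_{3}}(\textbf{B}_{i_{1}}\textbf{a})\otimes (\textbf{B}_{i_{3}}\textbf{c})\big)=\Psi_{\textbf{b}}(\textbf{c}\otimes \textbf{a})=\textbf{c}\otimes \textbf{b}\otimes \textbf{a}$. Si l'on pr\'ef\`ere \'eviter l'introduction de $\Psi_{\textbf{b}}$, un calcul composante par composante fournit une alternative enti\`erement \'el\'ementaire : la composante d'indice $(p,l,q)$ des deux membres vaut $c^{p}b^{l}a^{q}$, gr\^ace \`a l'identit\'e \`a deux facteurs ci-dessus. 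Ceci ach\`everait la preuve.
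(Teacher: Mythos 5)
Votre preuve est correcte et suit essentiellement le m\^eme chemin que celle du m\'emoire : d\'evelopper l'hypoth\`ese $\textbf{U}_{N_{1}\otimes N_{3}}\cdot(\textbf{a}\otimes\textbf{c})=\textbf{c}\otimes\textbf{a}$, ins\'erer le facteur central $\textbf{b}$, r\'e\'ecrire le tout comme l'action de $\sum\alpha^{i_{1}i_{3}}\textbf{B}_{i_{1}}\otimes I_{N_{2}}\otimes\textbf{B}_{i_{3}}$ sur $\textbf{a}\otimes\textbf{b}\otimes\textbf{c}$, puis conclure par la caract\'erisation de la proposition \ref{tm41}. La seule diff\'erence est cosm\'etique : l\`a o\`u le m\'emoire invoque la propri\'et\'e \ref{thm7} de l'Annexe \ref{AppB} pour l'\'etape d'insertion, vous la red\'emontrez en construisant l'application $\Psi_{\textbf{b}}$ via la propri\'et\'e universelle du produit tensoriel (ou par un calcul composante par composante), ce qui est exactement le contenu de cette propri\'et\'e sp\'ecialis\'ee aux vecteurs colonnes.
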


\begin{proof} Soient $\textbf{b}_{1}\in\mathbb{C}^{N_{1}\times
1}$, $\textbf{b}_{3}\in\mathbb{C}^{N_{3}\times
1}$.\\
D\'eveloppons d'abord la relation
\begin{equation*}
\textbf{U}_{N_{1}\otimes N_{3}}\cdot\left(\textbf{b}_{1}\otimes
\textbf{b}_{3}\right)=\textbf{b}_{3}\otimes \textbf{b}_{1}
\end{equation*}
\begin{equation*}
\displaystyle\sum_{i_{1}=1}^{N_{1}N_{3}}\displaystyle\sum_{i_{3}=1}^{N_{1}N_{3}}\alpha^{i_{1}i_{3}}\left(\textbf{B}_{i_{1}}\otimes \textbf{B}_{i_{3}}\right).\left(\textbf{b}_{1}\otimes \textbf{b}_{3}\right)=\textbf{b}_{3}\otimes \textbf{b}_{1}
\end{equation*}

\begin{equation*}
\displaystyle\sum_{i_{1}=1}^{N_{1}N_{3}}\displaystyle\sum_{i_{3}=1}^{N_{1}N_{3}}\alpha^{i_{1}i_{3}}\left(\textbf{B}_{i_{1}}.\textbf{b}_{1}\right)\otimes \left(\textbf{B}_{i_{3}}.\textbf{b}_{3}\right)=\textbf{b}_{3}\otimes \textbf{b}_{1}
\end{equation*}

En utilisant la proposition \ref{thm7}, de l'Annexe \ref{AppB}, 
 
 \begin{equation*}
\displaystyle\sum_{i_{1}=1}^{N_{1}N_{3}}\displaystyle\sum_{i_{3}=1}^{N_{1}N_{3}}\alpha^{i_{1}i_{3}}\left(\textbf{B}_{i_{1}}.\textbf{b}_{1}\right)\otimes \textbf{b}_2\otimes\left(\textbf{B}_{i_{3}}.\textbf{b}_{3}\right)=\textbf{b}_{3}\otimes \textbf{b}_2\otimes \textbf{b}_{1}
\end{equation*}
 \begin{equation*}
\displaystyle\sum_{i_{1}=1}^{N_{1}N_{3}}\displaystyle\sum_{i_{3}=1}^{N_{1}N_{3}}\alpha^{i_{1}i_{3}}\left(\textbf{B}_{i_{1}}.\textbf{b}_{1}\right)\otimes \left(I_{N_2}.\textbf{b}_2\right)\otimes\left(\textbf{B}_{i_{3}}.\textbf{b}_{3}\right)=\textbf{b}_{3}\otimes \textbf{b}_2\otimes \textbf{b}_{1}
\end{equation*}
\begin{equation*}
\displaystyle\sum_{i_{1}=1}^{N_{1}N_{3}}\displaystyle\sum_{i_{3}=1}^{N_{1}N_{3}}\alpha^{i_{1}i_{3}}\left(\textbf{B}_{i_{1}}\otimes I_{N_2}\otimes \textbf{B}_{i_{3}}\right).\left(\textbf{b}_{1}\otimes \textbf{b}_2\otimes \textbf{b}_{3}\right)=\textbf{b}_{3}\otimes \textbf{b}_2\otimes \textbf{b}_{1}
\end{equation*}
D'o\`u,
\begin{equation*}
\textbf{U}_{N_{1}\otimes N_{2}\otimes
N_{3}}\left(\sigma\right)=\displaystyle\sum_{i_{1}=1}^{N_{1}N_{3}}\displaystyle\sum_{i_{3}=1}^{N_{1}N_{3}}\alpha^{i_{1}i_{3}}\textbf{B}_{i_{1}}\otimes I_{N_2}\otimes \textbf{B}_{i_{3}} 
\end{equation*}
\end{proof}

\begin{notation}
    Soit $\sigma\in S_{n}$, c'est-\`a-dire $\sigma$ est une permutation sur \{1, 2, \ldots, n\}, $p\in\mathbb{N}$, $p\geq2$, nous notons la MPT $\textbf{U}_{\underbrace{p\otimes p\otimes\ldots\otimes
p}_{n-times}}\left(\sigma\right)$ par $\textbf{U}_{p^{\otimes
n}}\left(\sigma\right)$, \cite{Raoelina86}.
\end{notation}

Nous utilisons les lemmes \cite{Merris97} suivants pour prouver la proposition ci-dessous.
\begin{lemma}Toute permutation $\sigma\in S_{n}$ peut s'\'ecrire comme produit de transpositions. (Cette factorisation en un produit de transpositions n'est pas unique)
\begin{equation*}
\mathcal{C_{\sigma}}=\left(i_{1}\;i_{2}\;i_{3}\;\ldots
\;i_{n-1}\;i_{n}\right)=\left(i_{1}\;i_{2}\right)\left(i_{2}\;i_{3}\right)\ldots\left(i_{n-1}\;i_{n}\right)
\end{equation*}
\end{lemma}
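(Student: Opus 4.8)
L'\'enonc\'e comporte deux affirmations : l'affirmation g\'en\'erale selon laquelle toute $\sigma\in S_{n}$ se factorise en un produit de transpositions, et la formule explicite donnant une telle factorisation pour un cycle. Le plan est de traiter d'abord la formule explicite, car l'affirmation g\'en\'erale en d\'ecoule aussit\^ot une fois admise la d\'ecomposition de toute permutation en produit de cycles \`a supports disjoints (r\'esultat standard que j'invoquerais sans le red\'emontrer). Je pr\'eciserais d'embl\'ee la convention de composition : le produit $\left(i_{1}\;i_{2}\right)\left(i_{2}\;i_{3}\right)\ldots\left(i_{n-1}\;i_{n}\right)$ est lu comme une composition de fonctions, le facteur le plus \`a droite agissant en premier ; c'est sous cette convention que l'identit\'e est vraie (sous la convention oppos\'ee on obtiendrait le cycle inverse).

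\textbf{Preuve de la formule du cycle.} J'\'etablirais l'\'egalit\'e $\left(i_{1}\;i_{2}\;\ldots\;i_{n}\right)=\left(i_{1}\;i_{2}\right)\left(i_{2}\;i_{3}\right)\ldots\left(i_{n-1}\;i_{n}\right)$ de l'une des deux mani\`eres suivantes. La premi\`ere consiste \`a v\'erifier directement que les deux membres envoient chaque \'el\'ement sur la m\^eme image, en distinguant trois cas. Pour l'\'el\'ement $i_{n}$, les transpositions appliqu\'ees de droite \`a gauche donnent $i_{n}\mapsto i_{n-1}\mapsto\ldots\mapsto i_{1}$, ce qui co\"incide avec l'action du cycle. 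Pour un \'el\'ement int\'erieur $i_{k}$ avec $1<k<n$, seul le facteur $\left(i_{k}\;i_{k+1}\right)$ le d\'eplace (les facteurs situ\'es plus \`a droite le fixent, et $\left(i_{k-1}\;i_{k}\right)$ agit ensuite sur $i_{k+1}$ qu'il laisse fixe), d'o\`u $i_{k}\mapsto i_{k+1}$. Pour $i_{1}$, seul le facteur le plus \`a gauche agit, donnant $i_{1}\mapsto i_{2}$. Tout indice hors de $\{i_{1},\ldots,i_{n}\}$ est fix\'e par chaque facteur. Les deux membres ont donc la m\^eme action, ce qui prouve l'\'egalit\'e.

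\textbf{Variante par r\'ecurrence.} La seconde mani\`ere, plus concise, proc\`ede par r\'ecurrence sur $n$. Le cas $n=2$ est trivial. Pour l'h\'er\'edit\'e, j'utiliserais l'identit\'e $\left(i_{1}\;\ldots\;i_{n}\right)=\left(i_{1}\;\ldots\;i_{n-1}\right)\left(i_{n-1}\;i_{n}\right)$, que l'on v\'erifie en suivant les images : le facteur de droite envoie $i_{n}$ sur $i_{n-1}$ puis le $(n-1)$-cycle sur $i_{1}$, tandis que $i_{n-1}$ est envoy\'e sur $i_{n}$ (fix\'e par le $(n-1)$-cycle) et tout $i_{k}$ int\'erieur sur $i_{k+1}$. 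L'hypoth\`ese de r\'ecurrence appliqu\'ee au $(n-1)$-cycle fournit alors imm\'ediatement la factorisation voulue.

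\textbf{Conclusion et principale difficult\'e.} Pour l'affirmation g\'en\'erale, j'\'ecrirais $\sigma$ comme produit de cycles \`a supports disjoints, j'appliquerais la formule ci-dessus \`a chacun de ces cycles, et je conclurais que $\sigma$ est un produit de transpositions (les points fixes, vus comme $1$-cycles, ne contribuant rien). La seule v\'eritable subtilit\'e est la gestion de la convention de composition et le suivi scrupuleux des images aux deux extr\'emit\'es du cycle : c'est l\`a que se glissent les erreurs de sens, puisqu'un mauvais ordre de lecture transforme le r\'esultat en le cycle inverse.
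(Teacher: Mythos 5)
Your proof is correct; note, however, that the paper itself gives no proof of this lemma at all --- it is stated as a known result quoted from the reference \cite{Merris97} (\og Nous utilisons les lemmes \cite{Merris97} suivants\ldots \fg), so the comparison here is between your argument and a bare citation. Both of your routes are sound: the direct verification correctly treats the three cases $i_{n}$, interior $i_{k}$ and $i_{1}$, and your remark that every factor fixes the points outside $\{i_{1},\ldots,i_{n}\}$ makes the supports match. Your inductive variant deserves a special mention: the identity $\left(i_{1}\;\ldots\;i_{n}\right)=\left(i_{1}\;\ldots\;i_{n-1}\right)\left(i_{n-1}\;i_{n}\right)$ that you use as the induction step is precisely the next lemma of the paper (Lemme \ref{thm33}), so your recurrence in effect derives the present lemma from that one --- the same chaining the paper itself exploits afterwards, in the Proposition \ref{thm34}, to factor a tensor permutation matrix into tensor transposition matrices. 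Your explicit fixing of the composition convention (rightmost factor acts first, the opposite convention producing the inverse cycle) is also a genuine improvement, since the paper never states which convention it uses and the displayed identity is true only under the one you chose. The single point you leave to the literature, the decomposition of an arbitrary permutation into cycles with disjoint supports, is standard and fair to invoke without proof --- the paper does no better, citing a textbook for the whole statement.
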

\begin{lemma}\label{thm33}
Soit $\sigma\in S_{n}$, dont le cycle est
\begin{equation*}
\mathcal{C_{\sigma}}=\left(i_{1}\;i_{2}\;i_{3}\;\ldots
\;i_{n-1}\;i_{n}\right)
\end{equation*}
alors
\begin{equation*}
\mathcal{C_{\sigma}}=\left(i_{1}\;i_{2}\;i_{3}\;\ldots\;i_{n-1}\right)\left(i_{n-1}\;i_{n}\right)
\end{equation*}
\end{lemma}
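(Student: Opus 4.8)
The plan is to prove the identity by direct verification, comparing the action of both sides on each symbol. Since every permutation appearing here is the identity outside the finite set $\{i_1, i_2, \ldots, i_n\}$ (each cycle fixes all symbols it does not list), it suffices to check that the two sides agree on these $n$ symbols. Throughout I use the standard right-to-left convention for composition, so that in the product $\left(i_1\, i_2\, \ldots\, i_{n-1}\right)\left(i_{n-1}\, i_n\right)$ one first applies the transposition $\left(i_{n-1}\, i_n\right)$ and then the $(n-1)$-cycle $\left(i_1\, i_2\, \ldots\, i_{n-1}\right)$.

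First I would recall the action of the left-hand side: the $n$-cycle $\mathcal{C}_{\sigma}$ sends $i_k \mapsto i_{k+1}$ for $1 \le k \le n-1$ and $i_n \mapsto i_1$. The core of the argument is then the case analysis for the right-hand product, which I would organise into three cases according to whether the transposition $\left(i_{n-1}\, i_n\right)$ moves the symbol. For $1 \le k \le n-2$ the transposition fixes $i_k$ and the $(n-1)$-cycle sends $i_k \mapsto i_{k+1}$, so the composite gives $i_k \mapsto i_{k+1}$. For $k = n-1$ the transposition sends $i_{n-1} \mapsto i_n$, which the $(n-1)$-cycle then fixes (it does not list $i_n$), so $i_{n-1} \mapsto i_n$. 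Finally, for $k = n$ the transposition sends $i_n \mapsto i_{n-1}$ and the $(n-1)$-cycle then sends $i_{n-1} \mapsto i_1$, so $i_n \mapsto i_1$. In every case the image agrees with that of $\mathcal{C}_{\sigma}$, which proves that the two permutations coincide.

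The only point demanding care --- and what I expect to be the main obstacle --- is the bookkeeping at the two boundary symbols $i_{n-1}$ and $i_n$, precisely where the transposition is active; one must track the image through both factors in the correct order, since reversing the composition convention breaks the identity. I would therefore state the convention explicitly before carrying out the computation. As a cross-check, I note that this identity is exactly the inductive step that collapses the full transposition factorisation of the previous lemma: applying that lemma to the $(n-1)$-cycle on $\{i_1, \ldots, i_{n-1}\}$ and appending $\left(i_{n-1}\, i_n\right)$ recovers $\left(i_1\, i_2\right)\left(i_2\, i_3\right)\cdots\left(i_{n-1}\, i_n\right)$, so the two lemmas are mutually consistent.
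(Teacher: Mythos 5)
Your verification is correct, and it is worth noting that the paper itself contains no proof of this lemma at all: both this statement and the factorisation lemma preceding it are simply quoted from the reference \cite{Merris97}, so your pointwise check on the symbols $i_{1},\ldots,i_{n}$ supplies an argument the paper omits rather than paralleling one. The case analysis is sound in all three regimes: for $1\leq k\leq n-2$ the transposition $\left(i_{n-1}\;i_{n}\right)$ fixes $i_{k}$ and the $(n-1)$-cycle advances it; for $k=n-1$ the transposition produces $i_{n}$, which lies outside the support of $\left(i_{1}\;i_{2}\;\ldots\;i_{n-1}\right)$; and for $k=n$ the composite gives $i_{n}\mapsto i_{n-1}\mapsto i_{1}$, matching the $n$-cycle in every case (both sides fix everything else). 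You are also right that the convention is the genuine mathematical content here and not mere pedantry: with the opposite left-to-right convention the stated product fails --- one would get, for instance, $i_{n-1}\mapsto i_{1}$ instead of $i_{n-1}\mapsto i_{n}$ --- so a correct proof must fix the convention explicitly, as you do, and the paper's subsequent use of the lemma in Proposition \ref{thm34} (decomposing $\textbf{U}_{p^{\otimes n}}\left(\sigma\right)$ into tensor transposition matrices) tacitly relies on the same right-to-left reading. Your closing consistency check, that iterating this identity recovers the full factorisation $\left(i_{1}\;i_{2}\right)\left(i_{2}\;i_{3}\right)\cdots\left(i_{n-1}\;i_{n}\right)$ of the first lemma, is a nice confirmation that the two cited lemmas cohere under that convention.
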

\begin{proposition}\label{thm34}
Pour $n\in\mathbb{N}^{*}$, $n>1$, $\sigma\in S_{n}$ dont le cycle est
\begin{equation*}
\mathcal{C_{\sigma}}=\left(i_{1}\;i_{2}\;i_{3}\;\ldots
\;i_{k-1}\;i_{k}\right)
\end{equation*}
avec $k\in\mathbb{N}^{*}$, $k>2$. Alors,
\begin{equation*}
\textbf{U}_{p^{\otimes n}}\left(\sigma\right)=\textbf{U}_{p^{\otimes
n}}\left(\left(i_{1}\;i_{2}\;\ldots\;i_{k-1}\right)\right)\cdot
\textbf{U}_{p^{\otimes n}}\left(\left(i_{k-1}\;i_{k}\right)\right)
\end{equation*}
ou
\begin{equation*}
\textbf{U}_{p^{\otimes n}}\left(\sigma\right)=\textbf{U}_{p^{\otimes
n}}\left(\left(i_{1}\;i_{2}\right)\right)\cdot \textbf{U}_{p^{\otimes
n}}\left(\left(i_{2}\;i_{3}\right)\right)\cdot\ldots \cdot
\textbf{U}_{p^{\otimes n}}\left(\left(i_{k-1}\;i_{k}\right)\right)
\end{equation*}
avec $p\in\mathbb{N}^{*}$, $p>2$.
\end{proposition}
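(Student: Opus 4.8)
Le plan est de ramener l'identit\'e matricielle \`a une v\'erification sur les tenseurs d\'ecomposables, gr\^ace \`a la caract\'erisation de la Proposition \ref{tm41}. Celle-ci affirme qu'une matrice co\"incide avec $\textbf{U}_{p^{\otimes n}}(\rho)$ si, et seulement si, elle envoie $\textbf{a}_1\otimes\ldots\otimes \textbf{a}_n$ sur $\textbf{a}_{\rho(1)}\otimes\ldots\otimes \textbf{a}_{\rho(n)}$ pour tous $\textbf{a}_1,\ldots,\textbf{a}_n\in\mathbb{C}^{p\times1}$. Ainsi, pour \'etablir une \'egalit\'e du type $\textbf{U}_{p^{\otimes n}}(\sigma)=\textbf{U}_{p^{\otimes n}}(\alpha)\cdot \textbf{U}_{p^{\otimes n}}(\beta)$, il suffira de montrer que les deux membres ont la m\^eme action sur un tenseur d\'ecomposable arbitraire.

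Je commencerais donc par \'etablir la r\`egle de composition de deux MPT. En appliquant la Proposition \ref{tm41} d'abord au facteur de droite, puis au facteur de gauche, je calculerais
\[
\textbf{U}_{p^{\otimes n}}(\alpha)\cdot \textbf{U}_{p^{\otimes n}}(\beta)\cdot(\textbf{a}_1\otimes\ldots\otimes \textbf{a}_n)=\textbf{U}_{p^{\otimes n}}(\alpha)\cdot(\textbf{a}_{\beta(1)}\otimes\ldots\otimes \textbf{a}_{\beta(n)})=\textbf{a}_{\beta(\alpha(1))}\otimes\ldots\otimes \textbf{a}_{\beta(\alpha(n))},
\]
ce qui identifie le produit \`a la MPT de la permutation compos\'ee des deux facteurs. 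La r\'eciproque de la Proposition \ref{tm41} permet alors de conclure que le produit de deux MPT est encore une MPT. La premi\`ere formule de la proposition r\'esulte ensuite du Lemme \ref{thm33}, qui factorise le $k$-cycle $\mathcal{C}_\sigma$ en le produit du $(k-1)$-cycle $(i_1\,i_2\,\ldots\,i_{k-1})$ par la transposition $(i_{k-1}\;i_k)$ : il ne reste qu'\`a reporter cette factorisation dans la r\`egle de composition.

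Pour la seconde formule, je proc\'ederais par r\'ecurrence sur la longueur $k$ du cycle, en d\'etachant \`a chaque \'etape une transposition gr\^ace au Lemme \ref{thm33} jusqu'\`a \'epuiser le cycle ; de mani\`ere \'equivalente, on peut invoquer directement le premier lemme, qui \'ecrit tout cycle comme produit de transpositions cons\'ecutives, puis appliquer la r\`egle de composition autant de fois que n\'ecessaire. Chaque \'etape d'it\'eration ne fait qu'appliquer la factorisation matricielle d\'ej\`a \'etablie.

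La principale difficult\'e sera de contr\^oler l'ordre de composition : il faudra v\'erifier avec soin que la convention (de droite \`a gauche) utilis\'ee dans les lemmes pour le produit de cycles s'accorde exactement avec l'ordre dans lequel les matrices se multiplient, de sorte que la permutation obtenue soit bien $\sigma$ et non un cycle conjugu\'e. C'est ce travail de v\'erification des conventions, plut\^ot que le calcul lui-m\^eme, qui demandera le plus d'attention.
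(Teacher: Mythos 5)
Le m\'emoire ne donne en fait aucune d\'emonstration de la Proposition \ref{thm34} : il \'enonce les deux lemmes de factorisation tir\'es de Merris, puis la proposition, sans preuve. Votre plan --- caract\'erisation de la Proposition \ref{tm41} sur les tenseurs d\'ecomposables, r\`egle de composition des MPT, factorisation du cycle par le Lemme \ref{thm33}, puis r\'ecurrence sur $k$ --- est donc pr\'ecis\'ement l'argument que le texte laisse implicite, et votre calcul central est exact.

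La v\'erification de conventions que vous reportez \`a la fin n'est cependant pas une simple formalit\'e : men\'ee jusqu'au bout, elle renverse l'\'enonc\'e. Votre propre formule donne $\textbf{U}_{p^{\otimes n}}(\alpha)\cdot \textbf{U}_{p^{\otimes n}}(\beta)\cdot(\textbf{a}_1\otimes\ldots\otimes \textbf{a}_n)=\textbf{a}_{\beta(\alpha(1))}\otimes\ldots\otimes \textbf{a}_{\beta(\alpha(n))}$, c'est-\`a-dire $\textbf{U}_{p^{\otimes n}}(\alpha)\cdot \textbf{U}_{p^{\otimes n}}(\beta)=\textbf{U}_{p^{\otimes n}}(\beta\circ\alpha)$ : avec la d\'efinition du m\'emoire, $\sigma\mapsto \textbf{U}_{p^{\otimes n}}(\sigma)$ est un \emph{anti}-homomorphisme pour la composition usuelle de droite \`a gauche --- celle-l\`a m\^eme sous laquelle les deux lemmes sont vrais (sous la lecture de gauche \`a droite, ils sont faux, ce qui exclut de s'en tirer en changeant la convention de produit). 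Avec $\sigma=(i_1\;\ldots\;i_{k-1})\circ(i_{k-1}\;i_k)$ on obtient donc $\textbf{U}_{p^{\otimes n}}(\sigma)=\textbf{U}_{p^{\otimes n}}\left((i_{k-1}\;i_k)\right)\cdot \textbf{U}_{p^{\otimes n}}\left((i_1\;\ldots\;i_{k-1})\right)$, l'ordre \emph{inverse} de celui de la proposition. Concr\`etement, pour $k=3$ :
\[
\textbf{U}_{p^{\otimes 3}}\left((1\;2)\right)\cdot \textbf{U}_{p^{\otimes 3}}\left((2\;3)\right)\cdot\left(\textbf{a}\otimes\textbf{b}\otimes\textbf{c}\right)=\textbf{c}\otimes\textbf{a}\otimes\textbf{b}=\textbf{U}_{p^{\otimes 3}}\left((1\;3\;2)\right)\cdot\left(\textbf{a}\otimes\textbf{b}\otimes\textbf{c}\right)
\]
et non $\textbf{U}_{p^{\otimes 3}}\left((1\;2\;3)\right)$ : votre r\'ecurrence, telle quelle, aboutit \`a $\textbf{U}_{p^{\otimes n}}(\sigma^{-1})$ (le danger n'est d'ailleurs pas un cycle \emph{conjugu\'e}, comme vous l'\'ecrivez, mais le cycle inverse). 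Pour conclure, il faut soit renverser l'ordre des facteurs dans les deux formules, soit adopter la convention $U_{\sigma}(\overline{x_1}\otimes\ldots\otimes\overline{x_n})=\overline{x_{\sigma^{-1}(1)}}\otimes\ldots\otimes\overline{x_{\sigma^{-1}(n)}}$, qui fait de $\sigma\mapsto U_\sigma$ un vrai homomorphisme et redonne l'\'enonc\'e imprim\'e ; les matrices de transposition tensorielle \'etant inchang\'ees ($\tau^{-1}=\tau$), seule l'\'etiquette du cycle bascule. Ce point, h\'erit\'e d'une ambigu\"it\'e du m\'emoire lui-m\^eme, doit \^etre tranch\'e explicitement dans votre r\'edaction.
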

Ainsi, une MPT peut \^etre exprim\'ee comme un produit de matrices de transposition tensorielle.
\begin{corollary}
Pour $n\in\mathbb{N}^{*}$, $n>2$, $\sigma\in S_{n}$ dont le cycle est
\begin{equation*}
\mathcal{C_{\sigma}}=\left(i_{1}\;i_{2}\;i_{3}\;\ldots
\;i_{n-1}\;n\right)
\end{equation*}
Alors,
\begin{equation*}
\textbf{U}_{p^{\otimes n}}\left(\sigma\right)=\left[\textbf{U}_{p^{\otimes
(n-1)}}\left(\left(i_{1}\;i_{2}\;\ldots\;i_{n-2}\;i_{n-1}\right)\right)\otimes
I_{p}\right]\cdot \textbf{U}_{p^{\otimes
n}}\left(\left(i_{n-1}\;n\right)\right)
\end{equation*}
\end{corollary}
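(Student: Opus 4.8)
Le plan est de d\'eduire ce corollaire de la proposition \ref{thm34} en isolant le dernier facteur tensoriel. Puisque $n>2$, la proposition \ref{thm34} s'applique au cycle $\mathcal{C}_\sigma=(i_1\ i_2\ \ldots\ i_{n-1}\ n)$ de longueur $k=n$ et fournit imm\'ediatement
\begin{equation*}
\textbf{U}_{p^{\otimes n}}(\sigma)=\textbf{U}_{p^{\otimes n}}\left((i_1\ i_2\ \ldots\ i_{n-1})\right)\cdot \textbf{U}_{p^{\otimes n}}\left((i_{n-1}\ n)\right).
\end{equation*}
Il ne reste donc qu'\`a identifier le premier facteur. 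Comme $\mathcal{C}_\sigma$ est un $n$-cycle sur $\{1,\ldots,n\}$ contenant $n$, on a $\{i_1,\ldots,i_{n-1}\}=\{1,\ldots,n-1\}$; ainsi la permutation $\tau=(i_1\ i_2\ \ldots\ i_{n-1})$, vue dans $S_n$, fixe l'indice $n$ et se restreint en une permutation $\tau'$ de $\{1,\ldots,n-1\}$.

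La partie essentielle est alors le lemme suivant: si $\tau\in S_n$ fixe $n$ et $\tau'\in S_{n-1}$ d\'esigne sa restriction \`a $\{1,\ldots,n-1\}$, alors $\textbf{U}_{p^{\otimes n}}(\tau)=\textbf{U}_{p^{\otimes (n-1)}}(\tau')\otimes I_p$. Pour l'\'etablir je montre que la matrice $M=\textbf{U}_{p^{\otimes (n-1)}}(\tau')\otimes I_p$ poss\`ede la propri\'et\'e caract\'eristique d'une $\tau$-MPT (proposition \ref{tm41}): pour tous $\textbf{a}_r\in\mathbb{C}^{p\times1}$, j'\'ecris $\textbf{a}_1\otimes\ldots\otimes\textbf{a}_n=(\textbf{a}_1\otimes\ldots\otimes\textbf{a}_{n-1})\otimes\textbf{a}_n$ par associativit\'e du produit de Kronecker, puis j'applique la r\`egle du produit mixte $(A\otimes B)(C\otimes D)=(AC)\otimes(BD)$, ce qui donne
\begin{equation*}
M\cdot(\textbf{a}_1\otimes\ldots\otimes\textbf{a}_n)=\left[\textbf{U}_{p^{\otimes(n-1)}}(\tau')\cdot(\textbf{a}_1\otimes\ldots\otimes\textbf{a}_{n-1})\right]\otimes\textbf{a}_n=\textbf{a}_{\tau'(1)}\otimes\ldots\otimes\textbf{a}_{\tau'(n-1)}\otimes\textbf{a}_n.
\end{equation*}
Comme $\tau'(r)=\tau(r)$ pour $r\leq n-1$ et $\tau(n)=n$, ce produit est exactement $\textbf{a}_{\tau(1)}\otimes\ldots\otimes\textbf{a}_{\tau(n)}$; par la proposition \ref{tm41} et l'unicit\'e de la MPT, on conclut $M=\textbf{U}_{p^{\otimes n}}(\tau)$.

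Il suffira enfin de reporter ce lemme, appliqu\'e \`a $\tau=(i_1\ i_2\ \ldots\ i_{n-1})$ et $\tau'=(i_1\ i_2\ \ldots\ i_{n-2}\ i_{n-1})$, dans la d\'ecomposition ci-dessus pour obtenir l'\'egalit\'e annonc\'ee. Le point le plus d\'elicat sera de justifier soigneusement l'\'etape du produit mixte et de v\'erifier que $\tau$ laisse globalement stable $\{1,\ldots,n-1\}$, de sorte que $\tau'$ soit bien d\'efinie; cela d\'ecoule du seul fait que le $n$-cycle contient $n$, qui ne figure donc pas parmi $i_1,\ldots,i_{n-1}$.
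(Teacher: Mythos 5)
Votre d\'emonstration est correcte et suit exactement la voie que le m\'emoire laisse implicite : le corollaire y est \'enonc\'e sans preuve, imm\'ediatement apr\`es la proposition \ref{thm34}, et la d\'erivation attendue est bien celle que vous donnez --- appliquer cette proposition au cycle de longueur $n$, puis identifier $\textbf{U}_{p^{\otimes n}}(\tau)$ \`a $\textbf{U}_{p^{\otimes(n-1)}}(\tau')\otimes I_{p}$ lorsque $\tau$ fixe $n$, gr\^ace \`a la caract\'erisation de la proposition \ref{tm41}, \`a la r\`egle du produit mixte et au fait que les tenseurs \'el\'ementaires engendrent $\mathbb{C}^{p^{n}\times 1}$. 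Votre lemme interm\'ediaire (avec la v\'erification que $n\notin\{i_{1},\ldots,i_{n-1}\}$, donc que $\tau$ stabilise $\{1,\ldots,n-1\}$) comble proprement le seul point que le texte omet.
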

\section{MATRICES DE COMMUTATION TENSORIELLE ET EQUATIONS MATRICIELLES    }
Les \'equations matricielles que nous allons voir dans cette section sont les \'equations matricielles de la forme  $\textbf{A}\textbf{X}=\textbf{B}$, $\textbf{A}\cdot\textbf{X}\cdot\textbf{B}=\textbf{C}$ et $\textbf{A}\cdot\textbf{X}+\textbf{X}\cdot\textbf{B}=\textbf{C}$. Pour la premi\`ere \'equation une \'etude de la m\'ethode de Cholesky sera donn\'ee. La deuxi\`eme et la troisi\`eme \'equations peut \^etre ramen\'ees \`a la premi\`ere, et c'est l\`a que nous introduirons les MCT.

\subsection{Sur la m\'ethode de Cholesky}

\noindent Les m\'ethodes directes pour r\'esoudre un syst\`eme lin\'eaire, m\'ethode d'\'elimination de Gauss, d\'ecomposition  LU et m\'ethode de Cholesky sont bien connues. Nous sommes d'accord avec certains auteurs \cite{DemidovitchMaron79,Nougier87} que la d\'ecomposition LU et la m\'ethode de Cholesky sont tr\`es utiles pour r\'esoudre plusieurs syst\`emes lin\'eaires dont la seule diff\'erence est les termes constants dans les seconds membres.

La m\'ethode d'\'elimination de Gauss avec ou sans choix de pivot peut nous conduire \`a la d\'ecomposition LU. La m\'ethode d'\'elimination de Gauss avec choix de pivot ne peut pas nous conduire \`a la m\'ethode de Cholesky car le choix de pivot peut  d\'etruire la sym\'eetrie. Cependant, quelques fois nous n'avons pas besoin de choisir le pivot.

Comme la m\'ethode d'\'elimination de Gauss peut nous conduire \`a la d\'ecomposition LU, nous pensons que c'est mieux de r\'esoudre d'abord l'une de ces \'equations par la m\'ethode d'\'elimination de Gauss et les autres par la d\'ecomposition LU.  Ainsi, nous serons permis de r\'esoudre d'abord  l'une de ces \'equation par la m\'ethode d'\'elimination de Gauss et les autres par la m\'ethode de Cholesky, dans le cas o\`u la matrice est sym\'etrique et d\'efinie positive, si la m\'ethode d'\'elimination de Gauss peut nous conduire \`a la d\'ecomposition de Cholesky.

Cette sous-section est arrang\'ee de la fa\c cons suivante. Dans le paragraphe ci-dessous nous allons pr\'esenter la d\'ecomposition LU en utilisant la m\'ethode d'\'elimination de Gauss. Dans le paragraphe suivant, nous prouverons avec l'aide de la d\'ecomposition LU que nous pouvons obtenir la d\'ecomposition de Cholesky \`a partir de la m\'ethode d'\'elimination de Gauss sans choix de pivot. Enfin, un exemple pour rendre plus claire la m\'ethode sera pr\'esent\'e.

\paragraph{Elimination de Gauss}

\noindent Consid\'erons le syst\`eme de $n$ \'equations \`a $n$ inconnues suivant:\\

$\left\{\begin{array}{clrrrrrr} %
a_{11}x_1+a_{12}x_2+\ldots+a_{1n}x_n & =  b_1\\
a_{21}x_1+a_{22}x_2+\ldots+a_{2n}x_n & =  b_2\\
\ldots\ldots\ldots\ldots\ldots\ldots\ldots\ldots\ldots\\
a_{i1}x_1+a_{i2}x_2+\ldots+a_{in}x_n & =  b_i\\
\ldots\ldots\ldots\ldots\ldots\ldots\ldots\ldots\ldots\\
a_{n1}x_1+a_{n2}x_2+\ldots+a_{nn}x_n & =  b_n
\end{array}\right.$\\
\noindent qui peut s'\'ecrire sous forme matricielle
\begin{equation}\label{eqn1}
\textbf{A}\textbf{X}=\textbf{B}
\end{equation}\\
o\`u\\
$\textbf{A}=\left(a_{ij}\right)_{1\leq i,j\leq n}$,
$\textbf{X}=\begin{pmatrix}
x_1\\
	x_2\\
	\vdots\\
	x_n
\end{pmatrix}$,
$\textbf{B}=\begin{pmatrix}
	b_1\\
	b_2\\
	\vdots\\
	b_n
\end{pmatrix}$
\\
\noindent Si $a_{11}\neq0$, l'\'elimination de Gauss sur la premi\`ere colonne s'\'ecrit\\

$\left\{\begin{array}{clrrrrrr} %
a_{11}^{(0)}x_1 & + & a_{12}^{(0)}x_2 & + & \ldots & + & a_{1n}^{(0)}x_n & =  b_{1}^{(0)}\\
\ & \ & a_{22}^{(1)}x_2 & + & \ldots & + & a_{2n}^{(1)}x_n & =  b_{2}^{(1)}\\
\ & \  & \ldots & \ldots & \ldots & \ldots & \ldots & \ldots  \ldots\\
\ & \ & a_{i2}^{(1)}x_2 & + & \ldots & + & a_{in}^{(1)}x_n & =  b_{i}^{(1)}\\
\ & \ & \ldots & \ldots & \ldots & \ldots & \ldots & \ldots  \ldots\\
\ & \ & a_{n2}^{(1)}x_2 & + & \ldots & + & a_{nn}^{(1)}x_n & =  b_{n}^{(1)}
\end{array}\right.$

\noindent avec $a_{1j}^{(0)}=a_{1j}$ et $a_{ij}^{(1)}=-\frac{a_{i1}a_{1j}-a_{11}a_{ij}}{a_{11}}$,\\

 \noindent qui peut s'\'ecrire, sous forme matricielle
 \begin{equation}\label{eqn2}
\textbf{A}_1\textbf{X}=\textbf{B}_1
\end{equation}\\
et peut \^etre obtenue en multipliant (\ref{eqn1}) par la matrice triangulaire inf\'erieure \cite{KincaidCheney99,Steven04} \\
\begin{equation}\nonumber
\textbf{G}_1=\left(\begin{array}{clrrrrr} %
	1 & 0 & 0  & \ldots & 0\\
	\  & \  & \   & \  & \ \\
	-\frac{a_{21}^{(0)}}{a_{11}^{(0)}} & 1 & 0 & \ldots & 0\\
	-\frac{a_{31}^{(0)}}{a_{11}^{(0)}} & 0 & 1 & \ldots & 0\\
	\vdots & \vdots & \vdots & \ddots & \vdots\\
-\frac{a_{n1}^{(0)}}{a_{11}^{(0)}} & 0 & 0 & \ldots & 1	
\end{array}\right)
\end{equation}

 $\textbf{A}_1=\textbf{G}_1\textbf{A}$ et $\textbf{B}_1=\textbf{G}_1\textbf{B}$.\\

 L'\'elimination de Gauss sur la deuxi\`eme colonne peut \^etre obtenue par multiplication \`a la relation (\ref{eqn2}) la matrice triangulaire inf\'erieure\\
\begin{equation}\nonumber
 \textbf{G}_2=\left(\begin{array}{clrrrrrr} %
	1 & 0 & 0  & 0 & \ldots & 0\\
	0 & 1 & 0  & 0 & \ldots & 0\\
	\  & \  & \   & \  & \  & \ \\
	  0 & -\frac{a_{32}^{(1)}}{a_{22}^{(1)}} & 1 & 0 & \ldots & 0\\
0 & -\frac{a_{42}^{(1)}}{a_{22}^{(1)}}& 0 & 1 & \ldots & 0\\
	\vdots & \vdots & \vdots & \vdots & \ddots & \vdots\\
0 & -\frac{a_{n2}^{(1)}}{a_{22}^{(1)}} & 0 & 0 & \ldots & 1	
\end{array}\right)
\end{equation}\\

L'\'equation matricielle (\ref{eqn2}) devient
\begin{equation}\nonumber
\textbf{A}_2\textbf{X}=\textbf{B}_2
\end{equation}\\
avec $\textbf{A}_2=\textbf{G}_2\textbf{G}_1\textbf{A}$ et $\textbf{B}_2=\textbf{G}_2\textbf{G}_1\textbf{B}$.\\

En continuant ainsi, nous avons finalement,
\begin{equation}\label{eqn3}
\textbf{U}\textbf{X}=\textbf{B}'
\end{equation}\\
avec $\textbf{U}=\textbf{G}_{n-1}\ldots \textbf{G}_2\textbf{G}_1\textbf{U}$ une matrice triangulaire sup\'erieure et  $\textbf{B}'=\textbf{G}_{n-1}\ldots \textbf{G}_2\textbf{G}_1\textbf{B}$.\\
Nous pouvons v\'erifier facilement que l'inverse de $\textbf{G}_l$ est \\
\begin{equation}\nonumber
\textbf{G}^{-1}_l=\left(\begin{array}{clrrrrrrr} %
	1 & \ & 0 & \ldots   & 0 &  \ & \ldots &  \ & 0\\
	\  & \ & \  & \   & \  & \  & \ & \ & \ \\	
	0 & \ & 1 & \ldots   & 0 & \ & \ldots & \ & 0\\
	\  & \ & \  & \   & \  & \  & \ & \ & \ \\	
		\vdots & \ & \vdots & \ddots  & \vdots & \  & \ & \ & \vdots\\
		\  & \ & \  & \   & \  & \  & \ & \ & \ \\		
	  0 & \ & 0 & \ldots & 1 & \ & \ldots & \ & 0\\
	  \  & \ & \  & \   & \  & \  & \ & \ & \ \\	
0 & \ & 0 & \ldots & \frac{a_{(l+1)l}^{(l-1)}}{a_{ll}^{(l-1)}} & \ & \ldots & \ & 0\\
	\vdots & \ & \vdots & \  & \vdots & \ & \ddots & \ & \vdots\\
0 & \ & 0 & \ldots & \frac{a_{nl}^{(l-1)}}{a_{ll}^{(l-1)}} & \ & \ldots & \ & 1\\
\end{array}\right)
\end{equation}
Donc $\textbf{G}_{n-1}\textbf{G}_{n-2}\cdots \textbf{G}_2\textbf{G}_1$ est une matrice inversible et l'\'equation (\ref{eqn3}) devient
\begin{equation}\nonumber
\textbf{L}\textbf{U}\textbf{X}=\textbf{B}
\end{equation}\\
\noindent avec $\textbf{L}=\textbf{G}^{-1}_1\textbf{G}^{-1}_2\cdots \textbf{G}^{-1}_{n-2}\textbf{G}^{-1}_{n-1}$ et nous avons la d\'ecomposition de $\textbf{A}$ comme produit de matrice triangulaire inf\'erieure par une matrice triangulaire sup\'erieure
\begin{equation}\nonumber
\textbf{A}=\textbf{L}\textbf{U}
\end{equation}\\
On peut v\'erifier facilement que
\begin{equation}\nonumber
\textbf{L}=\left(\begin{array}{clrrrrrrr} %
	1 & \ & 0 & \ldots   & 0 &  \ & \ldots &  \ & 0\\
	\  & \ & \  & \   & \  & \  & \ & \ & \ \\	
	\frac{a_{21}^{(0)}}{a_{11}^{(0)}} & \ & 1 & \ldots   & 0 & \ & \ldots & \ & 0\\
	\  & \ & \  & \   & \  & \  & \ & \ & \ \\	
		\frac{a_{31}^{(0)}}{a_{11}^{(0)}} & \ & \frac{a_{32}^{(1)}}{a_{22}^{(1)}} & \ddots  & \vdots & \  & \ & \ & \vdots\\
		\  & \ & \  & \   & \  & \  & \ & \ & \ \\		
	  \vdots & \ & \vdots & \ldots & 1 & \ & \ldots & \ & 0\\
	  \  & \ & \  & \   & \  & \  & \ & \ & \ \\	
\frac{a_{(l+1)1}^{(0)}}{a_{11}^{(0)}} & \ & \frac{a_{(l+1)2}^{(1)}}{a_{22}^{(1)}} & \ldots & \frac{a_{(l+1)l}^{(l-1)}}{a_{ll}^{(l-1)}} & \ & \ddots & \ & 0\\
	\vdots & \ & \vdots & \  & \vdots & \ & \  & \ & \vdots\\
\frac{a_{n1}^{(0)}}{a_{11}^{(0)}} & \ & \frac{a_{(n)2}^{(1)}}{a_{22}^{(1)}} & \ldots & \frac{a_{nl}^{(l-1)}}{a_{ll}^{(l-1)}} & \ & \ldots & \ & 1\\
\end{array}\right)
\end{equation}\\
\paragraph{M\'ethode de Cholesky}
Maintenant, supposons $\textbf{A}$ est une matrice sym\'etrique et d\'efinie positive. Alors, la m\'ethode de Cholesky consiste \`a d\'ecomposer $\textbf{A}$ comme le produit
\begin{equation}\nonumber
\textbf{A}=\textbf{G}^T\textbf{G}
\end{equation}\\
avec $\textbf{G}$ est une matrice triangulaire sup\'erieure et $\textbf{G}^T$ son transpos\'ee.

Laissons nous d'abord g\'en\'eraliser cette d\'ecomposition.
\begin{definition}
Soit $\textbf{A}=\left(a_{ij}\right)_{1\leq i,j\leq n}$ une matrice complexe sym\'etrique. Nous appelons matrice de Gauss de  $\textbf{A}$ la matrice triangulaire sup\'erieure $\textbf{U}(\textbf{A})$, obtenue en transformant $\textbf{A}$ par l'\'elimination de Gauss ci-dessus.
\end{definition}
\begin{proposition}\label{prop}
Soit $\textbf{A}=\left(a_{ij}\right)_{1\leq i,j\leq n}$ une matrice complexe sym\'etrique, telle que $det(\textbf{A})\neq 0$,
\begin{equation}\nonumber
\textbf{U}(\textbf{A})=\left(\begin{array}{clrrrrr} %
	u_{11} & u_{12} & \ldots & u_{1n}\\
	
	0 & u_{22} & \ldots & u_{2n}\\
	
	\vdots & \vdots & \ddots &  \vdots\\
0 & 0 & \ldots & u_{nn}	
\end{array}\right)
\end{equation}\\
la matrice de Gauss de $\textbf{A}$. Alors $\textbf{A}$ peut \^etre d\'ecompos\'ee comme le produit $\textbf{A}=\textbf{G}^T\textbf{G}$ avec
 \begin{equation}\nonumber
\textbf{G}=\left(\begin{array}{clrrrrr} %
	\frac{u_{11}}{\sqrt{u_{11}}} & \frac{u_{12}}{\sqrt{u_{11}}} & \ldots & \frac{u_{1n}}{\sqrt{u_{11}}}\\
	
	0 & \frac{u_{22}}{\sqrt{u_{22}}} & \ldots & \frac{u_{2n}}{\sqrt{u_{22}}}\\
	
	\vdots & \vdots & \ddots &  \vdots\\
0 & 0 & \ldots & \frac{u_{nn}}{\sqrt{u_{nn}}}	
\end{array}\right)
\end{equation}\\
o\`u $\sqrt{u_{ii}}$ une racine carr\'ee du nombre complexe $u_{ii}$.
\end{proposition}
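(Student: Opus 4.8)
The plan is to reduce the claimed factorization to the classical symmetric $LDL^{T}$ decomposition, which in turn follows from the $LU$ decomposition already produced by the Gaussian elimination described above. First I would record the output of that elimination: $\textbf{A} = \textbf{L}\,\textbf{U}(\textbf{A})$, where $\textbf{L}$ is the unit lower triangular matrix whose subdiagonal entries are the multipliers $a_{il}^{(l-1)}/a_{ll}^{(l-1)}$ and $\textbf{U}(\textbf{A})=(u_{ij})$ is the Gauss matrix. Since $\det(\textbf{L})=1$, we have $\det(\textbf{A})=\prod_{i}u_{ii}\neq 0$, so every pivot $u_{ii}$ is nonzero and each complex square root $\sqrt{u_{ii}}$ is meaningful; this is exactly where the hypothesis $\det(\textbf{A})\neq 0$ is used. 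I then split off the diagonal by writing $\textbf{D}=\mathrm{diag}(u_{11},\dots,u_{nn})$ and $\textbf{M}=\textbf{D}^{-1}\textbf{U}(\textbf{A})$, so that $\textbf{M}$ is unit upper triangular with entries $M_{ij}=u_{ij}/u_{ii}$ and $\textbf{A}=\textbf{L}\,\textbf{D}\,\textbf{M}$.

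The heart of the argument is to show $\textbf{M}=\textbf{L}^{T}$, that is, that the symmetry of $\textbf{A}$ forces the upper and lower triangular factors to be transposes of one another. I would obtain this from uniqueness of the $LDU$ factorization: since $\textbf{A}$ is invertible, the representation $\textbf{A}=\textbf{L}\,\textbf{D}\,\textbf{M}$ with $\textbf{L}$ unit lower triangular, $\textbf{M}$ unit upper triangular and $\textbf{D}$ diagonal is unique. Transposing, $\textbf{A}=\textbf{A}^{T}=\textbf{M}^{T}\textbf{D}\,\textbf{L}^{T}$ is a second such representation, with $\textbf{M}^{T}$ unit lower triangular and $\textbf{L}^{T}$ unit upper triangular, whence $\textbf{L}=\textbf{M}^{T}$ and $\textbf{M}=\textbf{L}^{T}$. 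Therefore $\textbf{U}(\textbf{A})=\textbf{D}\,\textbf{L}^{T}$ and $\textbf{A}=\textbf{L}\,\textbf{D}\,\textbf{L}^{T}$. An alternative, more self-contained route avoids citing uniqueness: one checks by induction on the elimination steps that each trailing submatrix stays symmetric, because the update $a_{ij}^{(1)}=a_{ij}-a_{i1}a_{1j}/a_{11}$ is symmetric in $i,j$, and this directly yields $\ell_{ji}=u_{ij}/u_{ii}$.

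Finally I would take the diagonal square root. Setting $\textbf{D}^{1/2}=\mathrm{diag}(\sqrt{u_{11}},\dots,\sqrt{u_{nn}})$ we have $\textbf{D}=\textbf{D}^{1/2}\textbf{D}^{1/2}$ and $(\textbf{D}^{1/2})^{T}=\textbf{D}^{1/2}$, so defining $\textbf{G}=\textbf{D}^{1/2}\textbf{L}^{T}$ gives
\[
\textbf{G}^{T}\textbf{G}=\textbf{L}\,\textbf{D}^{1/2}\textbf{D}^{1/2}\textbf{L}^{T}=\textbf{L}\,\textbf{D}\,\textbf{L}^{T}=\textbf{A}.
\]
It remains only to identify the entries of $\textbf{G}$ with those in the statement: since $\textbf{L}^{T}=\textbf{M}$ has entries $u_{ij}/u_{ii}$, we get $G_{ij}=\sqrt{u_{ii}}\cdot u_{ij}/u_{ii}=u_{ij}/\sqrt{u_{ii}}$, which is precisely the displayed matrix $\textbf{G}$. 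The one genuine obstacle is the symmetric $LDL^{T}$ step; everything else is bookkeeping. A minor caveat I would flag is that the argument presupposes, as the statement implicitly does, that Gaussian elimination runs to completion without pivoting, equivalently that the leading principal minors of $\textbf{A}$ are nonzero, which is what makes the Gauss matrix $\textbf{U}(\textbf{A})$ well defined.
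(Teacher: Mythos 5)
Your proof is correct and follows essentially the same route as the paper's: factor $\textbf{A}=\textbf{L}\,\textbf{U}(\textbf{A})$ by Gaussian elimination, rescale by $\mathrm{diag}\left(\sqrt{u_{11}},\dots,\sqrt{u_{nn}}\right)$, and use the symmetry of $\textbf{A}$ to identify the lower-triangular factor with $\textbf{G}^{T}$. The only difference is one of completeness: you actually justify the key identity $\textbf{L}\,\textbf{D}^{-1}=\textbf{G}^{T}$ (equivalently $\ell_{ji}=u_{ij}/u_{ii}$) via uniqueness of the $LDU$ factorization, with an inductive symmetry argument as backup, whereas the paper asserts it in a single line (``Comme $A$ est sym\'etrique, d'o\`u $\textbf{L}(\textbf{A})\textbf{D}^{-1}=\textbf{G}^T$''), and your caveat that elimination without pivoting requires nonzero leading principal minors is a precision the paper leaves implicit.
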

\begin{proof}
\begin{equation}\nonumber
\textbf{U}(\textbf{A})=\left(\begin{array}{clrrrrr} %
	a^{(0)}_{11} & a^{(0)}_{12} & \ldots & a^{(0)}_{1n}\\
	
	0 & a^{(1)}_{22} & \ldots & a^{(1)}_{2n}\\
	
	\vdots & \vdots & \ddots &  \vdots\\
0 & 0 & \ldots & a^{(n-1)}_{nn}	
\end{array}\right)
\end{equation}\\
\begin{equation}\nonumber
\textbf{L}(\textbf{A})=\left(\begin{array}{clrrrrrrr} %
	1 & \ & 0 & \ldots   & 0 &  \ & \ldots &  \ & 0\\
	\  & \ & \  & \   & \  & \  & \ & \ & \ \\	
	\frac{a_{21}^{(0)}}{a_{11}^{(0)}} & \ & 1 & \ldots   & 0 & \ & \ldots & \ & 0\\
	\  & \ & \  & \   & \  & \  & \ & \ & \ \\	
		\frac{a_{31}^{(0)}}{a_{11}^{(0)}} & \ & \frac{a_{32}^{(1)}}{a_{22}^{(1)}} & \ddots  & \vdots & \  & \ & \ & \vdots\\
		\  & \ & \  & \   & \  & \  & \ & \ & \ \\		
	  \vdots & \ & \vdots & \ldots & 1 & \ & \ldots & \ & 0\\
	  \  & \ & \  & \   & \  & \  & \ & \ & \ \\	
\frac{a_{(l+1)1}^{(0)}}{a_{11}^{(0)}} & \ & \frac{a_{(l+1)2}^{(1)}}{a_{22}^{(1)}} & \ldots & \frac{a_{(l+1)l}^{(l-1)}}{a_{ll}^{(l-1)}} & \ & \ddots & \ & 0\\
	\vdots & \ & \vdots & \  & \vdots & \ & \  & \ & \vdots\\
\frac{a_{n1}^{(0)}}{a_{11}^{(0)}} & \ & \frac{a_{(n)2}^{(1)}}{a_{22}^{(1)}} & \ldots & \frac{a_{nl}^{(l-1)}}{a_{ll}^{(l-1)}} & \ & \ldots & \ & 1\\
\end{array}\right)
\end{equation}\\
\begin{equation}\nonumber
\textbf{A}=\textbf{L}(\textbf{A})\textbf{D}^{-1}\textbf{D}\textbf{U}(\textbf{A})
\end{equation}\\
avec
\begin{equation}\nonumber
\textbf{D}=\left(\begin{array}{clrrrrr} %
	\frac{1}{\sqrt{a^{(0)}_{11}}} & 0 & \ldots & 0\\
	
	0 & \frac{1}{\sqrt{a^{(1)}_{22}}} & \ldots & 0\\
	
	\vdots & \vdots & \ddots &  \vdots\\
0 & 0 & \ldots & \frac{1}{\sqrt{a^{(n-1)}_{nn}}}	
\end{array}\right)
\end{equation}\\
Soit
$\textbf{G}=\textbf{D}\textbf{U}(\textbf{A})$. Comme $A$ est sym\'etrique, d'o\`u $\textbf{L}(\textbf{A})\textbf{D}^{-1}=\textbf{G}^T$.

\end{proof}
\begin{example}
Consid\'erons les deux syst\`emes d'quations lin\'eaires suivants\\

\begin{equation}\nonumber
\left\{\begin{array}{clrrrrrr} %
x_1  & - & x_2 & \ & \ & + & x_4 & =  3\\
-x_1 & + & 5x_2 & + & 2x_3 & - & 3x_4 & =  -5\\
\ & \ & 2x_2 & + & 5x_2 & + & x_4 & =  -7\\
x_1 & - & 3x_2 & + & x_3 & + & 4x_4 & =  2
\end{array}\right.
\end{equation}

\begin{equation}\nonumber
\left\{\begin{array}{clrrrrrr} %
x_1  & - & x_2 & \ & \ & + & x_4 & =  3\\
-x_1 & + & 5x_2 & + & 2x_3 & - & 3x_4 & =  1\\
\ & \ & 2x_2 & + & 5x_2 & + & x_4 & =  2\\
x_1 & - & 3x_2 & + & x_3 & + & 4x_4 & =  2
\end{array}\right.
\end{equation}
dont la seule diff\'erence est les seconds membres et leur matrice est sym\'etrique.
Ainsi, r\'esolvons le premier par la m\'ethode d'\'elimination de Gauss et le second par la d\'ecomposition LU ou la m\'ethode de Cholesky.
\begin{equation}\nonumber
\left(\begin{array}{clrrrr} %
1  & 0  & 0 & 0 \\
1 & 1 & 0 & 0 \\
0 & 0 & 1 & 0\\
-1 & 0 & 0 & 1
\end{array}\right)\left(\begin{array}{clrrrr} %
1  & -1  & 0 & 1 \\
-1 & 5 & 2 & -3 \\
0 & 2 & 5 & 1\\
1 & -3 & 1 & 4
\end{array}\right)\left(\begin{array}{clr} %
x_1   \\
x_2  \\
x_3 \\
x_4
\end{array}\right)=\left(\begin{array}{clrrrr} %
1  & 0  & 0 & 0 \\
1 & 1 & 0 & 0 \\
0 & 0 & 1 & 0\\
-1 & 0 & 0 & 1
\end{array}\right)\left(\begin{array}{clr} %
3   \\
-5  \\
-7 \\
2
\end{array}\right)
\end{equation}
\begin{equation}\nonumber
\left(\begin{array}{clrrrr} %
1  & -1  & 0 & 1 \\
0 & 4 & 2 & -2 \\
0 & 2 & 5 & 1\\
0 & -2 & 1 & 3
\end{array}\right)\left(\begin{array}{clr} %
x_1   \\
x_2  \\
x_3 \\
x_4
\end{array}\right)=\left(\begin{array}{clr} %
3   \\
-2  \\
-7 \\
-1
\end{array}\right)
\end{equation}
\begin{equation}\nonumber
\left(\begin{array}{clrrrr} %
1  & 0  & 0 & 0 \\
0 & 1 & 0 & 0 \\
0 & -\frac{1}{2} & 1 & 0\\
0 & \frac{1}{2} & 0 & 1
\end{array}\right)\left(\begin{array}{clrrrr} %
1  & -1  & 0 & 1 \\
0 & 4 & 2 & -2 \\
0 & 2 & 5 & 1\\
0 & -2 & 1 & 3
\end{array}\right)\left(\begin{array}{clr} %
x_1   \\
x_2  \\
x_3 \\
x_4
\end{array}\right)=\left(\begin{array}{clrrrr} %
1  & 0  & 0 & 0 \\
0 & 1 & 0 & 0 \\
0 & -\frac{1}{2} & 1 & 0\\
0 & \frac{1}{2} & 0 & 1
\end{array}\right)\left(\begin{array}{clr} %
3   \\
-2  \\
-7 \\
-1
\end{array}\right)
\end{equation}
\begin{equation}\nonumber
\left(\begin{array}{clrrrr} %
1  & -1  & 0 & 1 \\
0 & 4 & 2 & -2 \\
0 & 0 & 4 & 2\\
0 & 0 & 2 & 2
\end{array}\right)\left(\begin{array}{clr} %
x_1   \\
x_2  \\
x_3 \\
x_4
\end{array}\right)=\left(\begin{array}{clr} %
3   \\
-2  \\
-6 \\
-2
\end{array}\right)
\end{equation}
\begin{equation}\nonumber
\left(\begin{array}{clrrrr} %
1  & 0  & 0 & 0 \\
0 & 1 & 0 & 0 \\
0 & 0 & 1 & 0\\
0 & 0 & -\frac{1}{2} & 1
\end{array}\right)\left(\begin{array}{clrrrr} %
1  & -1  & 0 & 1 \\
0 & 4 & 2 & -2 \\
0 & 0 & 4 & 2\\
0 & 0 & 2 & 2
\end{array}\right)\left(\begin{array}{clr} %
x_1   \\
x_2  \\
x_3 \\
x_4
\end{array}\right)=\left(\begin{array}{clrrrr} %
1  & 0  & 0 & 0 \\
0 & 1 & 0 & 0 \\
0 & 0 & 1 & 0\\
0 & 0 & -\frac{1}{2} & 1
\end{array}\right)\left(\begin{array}{clr} %
3   \\
-2  \\
-6 \\
-2
\end{array}\right)
\end{equation}
\begin{equation}\nonumber
\left(\begin{array}{clrrrr} %
1  & -1  & 0 & 1 \\
0 & 4 & 2 & -2 \\
0 & 0 & 4 & 2\\
0 & 0 & 0 & 1
\end{array}\right)\left(\begin{array}{clr} %
x_1   \\
x_2  \\
x_3 \\
x_4
\end{array}\right)=\left(\begin{array}{clr} %
3   \\
-2  \\
-6 \\
1
\end{array}\right)
\end{equation}
$x_4=1$, $4x_3+2=-6$, $x_3=-2$, $4x_2-4-2=-2$, $x_2=1$, $x_1-1+1=3$, $x_1=3$

Maintenant, passons au deuxi\`eme syst\`eme, qui peut s'\'ecrire, d'apr\`es la Proposition \ref{prop}, $\textbf{G}^T\textbf{G}\textbf{X}=\textbf{B}$ et \^etre r\'esolu en \'ecrivant $\left\{\begin{array}{clrr} %
\textbf{G}^T\textbf{Y} &=\textbf{B} \\
\textbf{G}\textbf{X} &=\textbf{Y}
\end{array}\right.$

\begin{equation}\nonumber
\left(\begin{array}{clrrrr} %
1  & 0  & 0 & 0 \\
-1 & 2 & 0 & 0 \\
0 & 1 & 2 & 0\\
1 & -1 & 1 & 1
\end{array}\right)\left(\begin{array}{clr} %
y_1   \\
y_2  \\
y_3 \\
y_4
\end{array}\right)=\left(\begin{array}{clr} %
3   \\
1  \\
2 \\
2
\end{array}\right)
\end{equation}
$y_1=3$, $3+2y_2=1$, $y_2=2$, $2+2y_3=2$, $y_3=0$, $3-2+y_4=2$, $y_4=1$
\begin{equation}\nonumber
\left(\begin{array}{clrrrr} %
1  & -1  & 0 & 1 \\
0 & 2 & 1 & -1 \\
0 & 0 & 2 & 1\\
0 & 0 & 0 & 1
\end{array}\right)\left(\begin{array}{clr} %
x_1   \\
x_2  \\
x_3 \\
x_4
\end{array}\right)=\left(\begin{array}{clr} %
3   \\
2  \\
0 \\
1
\end{array}\right)
\end{equation}
$x_4=1$, $2x_3+1=0$, $x_3=-\frac{1}{2}$, $2x_2-\frac{1}{2}-1=2$, $x_2=\frac{7}{4}$, $x_1-\frac{7}{4}+1=3$, $x_1=\frac{15}{4}$
\end{example}
\subsection{Matrices de commutation tensorielle et transform\'ees d'une \'equation matricielle}
Dans cette sous-section, les transform\'ees de certaines \'equations matricielles, aux \'equations matricielles de la forme (\ref{eqn1}),  se transforment l'une \`a l'autre \`a aide d'une MCT. 

\paragraph{$\textbf{A}\cdot\textbf{X}\cdot\textbf{B}=\textbf{C}$}
Soient $\textbf{A}$, $\textbf{B}$, et $\textbf{C}$, matrices $m\times n$, $p\times q$ et $m\times q$, respectivement. Consid\'erons l'\'equation matricielle $\textbf{A}\cdot\textbf{X}\cdot\textbf{B}=\textbf{C}$, par rapport \`a $\textbf{X}$,  matrice $n\times q$. Cette \'equation peut se transformer au syst\`eme d'\'equations lin\'eaires dont l'\'equations matricielle est \cite{Ikramov77}
\begin{equation}\label{equ1}
\left(\textbf{A}\otimes \textbf{B}^T\right)\cdot L\left(\textbf{X}\right)=L\left(\textbf{C}\right)
\end{equation}
 ou 
 \begin{equation}\label{equ2}
  \left(\textbf{B}^T\otimes \textbf{A}\right)\cdot L\left(\textbf{X}^T\right)=L\left(\textbf{C}^T\right)
\end{equation}  
   L'\'equation $(\ref{equ2})$ s'obtient en multipliant l'\'equation $(\ref{equ1})$ par la MCT $\textbf{U}_{m\otimes q}$ et en utilisant la remarque $\ref{rmk}$.\\
    R\'eciproquement, l'\'equation $(\ref{equ1})$ s'obtient en multipliant l'\'equation $(\ref{equ2})$ par la MCT $\textbf{U}_{q\otimes m}$.
 
 \paragraph{$\textbf{A}\cdot\textbf{X}+\textbf{X}\cdot\textbf{B}=\textbf{C}$}
 L'\'equation matricielle $\textbf{A}\cdot\textbf{X}+\textbf{X}\cdot\textbf{B}=\textbf{C}$, o\`u $\textbf{A}$ est une matrice $m\times m$, $\textbf{B}$ est une matrice $n\times n$ et $\textbf{C}$ est une matrice $m\times n$, peut se transformer au syst\`eme d'\'equations lin\'eaires dont l'\'equations matricielle est \cite{Ikramov77}
 \begin{equation}\label{equ3}
 \left(\textbf{A}\otimes \textbf{I}_n+\textbf{I}_m\otimes\textbf{B}^T\right)\cdot L\left(\textbf{X}\right)=L\left(\textbf{C}\right)
\end{equation}
o\`u $\textbf{I}_n$ est la matrice unit\'e $n\times n$, ou 
\begin{equation}\label{equ4}
\left(\textbf{I}_n\otimes\textbf{A} +\textbf{B}^T\otimes\textbf{I}_m\right)\cdot L\left(\textbf{X}^T\right)=L\left(\textbf{C}^T\right)
\end{equation}
L'\'equation $(\ref{equ4})$ s'obtient en multipliant l'\'equation $(\ref{equ3})$ par la MCT $\textbf{U}_{m\otimes n}$.\\
R\'eciproquement, l'\'equation $(\ref{equ3})$ s'obtient en multipliant l'\'equation $(\ref{equ4})$ par la MCT $\textbf{U}_{n\otimes m}$.
 \chapter{VERS UNE APPLICATION EN PHYSIQUE DES PARTICULES}
Ce chapitre est bas\'e sur Refs. \cite{Rakotonirina07,Rakotonirina08,Rakotonirina12,Rakotonirina13}    

\section{MATRICES DE PERMUTATION TENSORIELLE ET MATRICES DE GELL-MANN    }

  \subsection{MCT en termes de matrices de Gell-Mann g\'en\'eralis\'ees}
Soit $n\in\mathbb{N}$, $n\geq2$. Les matrices de Gell-Mann g\'en\'eralis\'ees ou matrices de Gell-Mann $n\times n$ sont des matrices hermitiennes et de trace nulles $\Lambda_{1}$,$\Lambda_{2}$,\ldots,$\Lambda_{n^{2}-1}$ qui satisfont la relation de commutation (Cf. par exemple \cite{Narison89})
\begin{equation}\label{e41}
\left[\Lambda_{a},\Lambda_{b}\right]=2i\sum_{a=1}^{n^{2}-1}f_{abc}\Lambda_{c}
\end{equation}
o\`u $f_{abc}$ sont les constantes de structure qui sont r\'eelles et totalement antisym\'etriques, et
\begin{equation*}
Tr\left(\Lambda_{a},\Lambda_{b}\right)=2\delta_{ab}
\end{equation*}
avec $\delta_{ab}$ le symbole de Kronecker.\\
Pour $n=2$, les matrices de Gell-Mann $2\times2$ sont les matrices de Pauli habituelles. Pour $n=3$, elles correspondent aux huit matrices de Gell-Mann $3\times3$, qui se construisent de la fa\c con suivante: les trois premi\`eres sont des matrices $3\times3$ obtenues en ajoutant aux trois matrices de Pauli troisi\`eme ligne et troisi\`eme colonne form\'ees de 0, \`a savoir\\
 $\lambda_{1}=\left(
                         \begin{array}{ccc}
                           0 & 1 & 0 \\
                           1 & 0 & 0 \\
                           0 & 0 & 0 \\
                         \end{array}
                       \right)$
, $\lambda_{2}=\left(
                         \begin{array}{ccc}
                           0 & -i & 0 \\
                           i & 0 & 0 \\
                           0 & 0 & 0 \\
                         \end{array}
                       \right)$
, $\lambda_{3}=\left(
                         \begin{array}{ccc}
                           1 & 0 & 0 \\
                           0 & -1 & 0 \\
                           0 & 0 & 0 \\
                         \end{array}
                       \right)$\\
Les deux secondes matrices $3\times3$  sont obtenues en ajoutant aux deux matrices Pauli non-diagonales deuxi\`eme ligne et deuxi\`eme colonne form\'ees de 0, \`a savoir\\
$\lambda_{4}=\left(
                         \begin{array}{ccc}
                           0 & 0 & 1 \\
                           0 & 0 & 0 \\
                           1 & 0 & 0 \\
                         \end{array}
                       \right)$,
$\lambda_{5}=\left(
                         \begin{array}{ccc}
                           0 & 0 & -i \\
                           0 & 0 & 0 \\
                           i & 0 & 0 \\
                         \end{array}
                       \right)$\\
Les deux troisi\`eme matrices $3\times3$  sont obtenues en ajoutant aux deux matrices Pauli non-diagonales premi\`ere ligne et premi\`ere colonne form\'ees de 0, \`a savoir\\
$\lambda_{6}=\left(
                         \begin{array}{ccc}
                           0 & 0 & 0 \\
                           0 & 0 & 1 \\
                           0 & 1 & 0 \\
                         \end{array}
                       \right)$,
$\lambda_{7}=\left(
                         \begin{array}{ccc}
                           0 & 0 & 0 \\
                           0 & 0 & -i \\
                           0 & i & 0 \\
                         \end{array}
                       \right)$\\
Et finalement, la d\'erni\`ere matrice est une matrice diagonale qui est hermitienne, trace nulle avec\\
\begin{equation*}
Tr\left({\lambda_{8}}^{2}\right)=2
\end{equation*}
\`a savoir\\
$\lambda_{8}=\frac{1}{\sqrt{3}}\left(
                                 \begin{array}{ccc}
                                   1 & 0 & 0 \\
                                   0 & 1 & 0 \\
                                   0 & 0 & -2 \\
                                 \end{array}
                               \right)$\\
De fa\c con analogue, nous pouvons construire \`a partir des matrices de Gell-Mann $(n-1)\times(n-1)$ les matrices de Gell-Mann $n\times n$. Les premi\`eres $[(n-1)^{2}-1]$ matrices de Gell-Mann $n\times n$ sont obtenues en ajoutant $n$-i\`eme ligne et $n$-i\`eme colonne form\'ees de 0 \`a chaque matrices de Gell-Mann $(n-1)\times(n-1)$. Les $(2n-2)$ matrices de Gell-Mann 
$n\times n$ sont les matrices sym\'etriques et les matrices antisym\'etriques non-diagonales suivantes \\
$\Lambda_{(n-1)^{2}}=\left(
  \begin{array}{cccccc}
    0 & 0 & \ldots & \ldots & 0 & 1 \\
    0 & 0 & &  &  & 0 \\
    \vdots &  & \ddots & &  & \vdots \\
    \vdots &  &  & \ddots &  & \vdots \\
    0 &  &  &  &  & 0 \\
    1 & 0 & \ldots & \ldots & 0 & 0 \\
  \end{array}
\right)$, $\Lambda_{(n-1)^{2}+1}=\left(
  \begin{array}{cccccc}
    0 & 0 & \ldots & \ldots & 0 & -i \\
    0 & 0 & &  &  & 0 \\
    \vdots &  & \ddots & &  & \vdots \\
    \vdots &  &  & \ddots &  & \vdots \\
    0 &  &  &  &  & 0 \\
    i & 0 & \ldots & \ldots & 0 & 0 \\
  \end{array}
\right)$, $\Lambda_{(n-1)^{2}+2}=\left(
  \begin{array}{cccccc}
    0 & 0 & \ldots & \ldots & 0 & 0 \\
    0 & 0 & &  &  & 1 \\
    \vdots &  & \ddots & &  & 0 \\
    0 &  &  & \ddots &  & \vdots \\
    0 &  &  &  &  & 0 \\
    0 & 1 & \ldots & \ldots & 0 & 0 \\
  \end{array}
\right)$, $\Lambda_{(n-1)^{2}+3}=\left(
  \begin{array}{cccccc}
    0 & 0 & \ldots & \ldots & 0 & 0 \\
    0 & 0 & &  &  & -i \\
    \vdots &  & \ddots & &  & 0 \\
    0 &  &  & \ddots &  & \vdots \\
    0 &  &  &  &  & 0 \\
    0 & i & \ldots & \ldots & 0 & 0 \\
  \end{array}
\right)$,\\

 ...,\\

 $\Lambda_{n^{2}-3}=\left(
  \begin{array}{cccccc}
    0 & 0 & \ldots & \ldots & 0 & 0 \\
    \vdots & 0 & &  &  & \vdots \\
    \vdots &  & \ddots & &  & \vdots \\
    \vdots &  &  & \ddots &  & 0 \\
     &  &  &  & 0 & 1 \\
    0 & \ldots & \ldots & \ldots & 1 & 0 \\
  \end{array}
\right)$, $\Lambda_{n^{2}-2}=\left(
  \begin{array}{cccccc}
    0 & 0 & \ldots & \ldots & 0 & 0 \\
    \vdots & 0 & &  &  & \vdots \\
    \vdots &  & \ddots & &  & \vdots \\
    \vdots &  &  & \ddots &  & 0 \\
     &  &  &  & 0 & -i \\
    0 & \ldots & \ldots & \ldots & i & 0 \\
  \end{array}
\right)$\\
et finalement, la d\'erni\`ere matrice est une matrice diagonale, hermitienne et de trace nulle avec
\begin{equation*}
Tr\left({\Lambda_{n^{2}-1}}^{2}\right)=2
\end{equation*}
\`a savoir\\
$\Lambda_{n^{2}-1}=\frac{1}{\sqrt{C_{n}^{2}}}\left(
                                               \begin{array}{cccccc}
                                                 1 & 0 & \ldots & \ldots & \ldots & 0 \\
                                                 0 & 1 &  &  &  &  \\
                                                 \vdots &  & \ddots &  &  & \vdots \\
                                                 \vdots &  &  & \ddots &  & \vdots \\
                                                  &  &  &  & 1 & 0 \\
                                                 0 & \ldots & \ldots & \ldots & 0 & -(n-1) \\
                                               \end{array}
                                             \right)$\\

Elles satisfont aussi la relation d'anticommutation (Cf. par exemple\cite{Narison89})
\begin{equation}\label{e42}
\left\{\Lambda_{a},
\Lambda_{b}\right\}=\frac{4}{n}\delta_{ab}I_{n}+2\displaystyle\sum_{c=1}^{n^{2}-1}d_{abc}\Lambda_{c}
\end{equation}
o\`u les constantes $d_{abc}$ sont r\'eelles et totalement sym\'etriques, et en utilisant les relations (\ref{e41}) et (\ref{e42}), nous avons
\begin{equation}\label{e43}
\Lambda_{a}\Lambda_{b}=\frac{2}{n}\delta_{ab}+\displaystyle\sum_{c=1}^{n^{2}-1}d_{abc}\Lambda_{c}
+i\displaystyle\sum_{c=1}^{n^{2}-1}f_{abc}\Lambda_{c}
\end{equation}
Les constantes de structure satisfont la relation (Cf. par exemple\cite{Narison89})
\begin{equation}\label{e44}
\displaystyle\sum_{e=1}^{n^{2}-1}f_{abe}f_{cde}=\frac{2}{n}\left(\delta_{ac}\delta_{bd}-\delta_{ad}\delta_{bc}\right)
+\displaystyle\sum_{e=1}^{n^{2}-1}d_{ace}d_{dbe}-\displaystyle\sum_{e=1}^{n^{2}-1}d_{ade}d_{bce}
\end{equation}\\
Pour la d\'emonstration du Th\'eor\`eme \ref{thm32} ci-dessous, notons, pour $1\leq i < j\leq n$, les $C_{n}^{2}\;=\;\frac{n!}{2!(n-2)!}$ matrices de Gell-Mann $n\times n$ qui sont sym\'etriques avec des \'el\'ements tous 0 sauf le $i$-i\`eme ligne $j$-i\`eme colonne et le $j$-i\`eme ligne  $i$-i\`eme colonne qui sont \'egaux \`a 1, par $\Lambda^{(ij)}$, les $C_{n}^{2}\;=\;\frac{n!}{2!(n-2)!}$ matrices de Gell-Mann $n\times n$ qui sont antisym\'etriques avec des \'el\'ements tous 0 sauf le $i$-i\`eme ligne $j$-i\`eme colonne qui est \'egal \`a $-i$ et le $j$-i\`eme ligne $i$-i\`eme colonne qui est \'egal \`a $i$  , par $\Lambda^{[ij]}$ et enfin, par $\Lambda^{(d)}$,$1 \leq d \leq n-1$, les $(n - 1)$ matrices de Gell-Mann $n\times n$ suivantes, qui sont diagonales:\\

$\Lambda^{(1)}\;=\;\left(
                     \begin{array}{cccccc}
                       1 & 0 &  & \ldots &  & 0 \\
                       0 & -1 &  &  &  &  \\
                        &  & 0 &  &  & \vdots \\
                       \vdots &  &  & \ddots &  &  \\
                        &  &  &  & \ddots &  \\
                       0 &  & \ldots &  &  & 0 \\
                     \end{array}
                   \right)$, $\Lambda^{(2)}\;=\;\frac{1}{\sqrt{3}}\left(
                               \begin{array}{cccccc}
                                 1 & 0 &  & \ldots &  & 0 \\
                                 0 & 1 &  &  &  &  \\
                                  &  & -2 &  &  & \vdots \\
                                 \vdots &  &  & 0 &  &  \\
                                  &  &  &  & \ddots &  \\
                                 0 &  & \ldots &  &  & 0 \\
                               \end{array}
                             \right)$,\\

                             \ldots,

                             $\Lambda^{(n-1)}\;=\;\frac{1}{\sqrt{C_{n}^{2}}}\left(
                                                                              \begin{array}{cccccc}
                                                                                1 & 0 &  & \ldots &  & 0 \\
                                                                                0 & 1 &  &  &  &  \\
                                                                                 &  & 1 &  &  & \vdots \\
                                                                                \vdots &  &  & \ddots &  &  \\
                                                                                 &  &  &  & 1 &  \\
                                                                                0 &  & \ldots &  &  & -(n-1) \\
                                                                              \end{array}
                                                                            \right)$\\

\begin{theorem}\label{thm32}
Nous avons

\begin{equation}\label{e44prime}
\textbf{U}_{n\otimes n}\;=\;\frac{1}{n}I_{n}\otimes
I_{n}+\frac{1}{2}\displaystyle\sum_{i=1}^{n^{2}-1}\Lambda_{i}\otimes\Lambda_{i}
\end{equation}
\end{theorem}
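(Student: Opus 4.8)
The plan is to verify the identity entry by entry in the ordered Kronecker basis. Writing $i_1 i_2$ for the row multi-index and $j_1 j_2$ for the column multi-index, the element formula (\ref{e61}) specialized to $p=n$ gives, for the left-hand side,
\[
\left(\textbf{U}_{n\otimes n}\right)^{i_1 i_2}_{j_1 j_2} = \delta^{i_1}_{j_2}\,\delta^{i_2}_{j_1}.
\]
On the right-hand side, using $(A\otimes B)^{i_1 i_2}_{j_1 j_2} = A^{i_1}_{j_1} B^{i_2}_{j_2}$, the term $\frac1n I_n\otimes I_n$ contributes $\frac1n\delta^{i_1}_{j_1}\delta^{i_2}_{j_2}$ and the sum contributes $\frac12\sum_{a=1}^{n^2-1}(\Lambda_a)^{i_1}_{j_1}(\Lambda_a)^{i_2}_{j_2}$. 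Thus the whole theorem is equivalent to the completeness (Fierz) relation
\[
\sum_{a=1}^{n^2-1}(\Lambda_a)^{i_1}_{j_1}(\Lambda_a)^{i_2}_{j_2} = 2\,\delta^{i_1}_{j_2}\delta^{i_2}_{j_1} - \frac{2}{n}\,\delta^{i_1}_{j_1}\delta^{i_2}_{j_2},
\]
since substituting it into the entry of the right-hand side collapses that entry to $\delta^{i_1}_{j_2}\delta^{i_2}_{j_1}$, matching the left-hand side. So this relation is the single genuine ingredient to prove.

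To establish it I would use only that the $\Lambda_a$ are Hermitian, traceless, and satisfy $\mathrm{Tr}(\Lambda_a\Lambda_b)=2\delta_{ab}$. On the real space of $n\times n$ Hermitian matrices, which has dimension $n^2$, equip $\langle A,B\rangle=\mathrm{Tr}(AB)$ (symmetric and positive definite, since $\mathrm{Tr}(A^2)=\sum_{ij}|A_{ij}|^2$ there). The $\Lambda_a$ are then mutually orthogonal by the trace relation and each is orthogonal to $I_n$ because $\mathrm{Tr}\,\Lambda_a=0$; being $n^2$ nonzero pairwise-orthogonal vectors, $\{I_n,\Lambda_1,\dots,\Lambda_{n^2-1}\}$ is an orthogonal basis, hence (by $\mathbb{C}$-linearity) a basis of all complex $n\times n$ matrices. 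The orthogonal-projection (Fourier) formula for this basis reads
\[
X = \frac{\mathrm{Tr}(X)}{n}\,I_n + \frac12\sum_{a=1}^{n^2-1}\mathrm{Tr}(\Lambda_a X)\,\Lambda_a,
\]
valid first for Hermitian $X$ and then for all $X$ by splitting into Hermitian and anti-Hermitian parts. I emphasize this comes from orthogonality alone, so there is no circularity with the Fierz relation. Applying it to a matrix unit $X=\textbf{E}^{(\beta\gamma)}$ (a single $1$ in row $\beta$, column $\gamma$), reading off the $(\alpha,\delta)$ entry, and using $\mathrm{Tr}(\textbf{E}^{(\beta\gamma)})=\delta_{\beta\gamma}$ together with $\mathrm{Tr}(\Lambda_a \textbf{E}^{(\beta\gamma)})=(\Lambda_a)_{\gamma\beta}$, yields
\[
\sum_{a=1}^{n^2-1}(\Lambda_a)_{\gamma\beta}(\Lambda_a)_{\alpha\delta} = 2\,\delta_{\alpha\beta}\delta_{\gamma\delta} - \frac2n\,\delta_{\beta\gamma}\delta_{\alpha\delta},
\]
which is exactly the Fierz relation after relabeling the index pairs to $(i_1,j_1)$ and $(i_2,j_2)$.

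The main obstacle is precisely this orthogonal-basis expansion, i.e. the Fierz relation; everything else is a bookkeeping comparison of Kronecker-product entries. As an alternative, more computational route that bypasses the abstract completeness argument, one may split $\sum_a \Lambda_a\otimes\Lambda_a$ into the three explicit families constructed above — the symmetric off-diagonal $\Lambda^{(ij)}$, the antisymmetric off-diagonal $\Lambda^{[ij]}$, and the diagonal $\Lambda^{(d)}$ — and check the cases $j_1\neq j_2$ and $j_1=j_2$ directly; in that route the delicate point is the diagonal family, where one must show $\sum_{d=1}^{n-1}(\Lambda^{(d)})_{\alpha\alpha}(\Lambda^{(d)})_{\gamma\gamma}=2\delta_{\alpha\gamma}-\tfrac{2}{n}$. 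This hinges on the precise normalizations $1/\sqrt{C_{d+1}^2}$ and is again cleanest to read off as the completeness of $\{\Lambda^{(d)}\}_{d=1}^{n-1}\cup\{I_n\}$ inside the space of diagonal matrices, which is why I would favor the abstract argument.
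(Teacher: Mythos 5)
Your proof is correct, but it takes a genuinely different route from the paper's. Both arguments start from the same entrywise description $(\textbf{U}_{n\otimes n})^{i_1i_2}_{j_1j_2}=\delta^{i_1}_{j_2}\delta^{i_2}_{j_1}$, and both ultimately come down to the completeness (Fierz) identity $\sum_{a=1}^{n^2-1}(\Lambda_a)^{i_1}_{j_1}(\Lambda_a)^{i_2}_{j_2}=2\,\delta^{i_1}_{j_2}\delta^{i_2}_{j_1}-\frac{2}{n}\,\delta^{i_1}_{j_1}\delta^{i_2}_{j_2}$; the difference is how that identity is obtained. The paper proves it by direct computation for the explicitly constructed matrices: it splits the family into the symmetric off-diagonal $\Lambda^{(ij)}$, the antisymmetric $\Lambda^{[ij]}$ and the diagonal $\Lambda^{(d)}$, computes the Kronecker entries of each $\Lambda\otimes\Lambda$, and disposes of the diagonal family through a case analysis ($k_1\neq k_2$; $k_1=k_2\neq 1$; $k_1=k_2=1$) with telescoping sums $\sum_d\bigl(\frac{1}{d}-\frac{1}{d+1}\bigr)$ driven by the normalizations $1/\sqrt{C^2_{d+1}}$ --- precisely the ``delicate point'' you flag at the end of your alternative route. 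You replace all of that by the observation that tracelessness and $\mathrm{Tr}(\Lambda_a\Lambda_b)=2\delta_{ab}$ alone make $\{I_n,\Lambda_1,\dots,\Lambda_{n^2-1}\}$ an orthogonal basis of the real space of Hermitian matrices (a dimension count), so the Fourier expansion $X=\frac{\mathrm{Tr}(X)}{n}I_n+\frac{1}{2}\sum_a\mathrm{Tr}(\Lambda_a X)\Lambda_a$ applied to matrix units yields the Fierz relation at once; your points about non-circularity and about extending from Hermitian $X$ to all of $\mathbb{C}^{n\times n}$ by the Hermitian/anti-Hermitian splitting are exactly the right precautions, and the index relabeling at the end checks out. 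What each approach buys: the paper's computation is self-contained and simultaneously verifies that the recursively constructed matrices really carry the stated normalizations, while your argument is much shorter and strictly more general --- it proves the theorem for \emph{any} system of $n^2-1$ Hermitian traceless matrices with $\mathrm{Tr}(\Lambda_a\Lambda_b)=2\delta_{ab}$, hence is invariant under unitary (indeed orthogonal) changes of such a basis, which bears directly on the paper's concluding question about other generalizations of the Pauli matrices satisfying analogous identities.
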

\begin{proof}
\begin{equation*}
I_{n}\otimes
I_{n}\;=\;\left(\delta_{j_{1}j_{2}}^{i_{1}i_{2}}\right)\;=\;
\left(\delta_{j_{1}}^{i_{1}}\delta_{j_{2}}^{i_{2}}\right)
\end{equation*}
\numberwithin{equation}{section}
\begin{equation}\label{e31}
\textbf{U}_{n\otimes n}\;=\;
\left(\delta_{j_{2}}^{i_{1}}\delta_{j_{1}}^{i_{2}}\right)
\end{equation}
o\`u,\\
  $i_1i_2$ sont des indices de ligne\\
  $j_1j_2$ sont des indices de colonne \cite{Fujii01}.\\
  Consid\'erons d'abord les $C_{n}^{2}$  matrices de Gell-Mann $n\times n$ sym\'etriques, qui peuvent s'\'ecrire \\
  \begin{equation*}
  \begin{split}
  \Lambda^{(ij)}\;&=
  \;\left({\Lambda^{(ij)}}_{k}^{l}\right)_{1\leq l\leq n, 1\leq k\leq n}
  \;\\
  &=  \;\left(\delta^{il}\delta^{j}_{k}\right)_{1\leq l\leq n, 1\leq k\leq n}+
  \left(\delta^{jl}\delta^{i}_{k}\right)_{1\leq l\leq n, 1\leq k\leq
  n}\;\\
  &=\;\left(\delta^{il}\delta^{j}_{k}+
  \delta^{jl}\delta^{i}_{k}\right)_{1\leq l\leq n, 1\leq k\leq n}
  \end{split}
  \end{equation*}
  Alors
  \begin{equation*}
\Lambda^{(ij)}\otimes
\Lambda^{(ij)}\;=\;\left({\left(\Lambda^{(ij)}\otimes
\Lambda^{(ij)}\right)}_{k_{1}k_{2}}^{l_{1}l_{2}}\right)\;=\;\left(\delta^{il_{1}}\delta^{j}_{k_{1}}+
  \delta^{jl_{1}}\delta^{i}_{k_{1}}\right)\left(\delta^{il_{2}}\delta^{j}_{k_{2}}+
  \delta^{jl_{2}}\delta^{i}_{k_{2}}\right)
\end{equation*}
$l_{1}l_{2}$ indices de ligne, $k_{1}k_{2}$ indices de colonne.\\
C'est-\`a-dire
\begin{equation*}
 {\left(\Lambda^{(ij)}\otimes
\Lambda^{(ij)}\right)}_{k_{1}k_{2}}^{l_{1}l_{2}}\;=\;\delta^{il_{1}}\delta^{j}_{k_{1}}
\delta^{il_{2}}\delta^{j}_{k_{2}}+\delta^{il_{1}}\delta^{j}_{k_{1}}\delta^{jl_{2}}\delta^{i}_{k_{2}}
+\delta^{jl_{1}}\delta^{i}_{k_{1}}\delta^{il_{2}}\delta^{j}_{k_{2}}
+\delta^{jl_{1}}\delta^{i}_{k_{1}}\delta^{jl_{2}}\delta^{i}_{k_{2}}
\end{equation*}
  Les $C_{n}^{2}$  matrices de Gell-Mann $n\times n$, antisym\'etriques peuvent s'\'ecrire\\
\begin{equation*}
\Lambda^{[ij]}\;=
  \;\left({\Lambda^{[ij]}}_{k}^{l}\right)_{1\leq l\leq n, 1\leq k\leq n}
  \;=\;\left(-i\delta^{il}\delta^{j}_{k}+
  i\delta^{jl}\delta^{i}_{k}\right)_{1\leq l\leq n, 1\leq k\leq n}
\end{equation*}
 Alors
  \begin{equation*}
\Lambda^{[ij]}\otimes
\Lambda^{[ij]}\;=\;\left({\left(\Lambda^{[ij]}\otimes
\Lambda^{[ij]}\right)}_{k_{1}k_{2}}^{l_{1}l_{2}}\right)
\end{equation*}
\begin{equation*}
{\left(\Lambda^{[ij]}\otimes
\Lambda^{[ij]}\right)}_{k_{1}k_{2}}^{l_{1}l_{2}}\;=\;-\delta^{il_{1}}\delta^{j}_{k_{1}}
\delta^{il_{2}}\delta^{j}_{k_{2}}+\delta^{il_{1}}\delta^{j}_{k_{1}}\delta^{jl_{2}}\delta^{i}_{k_{2}}
+\delta^{jl_{1}}\delta^{i}_{k_{1}}\delta^{il_{2}}\delta^{j}_{k_{2}}
-\delta^{jl_{1}}\delta^{i}_{k_{1}}\delta^{jl_{2}}\delta^{i}_{k_{2}}
\end{equation*}
et
\begin{multline*}
  \displaystyle\sum_{1\leq i< j\leq n}{\left(\Lambda^{(ij)}\otimes
\Lambda^{(ij)}\right)}_{k_{1}k_{2}}^{l_{1}l_{2}}+\displaystyle\sum_{1\leq
i< j\leq n}{\left(\Lambda^{[ij]}\otimes
\Lambda^{[ij]}\right)}_{k_{1}k_{2}}^{l_{1}l_{2}}\\
=\;2\displaystyle\sum_{1\leq i< j\leq
n}\left(\delta^{il_{1}}\delta^{j}_{k_{1}}\delta^{jl_{2}}\delta^{i}_{k_{2}}
+\delta^{jl_{1}}\delta^{i}_{k_{1}}\delta^{il_{2}}\delta^{j}_{k_{2}}\right)\;\\
=\;2\displaystyle\sum_{i\neq
j}\delta^{il_{1}}\delta^{j}_{k_{1}}\delta^{jl_{2}}\delta^{i}_{k_{2}}
  \end{multline*}
  la $l_{1}l_{2}$-i\`eme ligne $k_{1}k_{2}$-i\`eme colonne de la matrice\\
   $\displaystyle\sum_{1\leq i< j\leq
  n}\Lambda^{(ij)}\otimes
\Lambda^{(ij)}+\displaystyle\sum_{1\leq i< j\leq
  n}\Lambda^{[ij]}\otimes
\Lambda^{[ij]}$.\\

Maintenant, consid\'erons les  matrices de Gell-Mann $n\times n$, diagonales. Soit $d\in\mathbb{N}$, $1 \leq d \leq n-1$,
\begin{center}
$\Lambda^{(d)}\;=\;\frac{1}{\sqrt{C_{d+1}^{2}}}\left(\delta_{k}^{l}
\displaystyle\sum_{p=1}^{d}\delta_{k}^{p}-d\delta_{k}^{l}\delta_{k}^{d+1}\right)$
\end{center}
et la $l_{1}l_{2}$-i\`eme ligne $k_{1}k_{2}$-i\`eme colonne de la matrice
$\Lambda^{(d)}\otimes \Lambda^{(d)}$ est\\
\begin{equation*}
\begin{split}
\left(\Lambda^{(d)}\otimes
\Lambda^{(d)}\right)_{k_{1}k_{2}}^{l_{1}l_{2}}
&=\frac{1}{C_{d+1}^{2}}\delta_{k_{1}}^{l_{1}}\delta_{k_{2}}^{l_{2}}
\left(\displaystyle\sum_{q=1}^{d}\displaystyle\sum_{p=1}^{d}\delta_{k_{1}}^{q}\delta_{k_{2}}^{p}\right)\\
&\quad-\frac{1}{C_{d+1}^{2}}\delta_{k_{1}}^{l_{1}}\delta_{k_{2}}^{l_{2}}
\left(d\delta_{k_{2}}^{d+1}\displaystyle\sum_{p=1}^{d}\delta_{k_{1}}^{p}\right)\\
&\quad-\frac{1}{C_{d+1}^{2}}\delta_{k_{1}}^{l_{1}}\delta_{k_{2}}^{l_{2}}
\left(d\delta_{k_{1}}^{d+1}\displaystyle\sum_{p=1}^{d}\delta_{k_{2}}^{p}\right)\\
&\quad+\frac{1}{C_{d+1}^{2}}\delta_{k_{1}}^{l_{1}}\delta_{k_{2}}^{l_{2}}
\left(d^{2}\delta_{k_{1}}^{d+1}\delta_{k_{2}}^{d+1}\right)
\end{split}
\end{equation*}
$\Lambda^{(d)}\otimes \Lambda^{(d)}$ est une matrice diagonale, donc tout ce que nous devons \`a faire est de calculer les \'el\'ements sur la diagonale, o\`u $l_{1} \;=\; k_{1}$ et $l_{2} \;=\; k_{2}$. Alors,
\begin{equation*}
\begin{split}
\sum_{d=1}^{n-1}\left(\Lambda^{(d)}\otimes
\Lambda^{(d)}\right)_{k_{1}k_{2}}^{l_{1}l_{2}}
&=\sum_{d=1}^{n-1}\frac{1}{C_{d+1}^{2}}\left(\displaystyle\sum_{q=1}^{d}\delta_{k_{1}}^{q}\right)
\left(\displaystyle\sum_{p=1}^{d}\delta_{k_{2}}^{p}\right)\\
&\quad-\sum_{d=1}^{n-1}\frac{1}{C_{d+1}^{2}}d\delta_{k_{2}}^{d+1}\displaystyle\sum_{p=1}^{d}\delta_{k_{1}}^{p}\\
&\quad-\sum_{d=1}^{n-1}\frac{1}{C_{d+1}^{2}}d\delta_{k_{1}}^{d+1}\displaystyle\sum_{p=1}^{d}\delta_{k_{2}}^{p}\\
&\quad+\sum_{d=1}^{n-1}\frac{1}{C_{d+1}^{2}}d^{2}\delta_{k_{1}}^{d+1}\delta_{k_{2}}^{d+1}
\end{split}
\end{equation*}
la $l_{1}l_{2}$-i\`eme ligne $k_{1}k_{2}$-i\`eme colonne de la 
matrice diagonale $\displaystyle\sum_{d=1}^{n-1}\Lambda^{(d)}\otimes
\Lambda^{(d)}$ avec $l_{1}
\;=\; k_{1}$ et $l_{2} \;=\; k_{2}$.\\
Laissons nous distinguer deux cas.\\
 \underline{$1^{er}$ cas}: $k_{1}\neq1$ or $k_{2}\neq1$\\
 \indent cas1: $k_{1}\neq k_{2}$

\text{Si $k_{1}<k_{2}$},
\begin{equation*}
\begin{split}
\sum_{d=1}^{n-1}\left(\Lambda^{(d)}\otimes
\Lambda^{(d)}\right)_{k_{1}k_{2}}^{l_{1}l_{2}}\;&=\;\sum_{d=k_{2}}^{n-1}\frac{1}{C_{d+1}^{2}}-\frac{k_{2}-1}{C_{k_{2}}^{2}}\;\\
&\quad=\;2\left[\sum_{d=k_{2}}^{n-1}\left(\frac{1}{d}-\frac{1}{d+1}\right)-\frac{1}{k_{2}}\right]\;\\
&\quad=\;-\frac{2}{n}
\end{split}
\end{equation*}
 \text{De fa\c con Similaire, si $k_{1}>k_{2}$},
$\displaystyle\sum_{d=1}^{n-1}\left(\Lambda^{(d)}\otimes
\Lambda^{(d)}\right)_{k_{1}k_{2}}^{l_{1}l_{2}}\; =\;-\frac{2}{n}$\\
\indent cas2: $k_{1}\;=\; k_{2}\;\neq\;1$
\begin{equation*}
\begin{split}
\sum_{d=1}^{n-1}\left(\Lambda^{(d)}\otimes
\Lambda^{(d)}\right)_{k_{1}k_{2}}^{l_{1}l_{2}}\;
&=\;\sum_{d=k_{2}}^{n-1}\frac{1}{C_{d+1}^{2}}+\frac{\left(k_{2}-1\right)^{2}}{C_{k_{2}}^{2}}\;\\
&\quad=\;\frac{2}{k_{2}}-\frac{2}{n}+\frac{\left(k_{2}-1\right)^{2}}{C_{k_{2}}^{2}}\;\\
 &\quad=\;2-\frac{2}{n}
\end{split}
\end{equation*}
\underline{$2^{e}$ cas}: $k_{1}= k_{2}=1$
\begin{equation*}
\sum_{d=1}^{n-1}\left(\Lambda^{(d)}\otimes
\Lambda^{(d)}\right)_{k_{1}k_{2}}^{l_{1}l_{2}}\;=\;\sum_{d=1}^{n-1}\frac{1}{C_{d+1}^{2}}\;=\;2-\frac{2}{n}
\end{equation*}
Nous pouvons compacter ces cas en une seule formule
\begin{equation*}
\sum_{d=1}^{n-1}\left(\Lambda^{(d)}\otimes
\Lambda^{(d)}\right)_{k_{1}k_{2}}^{l_{1}l_{2}}\;
=\;-\frac{2}{n}\delta_{k_{1}}^{l_{1}}\delta_{k_{2}}^{l_{2}}
+2\sum_{i=1}^{n}\delta^{il_{1}}\delta_{k_{1}}^{i}\delta^{il_{2}}\delta_{k_{2}}^{i}
\end{equation*}
qui donne la diagonale de la matrice diagonale
$\displaystyle\sum_{d=1}^{n-1}\Lambda^{(d)}\otimes \Lambda^{(d)}$.\\

Pour toutes les matrices de Gell-Mann $n\times n$ nous avons \\
\begin{multline*}
\sum_{1\leq i< j\leq n}{\left(\Lambda^{(ij)}\otimes
\Lambda^{(ij)}\right)}_{k_{1}k_{2}}^{l_{1}l_{2}}+\sum_{1\leq i<
j\leq n}{\left(\Lambda^{[ij]}\otimes
\Lambda^{[ij]}\right)}_{k_{1}k_{2}}^{l_{1}l_{2}}+\sum_{d=1}^{n-1}\left(\Lambda^{(d)}\otimes
\Lambda^{(d)}\right)_{k_{1}k_{2}}^{l_{1}l_{2}}\;\\
=\;-\frac{2}{n}\delta_{k_{1}}^{l_{1}}\delta_{k_{2}}^{l_{2}}
+2\sum_{i=1}^{n}\delta^{il_{1}}\delta_{k_{1}}^{i}\delta^{il_{2}}\delta_{k_{2}}^{i}
+2\sum_{i\neq
j}\delta^{il_{1}}\delta_{k_{1}}^{j}\delta^{jl_{2}}\delta_{k_{2}}^{i}\\
=\;-\frac{2}{n}\delta_{k_{1}}^{l_{1}}\delta_{k_{2}}^{l_{2}}
+2\sum_{j=1}^{n}\sum_{i=1}^{n}\delta^{il_{1}}\delta_{k_{1}}^{j}\delta^{jl_{2}}\delta_{k_{2}}^{i}\\
=\;-\frac{2}{n}\delta_{k_{1}}^{l_{1}}\delta_{k_{2}}^{l_{2}}+2\delta_{k_{2}}^{l_{1}}\delta_{k_{1}}^{l_{2}}
\end{multline*}
pour tous $l_{1}$, $l_{2}$, $k_{1}$,$k_{2}\in\{1, 2, \ldots, n\}$.\\
D'o\`u, en utilisant \eqref{e31}
\begin{equation*}
\displaystyle\sum_{i=1}^{n^{2}-1}\Lambda_{i}\otimes\Lambda_{i}\;=\;-\frac{2}{n}I_{n}\otimes
I_{n}+2\textbf{U}_{n\otimes n}
\end{equation*}
et le th\'eor\`eme est prouv\'e.
 \end{proof}
 \subsection{Matrices de Gell-Mann rectangles}
 Dans cette sous-section nous voulons g\'en\'eraliser la formule (\ref{e44prime}) \`a l'expression de $\textbf{U}_{n \otimes p}$ avec $n\neq p$.\\
 \noindent Nous avions essay\'e d'exprimer les MCT $\textbf{U}_{2 \otimes 3}$ et $\textbf{U}_{3 \otimes 2}$ comme 
combinaisons lin\'eaire des  produits tensoriels des matrices de Pauli avec les matrices de Gell-Mann $3 \times 3$, en esp\'erant d'avoir des expressions qui m\`enent \`a la g\'en\'eralisation de (\ref{e44prime}) \`a l'expression de $\textbf{U}_{n \otimes p}$ avec $n\neq p$. Cependant, les expressions obtenues ne sont pas assez inter\'essant pour la g\'en\'eralisation voulue. Nous avons remarqu\'e que pour g\'en\'eraliser (\ref{e44prime}) \`a l'expression de $\textbf{U}_{n \otimes p}$, $n \neq p$, nous devons utiliser des matrices rectangles au lieu de matrices carr\'ees. Nous appelons matrices de Gell-Mann rectangles des telles matrices rectangles.\\

D'abord, laissons nous consid\'erer quelques cas particuliers.

\begin{itemize}

\item Matrices de Gell-Mann $2 \times 3$

Etant inspir\'e par une fa\c con de construire les matrices de Gell-Mann $n \times
n$ \`a partir des matrices de Gell-Mann $( n - 1 ) \times ( n - 1)$, nous ajoutons aux matrices de Pauli et $I_2$ troisi\`eme  colonne form\'ee de zeros. Alors, nous avons un syst\`eme form\'e par
\begin{equation*}
I_{2 \times 3} = \left( \begin{array}{ccc}
  1 & 0 & 0\\
  0 & 1 & 0
\end{array} \right), \Lambda_1 = \left( \begin{array}{ccc}
  0 & 1 & 0\\
  1 & 0 & 0
\end{array} \right), \Lambda_2 = \left( \begin{array}{ccc}
  0 & - i & 0\\
  i & 0 & 0
\end{array} \right),
 \Lambda_3 = \left( \begin{array}{ccc}
  1 & 0 & 0\\
  0 & - 1 & 0
\end{array} \right)
\end{equation*}
 Et pour obtenir une base de $\mathbb{C}^{2 \times 3}$, nous ajoutons au syst\`eme les matrices\\
 $\Lambda_4 = \left(
\begin{array}{ccc}
  0 & 0 & \sqrt{2}\\
  0 & 0 & 0
\end{array} \right), \Lambda_5 = \left( \begin{array}{ccc}
  0 & 0 & 0\\
  0 & 0 & \sqrt{2}
\end{array} \right)$. Nous pouvons v\'erifier facilement que 
\[ \textbf{U}_{2 \otimes 3} = \frac{1}{2} I_{2 \times 3}^+ \otimes I_{2 \times 3} +
   \frac{1}{2} \sum_{a = 1}^5 \Lambda_a^+ \otimes \Lambda_a \]
o\`u $\Lambda_a^+$ est le conjugu\'e hermitien de $\Lambda_a$.

\item  Matrices de Gell-Mann $3 \times 2$

De fa\c con analogue, mais cette fois nous ajoutons aux matrices de Pauli et $I_2$ une troisi\`eme ligne form\'e de zeros, au lieu de colonne. Alors, nous avons un syst\`eme $( \Lambda_a )_{1 \leqslant a
\leqslant 5}$ de matrices $3 \times 2$ qui satisfont
\[ \textbf{U}_{3 \otimes 2} = \frac{1}{2} I_{3 \times 2}^+ \otimes I_{3 \times 2} +
   \frac{1}{2} \sum_{a = 1}^5 \Lambda_a^+ \otimes \Lambda_a \]
En effet, $\textbf{U}_{n \otimes p} = \textbf{U}_{n \otimes p}^T = \textbf{U}_{n \otimes p}^+$,
pour tous $n, p \in \mathbb{N}$, $n, p \geqslant 2$.

\item Matrices de Gell-Mann $2 \times 4$ 

Utilisant encore la m\^eme mani\`ere, mais pour ce cas nous ajoutons aux matrices de Pauli et $I_2$ trois\`eme et quatri\`eme 
colonnes form\'ees de zeros. Alors, nous avons un syst\`eme form\'e par quatre matrices $2\times 4$  $I_{2 \times 4}, \Lambda_1, \Lambda_2, \Lambda_3$. Et pour obtenir une base de $\mathbb{C}^{2 \times 4}$, nous ajoutons au syst\`eme les matrices\\
\noindent $\Lambda_4 = \left(
\begin{array}{cccc}
  0 & 0 & \sqrt{2} & 0\\
  0 & 0 & 0 & 0
\end{array} \right)$, $\Lambda_5 = \left( \begin{array}{cccc}
  0 & 0 & 0 & 0\\
  0 & 0 & \sqrt{2} & 0
\end{array} \right)$, $\Lambda_6 = \left( \begin{array}{cccc}
  0 & 0 & 0 & \sqrt{2}\\
  0 & 0 & 0 & 0
\end{array} \right)$, $\Lambda_7 = \left( \begin{array}{cccc}
  0 & 0 & 0 & 0\\
  0 & 0 & 0 & \sqrt{2}
\end{array} \right)$. Le syst\`eme satisfait la relation
\[ \textbf{U}_{2 \otimes 4} = \frac{1}{2} I_{2 \times 4}^+ \otimes I_{2 \times 4} +
   \frac{1}{2} \sum_{a = 1}^7 \Lambda_a^+ \otimes \Lambda_a \]
\end{itemize}
\begin{definition}
  Soient $n, p \in \mathbb{N}$, $p \geqslant n \geqslant 2$. Nous appelons matrices de Gell-Mann $n\times p$ les matrices $n$ lignes et $p$ colonnes $\Lambda_1$, $\Lambda_2$,
  ..., $\Lambda_{n^2 - 1}$, $\Lambda_{n^2}$, $\Lambda_{n^2 + 1}$, ...,
  $\Lambda_{np - 1}$ telles que:

  $\Lambda_1$, $\Lambda_2$, ..., $\Lambda_{n^2 - 1}$ sont obtenues en ajoutant aux matrices de Gell-Mann $n\times n$, $( n + 1 )$-i\`eme, $( n + 2 )$-i\`eme,
  ..., $p$-i\`eme colonnes, form\'ees de zeros;

  $\Lambda_{n^2} = \sqrt{2} \textbf{E}_{n \times p}^{( 1, n + 1 )}$, $\Lambda_{n^2 + 1}
  = \sqrt{2} \textbf{E}_{n \times p}^{( 2, n + 1 )},$..., $\Lambda_{n^2 + n - 1} =
  \sqrt{2} \textbf{E}_{n \times p}^{( n, n + 1 )}$,

  $\Lambda_{n^2 + n} = \sqrt{2} \textbf{E}_{n \times p}^{( 1, n + 2 )}$, $\Lambda_{n^2
  + n + 1} = \sqrt{2} \textbf{E}_{n \times p}^{( 2, n + 2 )}$, ..., $\Lambda_{n^2 + 2 n
  - 1} = \sqrt{2} \textbf{E}_{n \times p}^{( 1, n + 2 )}$,

  ....................................................................................................................,

  $\Lambda_{n^{} ( p - 1 )} = \sqrt{2} \textbf{E}_{n \times p}^{( 1, p )}$,
  $\Lambda_{n^{} ( p - 1 ) + 1} = \sqrt{2} \textbf{E}_{n \times p}^{( 2, p )}$, ...,
  $\Lambda_{n^{} p - 1} = \sqrt{2} \textbf{E}_{n \times p}^{( n, p )}$.
\end{definition}
  Alors, nous d\'efinissons les matrices de Gell-Mann $p \times n$ comme les matrices obtenues en prenant les conjugu\'es  hermitiens  des matrices de Gell-Mann $n \times p$.

\begin{proposition}
  Pour $n, p \in \mathbb{N}$, $p, n \geqslant 2$, consid\'erons le syst\`eme de matrices de Gell-Mann $n\times p$, $\Lambda_1$, $\Lambda_2$, ..., $\Lambda_{np -
  1}$. Alors,
  \begin{equation}
    \label{eq2} \textbf{U}_{n \otimes p} = \frac{1}{\inf( n, p )} I_{n \times
    p}^+ \otimes I_{n \times p} + \frac{1}{2} \sum_{a = 1}^{np - 1}
    \Lambda_a^+ \otimes \Lambda_a
  \end{equation}
\end{proposition}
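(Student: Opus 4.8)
The plan is to prove the formula by a direct computation of matrix elements that reduces to the square case, Th\'eor\`eme~\ref{thm32}, handling first $p\geqslant n$ and then obtaining $n>p$ by Hermitian conjugation. Throughout I use the element formula \eqref{e61}, $\textbf{U}_{n\otimes p}=\left(\delta_{j_2}^{i_1}\delta_{j_1}^{i_2}\right)$, with row multi-index $(l_1,l_2)$, $l_1\in\{1,\dots,p\}$, $l_2\in\{1,\dots,n\}$, and column multi-index $(k_1,k_2)$, $k_1\in\{1,\dots,n\}$, $k_2\in\{1,\dots,p\}$. The point is that both $\Lambda_a^+\otimes\Lambda_a$ and $I_{n\times p}^+\otimes I_{n\times p}$ carry exactly this index structure, so it suffices to check that the $(l_1l_2,k_1k_2)$ entry of the right-hand side equals $\delta_{k_2}^{l_1}\delta_{k_1}^{l_2}$.

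Assume $p\geqslant n$, so $\inf(n,p)=n$, and split the sum along the two families of the definition. For $a\leqslant n^2-1$ the matrix $\Lambda_a$ is the $n\times n$ Gell-Mann matrix $\lambda_a$ with $p-n$ zero columns appended; since $\lambda_a$ is Hermitian, $\Lambda_a^+$ places $\lambda_a$ in the first $n$ rows of a $p\times n$ matrix and zeros below. Hence $\left(\Lambda_a^+\otimes\Lambda_a\right)_{k_1k_2}^{l_1l_2}$ vanishes unless $l_1\leqslant n$ and $k_2\leqslant n$, where it equals $\left(\lambda_a\otimes\lambda_a\right)_{k_1k_2}^{l_1l_2}$; summing and using Th\'eor\`eme~\ref{thm32} in the form $\sum_{a=1}^{n^2-1}\lambda_a\otimes\lambda_a=-\tfrac{2}{n}I_n\otimes I_n+2\,\textbf{U}_{n\otimes n}$ gives, for $l_1,k_2\leqslant n$, the contribution $\delta_{k_2}^{l_1}\delta_{k_1}^{l_2}-\tfrac1n\delta_{k_1}^{l_1}\delta_{k_2}^{l_2}$ after multiplying by $\tfrac12$. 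For the second family $\Lambda_a=\sqrt2\,\textbf{E}_{n\times p}^{(i,j)}$ with $i\in\{1,\dots,n\}$, $j\in\{n+1,\dots,p\}$, so $\Lambda_a^+\otimes\Lambda_a=2\,\textbf{E}_{p\times n}^{(j,i)}\otimes\textbf{E}_{n\times p}^{(i,j)}$ with entry $2\,\delta_j^{l_1}\delta_i^{k_1}\delta_i^{l_2}\delta_j^{k_2}$; the factor $\tfrac12$ cancels the $2$, and summing over $i,j$ yields $\delta_{k_2}^{l_1}\delta_{k_1}^{l_2}$ supported exactly on $l_1=k_2\in\{n+1,\dots,p\}$, $k_1=l_2$. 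Finally $\tfrac1n\left(I_{n\times p}^+\otimes I_{n\times p}\right)_{k_1k_2}^{l_1l_2}=\tfrac1n\delta_{k_1}^{l_1}\delta_{k_2}^{l_2}$, supported on $l_1=k_1\leqslant n$, $l_2=k_2\leqslant n$.

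I would then collect the three pieces by a case analysis on whether $l_1$ and $k_2$ lie in $\{1,\dots,n\}$ or in $\{n+1,\dots,p\}$. If $l_1,k_2\leqslant n$, the $\tfrac1n$ diagonal term of the $I$-block cancels that of the first family, leaving $\delta_{k_2}^{l_1}\delta_{k_1}^{l_2}$. If $l_1>n$, only the second family survives and again returns $\delta_{k_2}^{l_1}\delta_{k_1}^{l_2}$. In the mixed case $l_1\leqslant n<k_2$ every term vanishes, and so does the target $\delta_{k_2}^{l_1}\delta_{k_1}^{l_2}$ (it would force $k_2=l_1\leqslant n$). Thus in all cases the entry equals $\delta_{k_2}^{l_1}\delta_{k_1}^{l_2}$, which by \eqref{e61} is the $(l_1l_2,k_1k_2)$ entry of $\textbf{U}_{n\otimes p}$, proving the identity when $p\geqslant n$.

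For $n>p$ I would not recompute but conjugate: here the $n\times p$ Gell-Mann matrices are by definition the Hermitian conjugates of the $p\times n$ ones, and $\textbf{U}_{n\otimes p}$ is a real permutation matrix, so $\textbf{U}_{p\otimes n}^{+}=\textbf{U}_{p\otimes n}^{-1}=\textbf{U}_{n\otimes p}$. Applying $(\cdot)^{+}$ to the already-proven identity for $\textbf{U}_{p\otimes n}$ (the case $p<n$, with $\inf(p,n)=p$) and using $(A\otimes B)^{+}=A^{+}\otimes B^{+}$, together with $(I_{p\times n})^{+}=I_{n\times p}$ and $(\Lambda_a)^{+}{}^{+}=\Lambda_a$, reproduces term by term the claimed formula for $\textbf{U}_{n\otimes p}$ with $\inf(n,p)=p$. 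The main obstacle is not any hard estimate but the index bookkeeping of the $p\geqslant n$ step: one must verify that the extra rectangular Gell-Mann matrices $\Lambda_{n^2},\dots,\Lambda_{np-1}$ supply precisely the permutation entries with $l_1>n$ that the square theorem cannot produce, and that the diagonal normalization stays $\tfrac1n=\tfrac1{\inf(n,p)}$ rather than acquiring a spurious dependence on $p$.
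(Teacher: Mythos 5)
Your proposal is correct, and it follows the same overall strategy as the paper: assume $p\geqslant n$, split the Gell-Mann system at $a=n^2$, reduce the square block $\Lambda_1,\dots,\Lambda_{n^2-1}$ to the square-case Th\'eor\`eme \ref{thm32} via (\ref{e44prime}), and check that the extra matrices $\Lambda_{n^2},\dots,\Lambda_{np-1}$ supply exactly the permutation entries with first row index beyond $n$. The execution differs, though: the paper stays at the level of matrix identities, using the elementary-matrix decomposition of Proposition \ref{prop23} to rewrite the square theorem as an identity among the $\textbf{E}_{n\times n}^{(j,l)}$ (its equation (\ref{eq4})), then padding both sides with $p-n$ zero rows and columns and subtracting the tail sum, whereas you verify the $(l_1l_2,k_1k_2)$ entries directly from the coordinate formula (\ref{e61}) with a three-case analysis on whether $l_1$ and $k_2$ exceed $n$ — slightly heavier bookkeeping, but it makes explicit the cancellation of the $\tfrac1n$ diagonal terms and the support of each family, which the paper's padding argument leaves to the reader. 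Your explicit treatment of the case $n>p$ is also a genuine improvement in rigor: the paper only proves $p\geqslant n$ and elsewhere asserts $\textbf{U}_{n\otimes p}=\textbf{U}_{n\otimes p}^T=\textbf{U}_{n\otimes p}^+$, which is false for $n\neq p$ (transposition gives $\textbf{U}_{p\otimes n}$, as one checks already on $\textbf{U}_{2\otimes 3}$); your conjugation step uses the correct identity $\textbf{U}_{p\otimes n}^{+}=\textbf{U}_{p\otimes n}^{-1}=\textbf{U}_{n\otimes p}$ together with $(I_{p\times n})^{+}=I_{n\times p}$ and the definition of the $n\times p$ Gell-Mann matrices as Hermitian conjugates, so it repairs this small defect while reaching the same conclusion.
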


\begin{proof}
  Supposons $p \geqslant n$.
  \begin{equation}
    \label{eq3} \frac{1}{2} \sum_{a = n^2}^{np - 1} \Lambda_a^+ \otimes
    \Lambda_a = \sum_{( j, l ) = ( 1, n + 1 )}^{( n, p )} \textbf{E}_{n \times p}^{( j,
    l )^T} \otimes \textbf{E}_{n \times p}^{( j, l )^{}}
  \end{equation}
  En utilisant la proposition \ref{prop23} et la formule (\ref{e44prime}) nous avons 
  \begin{equation}
    \label{eq4} \sum_{( j, l ) = ( 1, 1 )}^{( n, n )} \textbf{E}_{n \times n}^{( j, l
    )^T} \otimes \textbf{E}_{n \times n}^{( j, l )^{}} = \frac{1}{n} I_n \otimes I_n +
    \frac{1}{2} \sum _{a = 1}^{n^2 - 1}\Lambda_a^{( n )} \otimes \Lambda_a^{( n )}
  \end{equation}
  En ajoutant, dans (\ref{eq4}), aux termes \`a gauche de $\otimes$'s $p - n$ lignes, $( n + 1 )$-i\`eme, $( n + 2 )$-i\`eme, ..., $p$-i\`eme lignes, et \`a droite $p
  - n$ colonnes, $( n + 1 )$-i\`eme, $( n + 2 )$-i\`eme, ..., $p$-i\`eme colonnes, form\'ees de zeros nous avons 
  \[ \sum_{( j, l ) = ( 1, 1 )}^{( n, p )} \textbf{E}_{n \times p}^{( j, l )^T} \otimes
     \textbf{E}_{n \times p}^{( j, l )^{}} - \sum_{( j, l ) = ( 1, n + 1 )}^{( n, p )}
     \textbf{E}_{n \times p}^{( j, l )^T} \otimes \textbf{E}_{n \times p}^{( j, l )^{}} =
     \frac{1}{n} I_{n \times
    p}^+ \otimes I_{n \times p} + \frac{1}{2} \sum_{a = 1}^{n^2
     - 1} \Lambda_a^+ \otimes \Lambda_a \]
  En utilisant la proposition \ref{prop23} et (\ref{eq3}) nous avons (\ref{eq2}).
\end{proof}

Maintenant, nous allons donner quelques proprit\'es des matrices de Gell-Mann rectangles.

\begin{proposition}
  Pour $n, p \in \mathbb{N}$, $p, n \geqslant 2$, soit $(\Lambda_a)_{1\leq a\leq np-1}$ un syst\`eme de matrices de Gell-Mann
 $n\times p$. Alors,  $Tr (\Lambda_a^+ \Lambda_b) = 2 \delta_{ab}$.
  
\end{proposition}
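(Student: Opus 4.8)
Le plan est de reconna\^itre d'abord que la trace $Tr(\Lambda_a^+\Lambda_b)$ co\"incide avec le produit scalaire de Frobenius
\[
Tr(\Lambda_a^+\Lambda_b)=\sum_{k=1}^{n}\sum_{l=1}^{p}\overline{(\Lambda_a)_{kl}}\,(\Lambda_b)_{kl},
\]
de sorte qu'il suffit de v\'erifier que les matrices $\Lambda_a$ sont deux \`a deux orthogonales pour ce produit scalaire et que $Tr(\Lambda_a^+\Lambda_a)=2$ pour chaque $a$. Comme les matrices de Gell-Mann $p\times n$ sont, par d\'efinition, les conjugu\'ees hermitiennes des matrices de Gell-Mann $n\times p$, on peut supposer $p\geq n$~: le cas $p<n$ s'en d\'eduit par la propri\'et\'e cyclique de la trace, car si $\Lambda_a=\widetilde{\Lambda}_a^+$ avec $\widetilde{\Lambda}_a$ matrices de Gell-Mann $p\times n$ (et $p\geq n$), alors $Tr(\Lambda_a^+\Lambda_b)=Tr(\widetilde{\Lambda}_a\widetilde{\Lambda}_b^+)=Tr(\widetilde{\Lambda}_b^+\widetilde{\Lambda}_a)$, qui vaut $2\delta_{ab}$ par le cas d\'ej\`a trait\'e.

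Pour $p\geq n$, je r\'epartirais les indices en deux familles selon la d\'efinition. La famille carr\'ee $1\leq a\leq n^2-1$, o\`u $\Lambda_a$ est une matrice de Gell-Mann $n\times n$ compl\'et\'ee par $p-n$ colonnes nulles, et la famille rectangle $n^2\leq a\leq np-1$, o\`u $\Lambda_a=\sqrt{2}\,\textbf{E}_{n\times p}^{(j,l)}$ pour un couple $(j,l)$ avec $l\geq n+1$, ces couples \'etant deux \`a deux distincts et l'application $a\mapsto(j,l)$ \'etant injective. Trois cas se pr\'esentent. Si $a$ et $b$ appartiennent tous deux \`a la famille carr\'ee, les colonnes d'indice sup\'erieur \`a $n$ \'etant nulles, la somme de Frobenius se restreint aux $n$ premi\`eres colonnes et redonne $Tr\left(\Lambda_a^{(n)}\Lambda_b^{(n)}\right)=2\delta_{ab}$, d'apr\`es la propri\'et\'e d'orthonormalit\'e $Tr(\Lambda_a\Lambda_b)=2\delta_{ab}$ des matrices de Gell-Mann $n\times n$ rappel\'ee au d\'ebut de la section (ces matrices \'etant hermitiennes, $\Lambda_a^+=\Lambda_a$). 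Si $a$ est dans la famille carr\'ee et $b$ dans la famille rectangle, l'unique \'el\'ement non nul de $\Lambda_b$ se situe dans une colonne d'indice $\geq n+1$, o\`u $\Lambda_a$ est identiquement nulle~: les supports sont disjoints, le produit scalaire est nul, ce qui co\"incide avec $2\delta_{ab}=0$ puisque $a\neq b$.

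Enfin, si $a$ et $b$ sont tous deux dans la famille rectangle, avec $\Lambda_a=\sqrt{2}\,\textbf{E}_{n\times p}^{(j,l)}$ et $\Lambda_b=\sqrt{2}\,\textbf{E}_{n\times p}^{(j',l')}$, l'orthogonalit\'e des matrices \'el\'ementaires donne $Tr(\Lambda_a^+\Lambda_b)=2\,\delta_{jj'}\,\delta_{ll'}$, quantit\'e \'egale \`a $2\delta_{ab}$ gr\^ace \`a l'injectivit\'e de $a\mapsto(j,l)$. Il n'y a pas d'obstacle conceptuel s\'erieux~: l'identification pr\'eliminaire de $Tr(\Lambda_a^+\Lambda_b)$ au produit scalaire de Frobenius rend les trois cas transparents. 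Le seul point demandant un peu de soin est la gestion des indices, c'est-\`a-dire associer correctement \`a chaque $a$ de la famille rectangle son couple $(j,l)$ et v\'erifier que ces couples sont effectivement distincts, ce qui est garanti par la d\'efinition, o\`u les $\Lambda_{n^2},\ldots,\Lambda_{np-1}$ parcourent sans r\'ep\'etition les $\textbf{E}_{n\times p}^{(j,l)}$ avec $1\leq j\leq n$ et $n+1\leq l\leq p$.
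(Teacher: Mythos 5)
Votre d\'emonstration est correcte, mais il faut noter que le m\'emoire \'enonce cette proposition \emph{sans aucune preuve}~: il n'y a donc pas d'argument du texte auquel comparer le v\^otre, et votre r\'edaction comble ce vide. Votre strat\'egie --- identifier $Tr(\Lambda_a^+\Lambda_b)$ au produit scalaire de Frobenius, puis traiter les trois cas (deux indices dans la famille carr\'ee, indices mixtes \`a supports disjoints, deux indices dans la famille rectangle) --- est la v\'erification naturelle et elle est men\'ee sans faute~: le cas carr\'e se ram\`ene bien \`a l'orthonormalit\'e $Tr(\Lambda_a\Lambda_b)=2\delta_{ab}$ des matrices de Gell-Mann $n\times n$ rappel\'ee en d\'ebut de section, le cas mixte est imm\'ediat par disjonction des supports, et le cas rectangle donne $2\delta_{jj'}\delta_{ll'}$ par orthogonalit\'e des matrices \'el\'ementaires. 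La r\'eduction du cas $p<n$ au cas $p\geq n$ par cyclicit\'e de la trace est d'ailleurs un point que la d\'efinition du m\'emoire (les matrices $p\times n$ \'etant d\'efinies comme conjugu\'ees hermitiennes des matrices $n\times p$) rend r\'eellement n\'ecessaire et que vous \^etes seul \`a expliciter. Une seule r\'eserve mineure~: votre injectivit\'e de $a\mapsto(j,l)$ suppose l'\'enum\'eration voulue des $\sqrt{2}\,\textbf{E}_{n\times p}^{(j,l)}$, $1\leq j\leq n$, $n+1\leq l\leq p$, chacune apparaissant une fois~; la d\'efinition du m\'emoire contient une coquille ($\Lambda_{n^2+2n-1}$ y est \'ecrite $\sqrt{2}\,\textbf{E}_{n\times p}^{(1,n+2)}$ au lieu de $\sqrt{2}\,\textbf{E}_{n\times p}^{(n,n+2)}$, ce qui cr\'eerait une r\'ep\'etition), et vous corrigez implicitement cette coquille, \`a juste titre, puisque sans cela la proposition elle-m\^eme serait fausse pour le couple d'indices concern\'e.
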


\begin{proposition}
Pour $n, p \in \mathbb{N}$, $p\geqslant n \geqslant 2$, soit
$(\Lambda_a)_{1\leq a\leq np-1}$ un syst\`eme de matrices de Gell-Mann
 $n\times p$. Alors,
\begin{equation*}
     \Lambda_a^{} \Lambda^+_b - \Lambda_b^{} \Lambda^+_a = i\sum_{c=1}^{n^{2}-1}f_{abc} \Lambda^{(
     n )}_c
     \end{equation*}
     o\`u les $f_{abc}$ sont les composantes d'un tenseur totalement antisym\'etrique, avec
     $f_{abc}=0$ si au moins un de $a$, $b$, $c$ est dans $\{n^{2}, n^{2}+1, ...,
     np-1\}$.
  \end{proposition}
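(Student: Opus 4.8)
The plan is to exploit the block structure of the rectangular Gell-Mann matrices and, wherever possible, to reduce the identity to the $n\times n$ commutation relation (\ref{e41}). First I would settle the shape of the right-hand side. Since $\Lambda_a$ is $n\times p$ and $\Lambda_b^+$ is $p\times n$, the difference $\Lambda_a\Lambda_b^+-\Lambda_b\Lambda_a^+$ is a square $n\times n$ matrix, and it is anti-Hermitian because $(\Lambda_a\Lambda_b^+)^+=\Lambda_b\Lambda_a^+$. Using the preceding proposition $Tr(\Lambda_a^+\Lambda_b)=2\delta_{ab}$ together with cyclicity of the trace, one checks $Tr(\Lambda_a\Lambda_b^+-\Lambda_b\Lambda_a^+)=0$, so this matrix is traceless and anti-Hermitian; hence it equals $i$ times a real combination of the $n\times n$ Gell-Mann matrices $\Lambda_c^{(n)}$. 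This both justifies the form of the expansion and shows that the coefficients $f_{abc}$ so defined are real and unique, with antisymmetry under $a\leftrightarrow b$ immediate from the left-hand side.

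The core of the proof is the case where both indices lie in the block $\{1,\dots,n^2-1\}$. The device I would use is that each such matrix factors as $\Lambda_a=\Lambda_a^{(n)}\,P$ with $P=\left[\,I_n\mid 0_{n\times(p-n)}\,\right]$, so that $PP^+=I_n$. Then $\Lambda_a\Lambda_b^+=\Lambda_a^{(n)}\,PP^+\,\Lambda_b^{(n)}=\Lambda_a^{(n)}\Lambda_b^{(n)}$, whence $\Lambda_a\Lambda_b^+-\Lambda_b\Lambda_a^+=[\Lambda_a^{(n)},\Lambda_b^{(n)}]$, and the commutation relation (\ref{e41}) produces exactly the claimed combination of the $\Lambda_c^{(n)}$, the $f_{abc}$ being the ordinary structure constants of the $n\times n$ Gell-Mann matrices up to the overall normalization chosen in the statement. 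This recovers the familiar structure and is the heart of the matter.

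For the mixed case, with one index in $\{1,\dots,n^2-1\}$ and one in $\{n^2,\dots,np-1\}$, I would write the extra matrices as $\sqrt2\,\textbf{E}_{n\times p}^{(i,j)}$ with $j>n$ and use that the columns of $\Lambda_a^{(n)}P$ beyond the $n$-th all vanish; a short index computation then shows that both $\Lambda_a\Lambda_b^+$ and $\Lambda_b\Lambda_a^+$ vanish, in agreement with $f_{abc}=0$. The step I expect to be the main obstacle is the extra-extra case, where $\Lambda_a=\sqrt2\,\textbf{E}_{n\times p}^{(i,j)}$ and $\Lambda_b=\sqrt2\,\textbf{E}_{n\times p}^{(k,l)}$ with $j,l>n$. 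A direct computation gives $\Lambda_a\Lambda_b^+-\Lambda_b\Lambda_a^+=2\delta_{jl}\bigl(\textbf{E}_{n\times n}^{(i,k)}-\textbf{E}_{n\times n}^{(k,i)}\bigr)$, which is \emph{not} automatically zero: when two extra matrices share a column ($j=l$) but differ in row ($i\neq k$) it equals $2i\,\Lambda^{[ik]}$, a nonzero $n\times n$ Gell-Mann matrix. I would therefore concentrate the proof effort here, either pinning down precisely the index ranges for which the vanishing assertion genuinely holds, or identifying the antisymmetric extension of $f_{abc}$ that reconciles this explicit computation with the stated total antisymmetry and vanishing. Resolving this boundary behaviour is what will make or break the proposition as stated.
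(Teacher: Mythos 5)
The paper states this proposition \emph{without proof}, so there is no argument of the author's to measure yours against; your proposal has to be judged on its own, and on its own it is essentially decisive. Your reduction of the core case is correct: writing $\Lambda_a=\Lambda_a^{(n)}P$ with $P=\bigl[\,I_n\mid 0_{n\times(p-n)}\,\bigr]$ and $PP^+=I_n$ turns $\Lambda_a\Lambda_b^+-\Lambda_b\Lambda_a^+$ into the commutator $[\Lambda_a^{(n)},\Lambda_b^{(n)}]$, and relation (\ref{e41}) then gives the claimed expansion (with the $f_{abc}$ of the proposition absorbing the factor $2$ of (\ref{e41}), as you note). The mixed case vanishes exactly as you compute, since $P\,\textbf{E}_{p\times n}^{(l,k)}=0$ for $l>n$; and your preliminary observation that the left-hand side is a traceless anti-Hermitian $n\times n$ matrix, hence $i$ times a real combination of the $\Lambda_c^{(n)}$, correctly justifies the form of the expansion for \emph{all} index pairs.

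The obstacle you isolate in the extra-extra case is not a loose end to be patched: it is a genuine counterexample to the proposition as printed. Take $n=2$, $p=3$, so the extra range is $\{4,5\}$, with $\Lambda_4=\sqrt2\,\textbf{E}_{2\times3}^{(1,3)}$ and $\Lambda_5=\sqrt2\,\textbf{E}_{2\times3}^{(2,3)}$. Then $\Lambda_4\Lambda_5^+-\Lambda_5\Lambda_4^+=2\bigl(\textbf{E}_{2\times2}^{(1,2)}-\textbf{E}_{2\times2}^{(2,1)}\bigr)=2i\sigma_2\neq0$, whereas the stated condition ($f_{abc}=0$ as soon as one of $a,b,c$ lies in $\{n^2,\dots,np-1\}$) forces the right-hand side to be zero. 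Your general formula $\Lambda_a\Lambda_b^+-\Lambda_b\Lambda_a^+=2\delta_{jl}\bigl(\textbf{E}_{n\times n}^{(i,k)}-\textbf{E}_{n\times n}^{(k,i)}\bigr)=2i\,\delta_{jl}\,\Lambda^{[ik]}$ shows this happens whenever two extra matrices share a column, which always occurs when $p>n$ since each extra column $j>n$ carries the $n$ matrices $\sqrt2\,\textbf{E}_{n\times p}^{(1,j)},\dots,\sqrt2\,\textbf{E}_{n\times p}^{(n,j)}$. So the proposition must be weakened: the expansion over $c\in\{1,\dots,n^2-1\}$ is valid for all $a,b$ by your Hermiticity-and-trace argument, but the vanishing of $f_{abc}$ can only be asserted when exactly one of $a$, $b$ is in the extra range (the mixed case); when both are extra the constants are generically nonzero and are pinned down by your explicit computation. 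In short, your strategy is the right one, and the defect you flagged lies in the statement, not in your argument.
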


\section*{Conclusion}

Etant inspir\'e par une fa\c con de construire les matrices de Gell-Mann $n \times
n$ \`a partir des matrices de Gell-Mann $( n - 1 ) \times ( n - 1)$, nous pouvons construire une base de $\mathbb{C}^{n \times p}$, dont les \'el\'ements font la g\'en\'eralisation de l'expression de $\textbf{U}_{n \otimes n}$ en termes de matrices de Gell-Mann $n \times n$ \`a l'expression de $\textbf{U}_{n \otimes p}$.
\subsection{Exprimer une matrice de permutation tensorielle $p^{\otimes n}$ en termes de matrices de Gell-Mann g\'en\'eralis\'ees}
Des th\'eor\`emes de la sous-section \ref{soussec323} et des relations sur les matrices de Gell-Mann g\'en\'eralis\'ees sont dont nous avons besoin pour exprimer une matrice de permutation tensorielle en termes de matrices de Gell-Mann g\'en\'eralis\'ees. Dans cette sous-section, nous traitons quelques exemples.
\subsection*{$\textbf{U}_{n^{\otimes3}}(\sigma)$}

1) $\sigma=(1\;2\;3)$\\
En utilisant le Lemme \ref{thm33}, $\sigma=(1\;2)(2\;3)$, et par utilisation de la proposition \ref{thm34}
\begin{equation}\label{e51}
\textbf{U}_{n^{\otimes3}}\left((1\;2\;3)\right)=\textbf{U}_{n^{\otimes3}}\left((1\;2)\right)\cdot
\textbf{U}_{n^{\otimes3}}\left((2\;3)\right)
\end{equation}
 Nous pouvons v\'erifier facilement que 
\begin{equation*}
 \textbf{U}_{n^{\otimes3}}\left((1\;2)\right)=\;\frac{1}{n}I_{n}\otimes
I_{n}\otimes
I_{n}+\frac{1}{2}\displaystyle\sum_{a=1}^{n^{2}-1}\Lambda_{a}\otimes\Lambda_{a}\otimes
I_{n}
\end{equation*}
et
\begin{equation*}
 \textbf{U}_{n^{\otimes3}}\left((2\;3)\right)=\;\frac{1}{n}I_{n}\otimes
I_{n}\otimes
I_{n}+\frac{1}{2}\displaystyle\sum_{a=1}^{n^{2}-1}I_{n}\otimes\Lambda_{a}\otimes\Lambda_{a}
\end{equation*}
Ainsi, (\ref{e51}) devient
\begin{multline*}
\textbf{U}_{n^{\otimes3}}\left((1\;2\;3)\right)=\frac{1}{n^{2}}I_{n}\otimes
I_{n}\otimes
I_{n}+\frac{1}{2n}\displaystyle\sum_{a=1}^{n^{2}-1}I_{n}\otimes\Lambda_{a}\otimes\Lambda_{a}\\
+\frac{1}{2n}\displaystyle\sum_{a=1}^{n^{2}-1}\Lambda_{a}\otimes\Lambda_{a}\otimes
I_{n}+\frac{1}{4}\displaystyle\sum_{a=1}^{n^{2}-1}\displaystyle\sum_{b=1}^{n^{2}-1}
\Lambda_{a}\otimes\Lambda_{a}\Lambda_{b}\otimes\Lambda_{b}
\end{multline*}
D'o\`u, \`a l'aide de la relation (\ref{e43})
\begin{multline}\label{e52}
\textbf{U}_{n^{\otimes3}}\left((1\;2\;3)\right)=\frac{1}{n^{2}}I_{n}\otimes
I_{n}\otimes I_{n}+\frac{1}{2n}\displaystyle\sum_{a=1}^{n^{2}-1}I_{n}\otimes\Lambda_{a}\otimes\Lambda_{a}\\
+\frac{1}{2n}\displaystyle\sum_{a=1}^{n^{2}-1}\Lambda_{a}\otimes\Lambda_{a}\otimes
I_{n}+\frac{1}{2n}\displaystyle\sum_{a=1}^{n^{2}-1}\Lambda_{a}\otimes
I_{n}\otimes\Lambda_{a}\\
-\frac{i}{4}\displaystyle\sum_{a=1}^{n^{2}-1}\displaystyle\sum_{b=1}^{n^{2}-1}
\displaystyle\sum_{c=1}^{n^{2}-1}f_{abc}\Lambda_{a}\otimes\Lambda_{b}\otimes\Lambda_{c}
+\frac{1}{4}\displaystyle\sum_{a=1}^{n^{2}-1}\displaystyle\sum_{b=1}^{n^{2}-1}
\displaystyle\sum_{c=1}^{n^{2}-1}d_{abc}\Lambda_{a}\otimes\Lambda_{b}\otimes\Lambda_{c}
\end{multline}
2) $\sigma=(1\;3\;2)$\\
A l'aide du Lemme \ref{thm33}, $\sigma=(1\;3)(3\;2)$, et de la proposition \ref{thm21}, nous avons 
\begin{equation*}
\textbf{U}_{n^{\otimes3}}\left((1\;3)\right)=\frac{1}{n}I_{n}\otimes
I_{n}\otimes
I_{n}+\frac{1}{2}\displaystyle\sum_{a=1}^{n^{2}-1}\Lambda_{a}\otimes
I_{n}\otimes\Lambda_{a}
\end{equation*}
et par utilisation de la m\^eme m\'ethode 

\begin{multline*}
\textbf{U}_{n^{\otimes3}}\left((1\;3\;2)\right)=\frac{1}{n^{2}}I_{n}\otimes
I_{n}\otimes I_{n}+\frac{1}{2n}\displaystyle\sum_{a=1}^{n^{2}-1}I_{n}\otimes\Lambda_{a}\otimes\Lambda_{a}\\
+\frac{1}{2n}\displaystyle\sum_{a=1}^{n^{2}-1}\Lambda_{a}\otimes\Lambda_{a}\otimes
I_{n}+\frac{1}{2n}\displaystyle\sum_{a=1}^{n^{2}-1}\Lambda_{a}\otimes
I_{n}\otimes\Lambda_{a}\\
+\frac{i}{4}\displaystyle\sum_{a=1}^{n^{2}-1}\displaystyle\sum_{b=1}^{n^{2}-1}
\displaystyle\sum_{c=1}^{n^{2}-1}f_{abc}\Lambda_{a}\otimes\Lambda_{b}\otimes\Lambda_{c}
+\frac{1}{4}\displaystyle\sum_{a=1}^{n^{2}-1}\displaystyle\sum_{b=1}^{n^{2}-1}
\displaystyle\sum_{c=1}^{n^{2}-1}d_{abc}\Lambda_{a}\otimes\Lambda_{b}\otimes\Lambda_{c}
\end{multline*}
La diff\'erence entre $\textbf{U}_{n^{\otimes3}}\left((1\;2\;3)\right)$ et $\textbf{U}_{n^{\otimes3}}\left((1\;3\;2)\right)$ est le signe moins devant le cinqui\`eme terme.
\subsection*{$\textbf{U}_{n^{\otimes4}}(\sigma)$, $\sigma=(1\;2\;3\;4)$}
A l'aide du Lemme \ref{thm33}, $\sigma=(1\;2\;3)(3\;4)$, de la formule (\ref{e52}), de la proposition \ref{thm34} et des relations (\ref{e43}) et (\ref{e44}), nous avons

\begin{equation*}
\begin{split}
\textbf{U}_{n^{\otimes4}}(\sigma)&=\frac{1}{n^{3}}I_{n}\otimes I_{n}\otimes
I_{n}\otimes I_{n}
+\frac{1}{2n^{2}}\displaystyle\sum_{a=1}^{n^{2}-1}I_{n}\otimes\Lambda_{a}\otimes\Lambda_{a}\otimes
I_{n}
+\frac{1}{2n^{2}}\displaystyle\sum_{a=1}^{n^{2}-1}\Lambda_{a}\otimes\Lambda_{a}\otimes
I_{n}\otimes I_{n}\\
&+\frac{1}{2n^{2}}\displaystyle\sum_{a=1}^{n^{2}-1}\Lambda_{a}\otimes
I_{n}\otimes\Lambda_{a}\otimes
I_{n}+\frac{1}{2n^{2}}\displaystyle\sum_{a=1}^{n^{2}-1}I_{n}\otimes\ I_{n}\otimes\Lambda_{a}\otimes\Lambda_{a}\\
&+\frac{1}{2n^{2}}\displaystyle\sum_{a=1}^{n^{2}-1}I_{n}\otimes\Lambda_{a}\otimes
I_{n}\otimes\Lambda_{a}+\frac{1}{2n^{2}}\displaystyle\sum_{a=1}^{n^{2}-1}\Lambda_{a}\otimes
I_{n}\otimes I_{n}\otimes\Lambda_{a}\\
&+\frac{1}{4n}\displaystyle\sum_{a=1}^{n^{2}-1}\displaystyle\sum_{b=1}^{n^{2}-1}
\Lambda_{a}\otimes\Lambda_{a}\otimes\Lambda_{b}\otimes\Lambda_{b}+\frac{1}{4n}\displaystyle\sum_{a=1}^{n^{2}-1}\displaystyle\sum_{b=1}^{n^{2}-1}
\Lambda_{a}\otimes\Lambda_{b}\otimes\Lambda_{b}\otimes\Lambda_{a}\\
&-\frac{1}{4n}\displaystyle\sum_{a=1}^{n^{2}-1}\displaystyle\sum_{b=1}^{n^{2}-1}
\Lambda_{a}\otimes\Lambda_{b}\otimes\Lambda_{a}\otimes\Lambda_{b}\\
&+\frac{1}{4n}\displaystyle\sum_{a=1}^{n^{2}-1}\displaystyle\sum_{b=1}^{n^{2}-1}
\displaystyle\sum_{c=1}^{n^{2}-1}d_{abc}I_{n}\otimes\Lambda_{a}\otimes\Lambda_{b}\otimes\Lambda_{c}
-\frac{i}{4n}\displaystyle\sum_{a=1}^{n^{2}-1}\displaystyle\sum_{b=1}^{n^{2}-1}
\displaystyle\sum_{c=1}^{n^{2}-1}f_{abc}I_{n}\otimes\Lambda_{a}\otimes\Lambda_{b}\otimes\Lambda_{c}\\
&+\frac{1}{4n}\displaystyle\sum_{a=1}^{n^{2}-1}\displaystyle\sum_{b=1}^{n^{2}-1}
\displaystyle\sum_{c=1}^{n^{2}-1}d_{abc}\Lambda_{a}\otimes
I_{n}\otimes\Lambda_{b}\otimes\Lambda_{c}
-\frac{i}{4n}\displaystyle\sum_{a=1}^{n^{2}-1}\displaystyle\sum_{b=1}^{n^{2}-1}
\displaystyle\sum_{c=1}^{n^{2}-1}f_{abc}\Lambda_{a}\otimes I_{n}\otimes\Lambda_{b}\otimes\Lambda_{c}\\
&+\frac{1}{4n}\displaystyle\sum_{a=1}^{n^{2}-1}\displaystyle\sum_{b=1}^{n^{2}-1}
\displaystyle\sum_{c=1}^{n^{2}-1}d_{abc}\Lambda_{a}\otimes\Lambda_{b}\otimes\Lambda_{c}\otimes
I_{n}
-\frac{i}{4n}\displaystyle\sum_{a=1}^{n^{2}-1}\displaystyle\sum_{b=1}^{n^{2}-1}
\displaystyle\sum_{c=1}^{n^{2}-1}f_{abc}\Lambda_{a}\otimes\Lambda_{b}\otimes\Lambda_{c}\otimes
I_{n}\\
&+\frac{1}{4n}\displaystyle\sum_{a=1}^{n^{2}-1}\displaystyle\sum_{b=1}^{n^{2}-1}
\displaystyle\sum_{c=1}^{n^{2}-1}d_{abc}\Lambda_{a}\otimes\Lambda_{b}\otimes
I_{n}\otimes\Lambda_{c}
-\frac{i}{4n}\displaystyle\sum_{a=1}^{n^{2}-1}\displaystyle\sum_{b=1}^{n^{2}-1}
\displaystyle\sum_{c=1}^{n^{2}-1}f_{abc}\Lambda_{a}\otimes\Lambda_{b}\otimes
I_{n}\otimes\Lambda_{c}\\
&+\frac{1}{8}\displaystyle\sum_{a=1}^{n^{2}-1}\displaystyle\sum_{b=1}^{n^{2}-1}\displaystyle\sum_{c=1}^{n^{2}-1}
\displaystyle\sum_{e=1}^{n^{2}-1}\displaystyle\sum_{g=1}^{n^{2}-1}\left(-if_{abc}d_{ceg}+id_{abc}f_{ceg}+d_{aec}d_{bgc}
-d_{agc}d_{cbe}+d_{abc}d_{ceg}\right)\\
&\Lambda_{a}\otimes\Lambda_{b}\otimes\Lambda_{g}\otimes\Lambda_{e}\\
\end{split}
\end{equation*}

\subsection*{$\textbf{U}_{2^{\otimes3}}(\sigma)$, $\sigma\in S_{3}$ }
Maintenant, nous donnons la formule donnant $\textbf{U}_{2^{\otimes3}}(\sigma)$, en termes des matrices de Pauli, bien attendu.
 En utilisant la relation (Cf. par exemple \cite{Raoelina72})
\begin{equation*}
\sigma_{l}\sigma_{k}=\delta_{lk}I_{2}+i\displaystyle\sum_{m=1}^{3}\varepsilon_{lkm}\sigma_{m}
\end{equation*}
o\`u $\varepsilon_{ijk}$ est totalement antisym\'etrique, qui est \'egal \`a 1 si $(i\;\;j\;\;k)=(1\;\;2\;\;3)$, nous avons 
\begin{equation*}
\begin{split}
\textbf{U}_{2^{\otimes3}}(1\;\;2\;\;3)&=\frac{1}{4}I_{2}\otimes I_{2}\otimes
I_{2}
+\frac{1}{4}\displaystyle\sum_{l=1}^{3}I_{2}\otimes\sigma_{l}\otimes\sigma_{l}
+\frac{1}{4}\displaystyle\sum_{l=1}^{3}\sigma_{l}\otimes
I_{2}\otimes\sigma_{l}\\
&+\frac{1}{4}\displaystyle\sum_{l=1}^{3}\sigma_{l}\otimes\sigma_{l}\otimes
I_{2}-\frac{i}{4}\displaystyle\sum_{i=1}^{3}\displaystyle\sum_{j=1}^{3}\displaystyle\sum_{k=1}^{3}
\varepsilon_{ijk}\sigma_{i}\otimes\sigma_{j}\otimes\sigma_{k}
\end{split}
\end{equation*}
et
\begin{equation*}
\begin{split}
\textbf{U}_{2^{\otimes3}}(1\;\;3\;\;2)&=\frac{1}{4}I_{2}\otimes I_{2}\otimes
I_{2}
+\frac{1}{4}\displaystyle\sum_{l=1}^{3}I_{2}\otimes\sigma_{l}\otimes\sigma_{l}
+\frac{1}{4}\displaystyle\sum_{l=1}^{3}\sigma_{l}\otimes
I_{2}\otimes\sigma_{l}\\
&+\frac{1}{4}\displaystyle\sum_{l=1}^{3}\sigma_{l}\otimes\sigma_{l}\otimes
I_{2}+\frac{i}{4}\displaystyle\sum_{i=1}^{3}\displaystyle\sum_{j=1}^{3}\displaystyle\sum_{k=1}^{3}
\varepsilon_{ijk}\sigma_{i}\otimes\sigma_{j}\otimes\sigma_{k}
\end{split}
\end{equation*}
\section*{Conclusion}
Partant du fait qu'une MPT est un produit de matrices de transposition tensorielle, le th\'eor\`eme \ref{thm21} et avec
l'aide de l'expression d'une MCT en termes des matrices de Gell-Mann g\'en\'eralis\'ees, nous pouvons exprimer une MPT comme combinaison linéaire des produits tensoriels des matrices de Gell-Mann g\'en\'eralis\'ees.\\ 
\indent Nous n'avons pas l'intention de chercher une formule g\'en\'erale. Cependant, nous avons montr\'e que toute MPT peut \^etre exprim\'e en termes de matrices de Gell-Mann g\'en\'eralis\'ees et puis l'expression peut \^etre simplifi\'ee en utilisant les relations entre ces matrices.
\section{MCT ET CHARGES ELECTRIQUES DES FERMIONS}
Les fermions ont les nombres quantiques $I_3$, l'isospin et $Y$, l'hypercharge. La charge \'electrique $Q$ d'un fermion est donn\'ee par la relation de Gell-Mann-Nishijima 

\begin{equation}
Q=I_3+\frac{Y}{2}
\end{equation}

Pour les fermions du mod\`ele standard (MS), ces nombres quantiques sont donn\'es par le tableau suivant.

\begin{center}	
	\begin{tabular}{lr|r|rrr}
		  &   &   & $Q$ & $I_3$ & $Y$\\
		\hline
		 &  Leptons Neutres & $\nu_{eL}$, $\nu_{\mu L}$, $\nu_{\tau L}$ & $0$ & $1/2$ & $-1$\\
		 
		\hline
	 &  Leptons Charg\'es & $e_L$, $\mu_L$, $\tau_L$ & $-1$ & $-1/2$ & $-1$\\
		 &  & $e_R$, $\mu_R$, $\tau_R$ & $-1$ & $0$ & $-2$ \\
		\hline
	 & Quarks $u$, $c$, $t$ & $u^{r}_L$, $u^{b}_L$, $u^{g}_L$, $c^{r}_L$, $c^{b}_L$, $c^{g}_L$, $t^{r}_L$, $t^{b}_L$, $t^{g}_L$ & $2/3$ & $1/2$ & $1/3$\\
		 &  & $u^{r}_R$, $u^{b}_R$, $u^{g}_R$, $c^{r}_R$, $c^{b}_R$, $c^{g}_R$, $t^{r}_R$, $t^{b}_R$, $t^{g}_R$ & $2/3$ & $0$ & $4/3$\\
		
	 \hline
	 & Quarks $d$, $s$, $b$ & $d^{r}_L$, $d^{b}_L$, $d^{g}_L$, $s^{r}_L$, $s^{b}_L$, $s^{g}_L$, $b^{r}_L$, $b^{b}_L$, $b^{g}_L$ & $-1/3$ & $-1/2$ & $1/3$\\
		 &  & $d^{r}_R$, $d^{b}_R$, $d^{g}_R$, $s^{r}_R$, $s^{b}_R$, $s^{g}_R$, $b^{r}_R$, $b^{b}_R$, $b^{g}_R$ & $-1/3$ & $0$ & $-2/3$\\
	\hline
						
		\end{tabular}						
\end{center}
 
Une relation matricielle de Gell-Mann-Nishijima  pour huit leptons et quarks du MS de la m\^eme generation a \'et\'e propos\'ee par \cite{Zenczykowski07}, dans la formulation dans l'espace des phases. Selon la formule (\ref{eq5prime}) il est facile de remarquer que cette relation matricielle de Gell-Mann-Nishijima peut \^etre exprim\'ee en termes de $\textbf{U}_{2\otimes2}$. Dans cette section, nous allons \'ecrire cette relation en certaines formes dont une interp\'etation physique de l'action de MCT $\textbf{U}_{2\otimes2}$ sera possible. Alors, les valeur propres et les vecteur propres de $\textbf{U}_{2\otimes2}$ prendront des sens physiques. Dans la sous-section \ref{subsec423}, pour inclure plus de fermions du MS nous \'ecrirons une formule matricielle donnant les charges \'electriques en termes de la MCT $3\otimes3$,

\begin{equation*}
\textbf{U}_{3\otimes3}=
\begin{pmatrix}
  1 & 0 & 0 & 0 & 0 & 0 & 0 & 0 & 0 \\
  0 & 0 & 0 & 1 & 0 & 0 & 0 & 0 & 0 \\
  0 & 0 & 0 & 0 & 0 & 0 & 1 & 0 & 0 \\
  0 & 1 & 0 & 0 & 0 & 0 & 0 & 0 & 0 \\
  0 & 0 & 0 & 0 & 1 & 0 & 0 & 0 & 0 \\
  0 & 0 & 0 & 0 & 0 & 0 & 0 & 1 & 0 \\
  0 & 0 & 1 & 0 & 0 & 0 & 0 & 0 & 0 \\
  0 & 0 & 0 & 0 & 0 & 1 & 0 & 0 & 0 \\
  0 & 0 & 0 & 0 & 0 & 0 & 0 & 0 & 1 \\
\end{pmatrix}
\end{equation*}
Alors le sens physique donn\'e aux valeurs propres de la MCT $\textbf{U}_{2\otimes2}$ est maintenu. Dans la sous-section \ref{subsec424}, pour inclure tous les fermions du MS nous \'ecrirons la matrice des charges \'electriques en termes de  MCT $\textbf{U}_{4\otimes4}$. Pour les calculs nous avons utilis\'e SCILAB, un logiciel libre pour l'analyse numerique.

\subsection{Relation de Gell-Mann-Nishijima dans la formulation dans l'espace des phases}
Nous re\'ecrivons ici la relation de Gell-Mann-Nishijima, qui donne l'OCE des huit fermions, deux leptons et six quarks color\'es, d'une seule g\'en\'eration du MS, par exemple $e_L$, $\nu_{eL}$, $u^{r}_L$, $u^{b}_L$, $u^{g}_L$, $d^{r}_L$, $d^{b}_L$, et $d^{g}_L$, dans l'approche dans l'espace des phases \cite{Zenczykowski07}. 

\begin{equation}\label{eq6}
\textbf{Q}=\textbf{I}_3+\frac{\textbf{Y}}{2}
\end{equation} 
o\'u 
\begin{equation*}
\textbf{I}_3=\frac{1}{2}\sigma_0\otimes\sigma_0\otimes\sigma_3 
\end{equation*} 
l'op\'erateur isospin , 
\begin{equation*}
\textbf{Y}=\left(\frac{1}{3}\sum_{i=1}^3\sigma_i\otimes \sigma_i\right)\otimes \sigma_0
\end{equation*} 
l'op\'erateur hypercharge.\\

Nous remarquons que les op\'erateurs $\textbf{I}_3$ et $\textbf{Y}$ agissent ind\'ependamment sur le champ de vecteur puisque dans l'expression de $\textbf{I}_3$, $\sigma_3$ est au c\^ot\'e droit du produit tensoriel et dans l'expression de $\textbf{Y}$,  $\left(\frac{1}{3}\displaystyle\sum_{i=1}^3\sigma_i\otimes \sigma_i\right)$ est au c\^ot\'e gauche.

\subsection{Op\'erateur charges \'electriques  en termes de la MCT $2\otimes 2$}

Selon la formule (\ref{eq5prime}) nous pouvons introduire l'op\'erateur $\textbf{U}_{2\otimes2}$ dans l'expression de l'op\'erateur hypercharge $\textbf{Y}$, et alors dans l'expression de l'OCE $\textbf{Q}$.

\begin{equation*}
\textbf{Y}=\frac{2}{3}\textbf{U}_{2\otimes2}\otimes \sigma_0-\frac{1}{3}\sigma_0\otimes\sigma_0\otimes\sigma_0
\end{equation*}
D'o\`u
\begin{equation}\label{eq10}
\textbf{Q}=\frac{1}{2}\sigma_0\otimes\sigma_0\otimes\sigma_3+\frac{1}{3}\left(\textbf{U}_{2\otimes2}\otimes \sigma_0-\frac{1}{2}\sigma_0\otimes\sigma_0\otimes\sigma_0\right)
\end{equation}
ou
\begin{equation}
\textbf{Q}=\sigma_0\otimes\sigma_0\otimes\left(\frac{\sigma_3}{2}+\alpha\frac{\sigma_0}{6}\right)+\frac{1}{3}\left(\textbf{U}_{2\otimes2} -\frac{1+\alpha}{2}\sigma_0\otimes\sigma_0\right)\otimes\sigma_0
\end{equation}
avec $\alpha$ un param\`etre r\'eel.

Si $\alpha=0$, nous avons la relation (\ref{eq10}), c'est-\`a-dire (\ref{eq6}).\\

Si $\alpha=-3$,
\begin{equation}\label{eq12}
\textbf{Q}=\sigma_0\otimes\sigma_0\otimes \textbf{Q}_L+\frac{1}{3}\left(\textbf{U}_{2\otimes2}+\sigma_0\otimes\sigma_0\right)\otimes\sigma_0
\end{equation}
 
\noindent avec $\textbf{Q}_L=\begin{pmatrix}
  0 & 0 \\
  0 & -1 \\
\end{pmatrix}$ dont la diagonale est form\'ee par les charges \'electriques de $\nu_{eL}$ et de $e_L$.\\

Si $\alpha=1$, 
\begin{equation}\label{eq13}
\textbf{Q}=\sigma_0\otimes\sigma_0\otimes \textbf{Q}_Q+\frac{1}{3}\left(\textbf{U}_{2\otimes2}-\sigma_0\otimes\sigma_0\right)\otimes\sigma_0
\end{equation} 
avec $\textbf{Q}_Q=\begin{pmatrix}
  2/3 & 0 \\
  0 & -1/3 \\
\end{pmatrix}$ dont la diagonale est form\'ee par les charges \'electriques d'un quark $u$ et d'un quark $d$.\\

Les quatre valeurs propres de $\textbf{U}_{2\otimes2}$ sont une fois $-1$ et trois fois $+1$.\\

$\frac{1}{\sqrt{2}}\begin{pmatrix}
  0 \\
  -1 \\
  1\\
  0
\end{pmatrix}$ un vecteur propre de $\textbf{U}_{2\otimes2}$ associ\'e \`a la valeur propre $-1$, qui est, selon (\ref{eq12}), associ\'e aux leptons. \\
$\begin{pmatrix}
  1 \\
  0 \\
  0\\
  0
\end{pmatrix}$, $\frac{1}{\sqrt{2}}\begin{pmatrix}
  0 \\
  1 \\
  1\\
  0
\end{pmatrix}$, $\begin{pmatrix}
  0 \\
  0 \\
  0\\
  1
\end{pmatrix}$ sont les vecturs propres associ\'es \`a $+1$, qui est, selon (\ref{eq13}), associ\'es aux trois quarks color\'es.\\
La diagonale de $\textbf{Q}_L=\begin{pmatrix}
  0 & 0 \\
  0 & -1 \\
\end{pmatrix}$ sont form\'ee par les charges de $e_L$ et $\nu_{eL}$. La diagonale de $\textbf{Q}_Q=\begin{pmatrix}
  2/3 & 0 \\
  0 & -1/3 \\
\end{pmatrix}$ sont form\'ee par les charges de quark $u$ (up) et de quark $d$ (down). Le nombre de valeur propre $-1$ de  $\textbf{U}_{2\otimes2}$ est le nombre de g\'en\'eration de leptons. $+1$ trois fois valeur propre de $\textbf{U}_{2\otimes2}$, c'est-\`a-dire le nombre de couleurs. D'o\`u, les valeurs propres de $\textbf{Q}$ sont les charges des huit fermions mention\'es ci-dessus.  

\subsection{Op\'erateur charges \'electriques  en termes de MCT $3\otimes3$}\label{subsec423}
Soit, par exemple, $\textbf{Q}_L=\begin{pmatrix}
  0 & 0 & 0  \\
  0 & -1 & 0  \\
  0 & 0 & -1  \\
\end{pmatrix}$ dont la diagonale est form\'ee par la charge \'electrique d'un neutrino et les charges \'electriques de deux leptons charg\'es  et $\textbf{Q}_Q=\begin{pmatrix}
  2/3 & 0 & 0  \\
  0 & -1/3 & 0  \\
  0 & 0 & -1/3  \\
\end{pmatrix}$ 
dont la diagonale est form\'ee par la charge \'electrique d'un quark u (un quark c (charm) ou d'un quark t (top)) et les charges \'electriques d'un quarks d, d'un quarks s (strange) ou quarks b (bottom).
\begin{equation}\label{eq13prime}
\textbf{Q}_Q-\textbf{Q}_L=\frac{2}{3}\lambda_0
\end{equation}
o\`u $\lambda_0$ est la matrice unitaire $3\times 3$.\\
D'o\`u, 
\begin{equation*}
\lambda_0\otimes \lambda_0\otimes \textbf{Q}_Q + \frac{1}{3}(\textbf{U}_{3\otimes 3}-\lambda_0\otimes \lambda_0)\otimes \lambda_0=\lambda_0\otimes \lambda_0\otimes \textbf{Q}_L + \frac{1}{3}(\textbf{U}_{3\otimes 3}+\lambda_0\otimes \lambda_0)\otimes \lambda_0 
\end{equation*}
Nous notons cette expression $\textbf{Q}$, comme un OCE. \\
Les valeurs propres de la MCT $\textbf{U}_{3\otimes 3}$ sont $-1$ trois fois et $+1$ six fois. D'apr\`es l'\'equation ci-dessus les valeurs propres $-1$ sont associ\'ees aux leptons tandis que les valeurs propres $+1$ sont associ\'ees aux quarks. Les trois valeurs propres $-1$ sont associ\'ees aux trois g\'en\'erations de leptons, tandis que les trois valeurs propres $+1$ sont associ\'ees aux trois couleurs de quarks gauchers et les trois autres sont associ\'ees aux trois couleurs de quarks droitiers. La diagonale de $\textbf{Q}_L$ sont form\'es par les charges des leptons du MS dans une m\^eme g\'en\'eration, par exemple $\nu_{eL}$, $e_L$ et $e_R$. La diagonale de $\textbf{Q}_Q$ sont form\'es par les charges d'un quark $u$, d'un quark $d$ et d'un quark $s$. D'o\`u, les vingt sept valeurs propres de $\textbf{Q}$, $-1$ six fois, $0$ trois fois, $-1/3$ douze fois et $+2/3$ six fois, peuvent \^etre les charges des fermions du MS suivants: $\nu_{eL}$, $\nu_{\mu L}$, $\nu_{\tau L}$,  $e_L$, $\mu_L$, $\tau_L$, $e_R$, $\mu_R$, $\tau_R$,  $u^{r}_L$, $u^{b}_L$, $u^{g}_L$, $u^{r}_R$, $u^{b}_R$, $u^{g}_R$,  $d^{r}_L$, $d^{b}_L$, $d^{g}_L$, $d^{r}_R$, $d^{b}_R$, $d^{g}_R$, $s^{r}_L$, $s^{b}_L$, $s^{g}_L$, $s^{r}_R$, $s^{b}_R$, $s^{g}_R$. 
\begin{equation*}
\textbf{Q}_Q=\frac{1}{2}\lambda_3+\frac{1}{2\sqrt{3}}\lambda_8 
\end{equation*}

D'apr\`es la relation (\ref{e44prime}) pour $n=3$, la MCT $\textbf{U}_{3\otimes 3}$ peut s'\'ecrire en termes de matrices de Gell-Mann de la fa\c con suivante 
\begin{equation*}
\textbf{U}_{3\otimes 3}=\frac{1}{3}\lambda_0\otimes \lambda_0+\frac{1}{2}\sum_{i=1}^8\lambda_i\otimes\lambda_i 
\end{equation*}

Ainsi,
\begin{equation*}
\textbf{Q}=\lambda_0\otimes \lambda_0\otimes \left(-\frac{2}{9}\lambda_0+\frac{1}{2}\lambda_3+\frac{1}{2\sqrt{3}}\lambda_8\right)+\frac{1}{6}\left(\sum_{i=1}^8\lambda_i\otimes\lambda_i\right)\otimes \lambda_0
\end{equation*}

$\textbf{Q}$ peut s'\'ecrire sous la forme de la relation (\ref{eq6}), o\`u 
\begin{equation*}
 \textbf{I}_3=\frac{1}{2}\left(\lambda_0\otimes\lambda_0\otimes\lambda_3-\tau_1\otimes\tau_1\otimes\tau_1-\tau_2\otimes\tau_2\otimes\tau_1-\tau_3\otimes\tau_3\otimes\tau_1\right)
 \end{equation*}

\begin{equation*}
 \textbf{Y}=\tau_1\otimes\tau_1\otimes\tau_1+\tau_2\otimes\tau_2\otimes\tau_1+\tau_3\otimes\tau_3\otimes\tau_1+\frac{1}{\sqrt{3}}\lambda_0\otimes\lambda_0\otimes\lambda_8+\frac{2}{3}\left(U_{3\otimes 3}-\lambda_0\otimes\lambda_0\right)\otimes\lambda_0
 \end{equation*}
 avec $\tau_1=\begin{pmatrix}
  1 & 0 & 0  \\
  0 & 0 & 0  \\
  0 & 0 & 0  \\
\end{pmatrix}$, $\tau_2=\begin{pmatrix}
  0 & 0 & 0  \\
  0 & 1 & 0  \\
  0 & 0 & 0  \\
\end{pmatrix}$ et $\tau_3=\begin{pmatrix}
  0 & 0 & 0  \\
  0 & 0 & 0  \\
  0 & 0 & 1  \\
\end{pmatrix}$.\\

 D'apr\`es (\ref{eq33}), $\textbf{I}_3$ et $\textbf{Y}$ commutent, donc ils sont simultan\'ement diagonalisables.\\
 
 Si nous prenons $\textbf{Q}_L=\begin{pmatrix}
  0 & 0 & 0  \\
  0 & 0 & 0  \\
  0 & 0 & -1  \\
\end{pmatrix}$ et $\textbf{Q}_Q=\begin{pmatrix}
  2/3 & 0 & 0  \\
  0 & 2/3 & 0  \\
  0 & 0 & -1/3  \\
\end{pmatrix}$ 
la relation ci-dessus entre $\textbf{Q}_L$ et $\textbf{Q}_Q$ sera maintenue. Les vingt sept valeurs propres de l'OCE $\textbf{Q}$ sont $-1$ trois fois, $0$ six fois, $-1/3$ six fois et $+2/3$ douze fois. Ces valeurs propres peuvent \^etre les charges des fermions suivants: $\nu_{eL}$, $\nu_{\mu L}$, $\nu_{\tau L}$,  $\nu_{eR}$, $\nu_{\mu R}$, $\nu_{\tau R}$, $e_L$, $\mu_L$, $\tau_L$, $c^{r}_L$, $c^{b}_L$, $c^{g}_L$, $c^{r}_R$, $c^{b}_R$, $c^{g}_R$,  $t^{r}_L$, $t^{b}_L$, $t^{g}_L$, $t^{r}_R$, $t^{b}_R$, $t^{g}_R$, $b^{r}_L$, $b^{b}_L$, $b^{g}_L$, $b^{r}_R$, $b^{b}_R$, $b^{g}_R$. Les neutrinos gauchers  $\nu_{eR}$, $\nu_{\mu R}$, $\nu_{\tau R}$ dont la charge est $0$ ne sont pas des fermions du MS.

\subsection{Inclure tous les fermions du mod\`ele standard}\label{subsec424}
Pour inclure tous les fermions du MS, nous allons construire un OCE en termes de la MCT $\textbf{U}_{4\otimes 4}$.\\
Les valeurs propres de la MCT $\textbf{U}_{4\otimes 4}$ sont $-1$ six fois et $+1$ dix fois. 
Donc, l'OCE   
\begin{equation*}
\textbf{Q}=\Lambda_0\otimes \Lambda_0\otimes \textbf{Q}_Q + \frac{1}{3}(U_{4\otimes 4}-\Lambda_0\otimes \Lambda_0)\otimes \Lambda_0=\Lambda_0\otimes \Lambda_0\otimes \textbf{Q}_L + \frac{1}{3}(U_{4\otimes 4}+\Lambda_0\otimes \Lambda_0)\otimes \Lambda_0 
\end{equation*}
o\`u $\Lambda_0$ est la matrice unit\'e $4\times 4$ et $\textbf{Q}_Q=\begin{pmatrix} 
  -1/3 & 0 & 0 & 0 \\
  0 & 2/3 & 0 & 0 \\
  0 & 0 & -1/3 & 0 \\
  0 & 0 & 0 & 2/3 \\
\end{pmatrix}$, $\textbf{Q}_L=\begin{pmatrix} 
  -1 & 0 & 0 & 0 \\
  0 & 0 & 0 & 0 \\
  0 & 0 & -1 & 0 \\
  0 & 0 & 0 & 0 \\
\end{pmatrix}$. $\textbf{Q}$ a soixante quatre valeurs propres.\\
La diagonale de $\textbf{Q}_L$ est form\'ee par les charges des quatre leptons d'une m\^eme g\'en\'eration, par exemple $e_L$, $\nu_{eL}$, $e_R$ et $\nu_{eR}$, tandis que la diagonale de $\textbf{Q}_Q$ est form\'ee par les charges d'un quark $u$, d'un quark $d$, d'un quark $s$ (ou d'un quark $b$) et quark $c$ (ou d'un quark $t$). Pour la MCT $\textbf{U}_{4\otimes 4}$, les quatre valeurs propres $-1$ sont associ\'ees aux quatre g\'en\'erations de leptons, les trois valeurs propres $+1$ et une valeur propre $-1$, sont respectivement associ\'ees aux trois couleurs de quarks gauchers et le lepton qui forment ensemble un leptoquark gaucher du mod\`ele de Pati-Salam \cite{ref5}

 \begin{equation*}
\begin{pmatrix}
  u_L^r & u_L^b & u_L^g & \nu_{eL} \\
  d_L^r & d_L^b & d_L^g & e_L \\
  \end{pmatrix}=\begin{pmatrix}
  u_L^r & u_L^b & u_L^g & u_L^w \\
  d_L^r & d_L^b & d_L^g & d_L^w \\
  \end{pmatrix},
  \begin{pmatrix}
  s_L^r & s_L^b & s_L^g & \nu_{\mu L} \\
  c_L^r & c_L^b & c_L^g & \mu_L \\
  \end{pmatrix}=\begin{pmatrix}
  s_L^r & s_L^b & s_L^g & s_L^w \\
  c_L^r & c_L^b & c_L^g & c_L^w \\
  \end{pmatrix} 
\end{equation*}
tandis que les autres trois valeurs propres $+1$ et une valeur propre $-1$, sont respectivement associ\'ees aux trois couleurs de quarks droitiers et le lepton qui forment ensemble un leptoquark droitier du mod\`ele de Pati-Salam

\begin{equation*}
\begin{pmatrix}
  u_R^r & u_R^b & u_R^g & \nu_{eR} \\
  d_R^r & d_R^b & d_R^g & e_R \\
  \end{pmatrix}=\begin{pmatrix}
  u_R^r & u_R^b & u_R^g & u_R^w \\
  d_R^r & d_R^b & d_R^g & d_R^w \\
  \end{pmatrix},
  \begin{pmatrix}
  s_R^r & s_R^b & s_R^g & \nu_{\mu R} \\
  c_R^r & c_R^b & c_R^g & \mu_R \\
  \end{pmatrix}=\begin{pmatrix}
  s_R^r & s_R^b & s_R^g & s_R^w \\
  c_R^r & c_R^b & c_R^g & c_R^w \\
  \end{pmatrix} 
\end{equation*}
  
Ainsi, selon \cite{ref6} nous avons consid\'er\'e les  leptons dans les leptoquarks comme des quarks de couleur blanche.

Finalement, les derni\`eres quatre valeurs propres $+1$ sont associ\'ees aux quatre couleurs de quarks de la troisi\`eme g\'en\'eration, plus un quark de couleur jaune, \`a savoir
\begin{equation*}
\begin{pmatrix}
  t_L^r & t_L^b & t_L^g & t_L^y \\
  b_L^r & b_L^b & b_L^g & b_L^y \\
  \end{pmatrix},  \begin{pmatrix}
  t_R^r & t_R^b & t_R^g & t_R^y \\
  b_R^r & b_R^b & b_R^g & b_R^y \\
  \end{pmatrix}, 
\end{equation*}
et \c ca termine la liste des soixante quatre fermions fondamentaux, avec des neutrinos droitiers y inclus.\\
$\Lambda_1=\begin{pmatrix}
  0 & 1 & 0 & 0  \\
  1 & 0 & 0 & 0  \\
  0 & 0 & 0 & 0  \\
  0 & 0 & 0 & 0  \\
\end{pmatrix}$, $\Lambda_2=\begin{pmatrix}
  0 & -i & 0 & 0  \\
  i & 0 & 0 & 0  \\
  0 & 0 & 0 & 0  \\
  0 & 0 & 0 & 0  \\
\end{pmatrix}$, $\Lambda_3=\begin{pmatrix}
  1 & 0 & 0 & 0  \\
  0 & -1 & 0 & 0  \\
  0 & 0 & 0 & 0  \\
  0 & 0 & 0 & 0  \\
\end{pmatrix}$,\\
 $\Lambda_4=\begin{pmatrix}
  0 & 0 & 1 & 0  \\
  0 & 0 & 0 & 0  \\
  1 & 0 & 0 & 0  \\
  0 & 0 & 0 & 0  \\
\end{pmatrix}$, 
$\Lambda_5=\begin{pmatrix}
  0 & 0 & -i & 0  \\
  0 & 0 & 0 & 0  \\
  i & 0 & 0 & 0  \\
  0 & 0 & 0 & 0  \\
\end{pmatrix}$, 
$\Lambda_6=\begin{pmatrix}
  0 & 0 & 0 & 0  \\
  0 & 0 & 1 & 0  \\
  0 & 1 & 0 & 0  \\
  0 & 0 & 0 & 0  \\
\end{pmatrix}$,\\
 $\Lambda_7=\begin{pmatrix}
  0 & 0 & 0 & 0  \\
  0 & 0 & -i & 0  \\
  0 & i & 0 & 0  \\
  0 & 0 & 0 & 0  \\
\end{pmatrix}$, 
$\Lambda_8=\frac{1}{\sqrt{3}}\begin{pmatrix}
  1 & 0 & 0 & 0  \\
  0 & 1 & 0 & 0  \\
  0 & 0 & -2 & 0  \\
  0 & 0 & 0 & 0  \\
\end{pmatrix}$ 
$\Lambda_9=\begin{pmatrix}
  0 & 0 & 0 & 1  \\
  0 & 0 & 0 & 0  \\
  0 & 0 & 0 & 0  \\
  1 & 0 & 0 & 0  \\
\end{pmatrix}$,\\ 
$\Lambda_{10}=\begin{pmatrix}
  0 & 0 & 0 & -i  \\
  0 & 0 & 0 & 0  \\
  0 & 0 & 0 & 0  \\
  i & 0 & 0 & 0  \\
\end{pmatrix}$, 
$\Lambda_{11}=\begin{pmatrix}
  0 & 0 & 0 & 0  \\
  0 & 0 & 0 & 1  \\
  0 & 0 & 0 & 0  \\
  0 & 1 & 0 & 0  \\
\end{pmatrix}$, 
$\Lambda_{12}=\begin{pmatrix}
  0 & 0 & 0 & 0  \\
  0 & 0 & 0 & -i  \\
  0 & 0 & 0 & 0  \\
  0 & i & 0 & 0  \\
\end{pmatrix}$, \\
$\Lambda_{13}=\begin{pmatrix}
  0 & 0 & 0 & 0  \\
  0 & 0 & 0 & 0  \\
  0 & 0 & 0 & 1  \\
  0 & 0 & 1 & 0  \\
\end{pmatrix}$, 
$\Lambda_{14}=\begin{pmatrix}
  0 & 0 & 0 & 0  \\
  0 & 0 & 0 & 0  \\
  0 & 0 & 0 & -i  \\
  0 & 0 & i & 0  \\
\end{pmatrix}$, 
$\Lambda_{15}=\frac{1}{\sqrt{6}}\begin{pmatrix}
  1 & 0 & 0 & 0  \\
  0 & 1 & 0 & 0  \\
  0 & 0 & 1 & 0  \\
  0 & 0 & 0 & -3  \\
\end{pmatrix}$ \\
sont les matrices de Gell-Mann $4\times 4$.\\
La formule (\ref{e44prime}) pour $n=4$,  

\begin{equation*}
\textbf{U}_{4\otimes 4}=\frac{1}{4}\Lambda_0\otimes \Lambda_0+\frac{1}{2}\sum_{i=1}^{15}\Lambda_i\otimes\Lambda_i 
\end{equation*}
donne
\begin{equation*}
\textbf{Q}=\Lambda_0\otimes \Lambda_0\otimes \left(-\frac{1}{12}\Lambda_0-\frac{1}{36}\Lambda_3+\frac{\sqrt{3}}{2}\Lambda_8-\frac{1}{6\sqrt{6}}\Lambda_{15}\right)+\frac{1}{6}\left(\sum_{i=1}^{15}\Lambda_i\otimes\Lambda_i\right)\otimes \Lambda_0
\end{equation*}

\subsection*{Conclusion}
Gr\^ace \`a \cite{Zenczykowski07}, nous avons un OCE pour deux leptons et six quarks color\'es  du MS d'une seule g\'en\'eration. Cet OCE peut \^etre exprim\'e en termes de la MCT $\textbf{U}_{2\otimes 2}$. Un OCE pour plus de fermions du MS en trois g\'en\'erations a \'et\'e obtenu en termes de la MCT $\textbf{U}_{3\otimes 3}$. \\
L'expression de ces OCE, ainsi que celui qui est propos\'e par \cite{Zenczykowski07}, peut \^etre obtenue \`a partir de la  relation (\ref{eq13prime}) entre les charges \'electriques des leptons et quarks. Ces expressions permettent de dire que les valeurs propres $-1$ d'une MCT sont associ\'ees aux leptons tandis que les valeurs propres $+1$ sont associ\'ees aux quarks.\\ 
D'apr\`es le sens que prend une valeur propre d'une MCT, pour obtenir un OCE pour tous les fermions du MS en termes de la MCT $\textbf{U}_{4\otimes 4}$, nous sommes oblig\'es d'introduire la quatri\`eme g\'en\'eration de leptons, le mod\`ele de  leptoquark de Pati-Salam et les quarks de couleur jaune.    

\chapter*{Conclusion et perspectives}
\noindent La MCT $2\otimes 2$ nous est apparue lorsque nous avons \'etudi\'e l'\'equation de Dirac et sa modification, l'\'equation de Dirac-Sidharth. Alors, nous avons construit deux ensembles dont chacun contient six repr\'esentations de l'\'equation de Dirac. L'un de ces ensembles contient la repr\'esentation de Dirac et celle de Weyl. Ils se transforment l'un en l'autre par application de la MCT $2\otimes 2$.\\
Dans le chapitre \ref{chap3}, nous avons g\'en\'eralis\'e les MCT aux MPT. Une fa\c con de d\'ecomposer les MPT $\textbf{U}_{p^{\otimes n}}\left(\sigma\right)$ en produit avec des matrices de transposition tensorielle y a \'et\'e construite. Les MCT peuvent \^etre utilis\'ees pour obtenir diff\'erentes transform\'ees de certaines \'equations matricielles en \'equations matricielles de la forme $AX=B$.\\
La g\'en\'eralisation de la relation (\ref{eq5prime}) en termes de matrices de Gell-Mann g\'en\'eralis\'ees, qui sont des matrices de la physique des particules, et qui constituent une g\'en\'eralisation des matrices de Pauli, nous permet d'esp\'erer l'utilisation des MCT dans ce domaine de la physique. A l'aide de la d\'ecomposition des MPT $\textbf{U}_{p^{\otimes n}}\left(\sigma\right)$ en produit avec des matrices de transposition tensorielle, nous avons aussi obtenu une fa\c con d'exprimer ces MPT en termes des matrices de Gell-Mann g\'en\'eralis\'ees.\\ Afin de g\'en\'eraliser la relation (\ref{eq5prime}) aux MCT $\textbf{U}_{n\otimes p}$, avec $n\neq p$, nous avons introduit ce que nous appelons matrices de Gell-Mann rectangles. L'article \cite{pushpa12} utilise les matrices \'el\'ementaires $3\times 2$. Nous nous demandons ce que nous pourrons obtenir si nous utilisons des matrices de Gell-Mann $3\times 2$ au lieu de matrices \'el\'ementaires $3\times 2$.\\
Il existe d'autres g\'en\'eralisations des matrices de Pauli, entre autres les matrices de Kibler \cite{kibler09}, les nonions 
\cite{volkov10} et la g\'en\'eralisation par produits tensoriels des matrices de Pauli elles-m\^emes \cite{rigetti04,saniga06}. Notre tentative d'exprimer la MCT $3\otimes 3$ en termes de matrices de Kibler et de nonions nous fait penser qu'il doit y avoir une autre g\'en\'eralisation des matrices de Pauli $\left(\Pi_{i}\right)_{1\leq i\leq n^2-1}$ satisfaisant 
\begin{equation*}
\textbf{U}_{n\otimes n}=\frac{1}{n}\Pi_0\otimes \Pi_0+\frac{1}{n}\sum_{i=1}^{n^2-1}\Pi_i\otimes\Pi_i
\end{equation*}
avec $\Pi_0$ la matrice unit\'e $n\times n$. Pour $n=2^p$, $p$ entier naturel non nul, cette relation est satisfaite par la g\'en\'eralisation par produits tensoriels des matrices de Pauli \cite{Christian13}. \\
 Nous avons exprim\'e l'OCE, pour huit leptons et quarks du MS d'une m\^eme g\'en\'eration, propos\'e par Zenczykowski dans la formulation dans l'espace des phases, en termes de la MCT $\textbf{U}_{2\otimes 2}$, sous une forme dont des interpr\`etations physiques seraient possibles. Nous avons remarqu\'e que cet OCE peut \^etre obtenu \`a l'aide de la relation (\ref{eq13prime}) entre les charges des leptons et des quarks. A l'aide de cette m\^eme relation nous avons construit, en termes de la MCT $\textbf{U}_{4\otimes 4}$ un OCE incluant les fermions du MS. \c Ca nous oblige \`a introduire la quatri\`eme g\'en\'eration de leptons, le mod\`ele de  leptoquark de Pati-Salam et les quarks de couleur jaune. Les expressions de ces OCE en termes de MCT nous permettent de remarquer que les valeurs propres $-1$ d'une MCT peuvent \^etre associ\'ees aux leptons tandis que les valeurs propres $+1$ peuvent \^etre associ\'ees aux quarks.

\appendix

\chapter{Matrices. Une g\'en\'eralisation}\label{appA}

Si les \'el\'ements d'une matrice sont consid\'er\'es comme les composantes d'un tenseur du second ordre, nous adoptons la   notation habituelle pour une matrice, sans crochets \`a l'int\'erieur. Tandis que si les \'el\'ements sont, par exemple, consid\'er\'es comme les composantes d'un tenseur du sixi\`eme ordre, trois fois covariant et trois fois contravariant, alors nous repr\'esentons la matrice de la mani\`ere suivante, par exemple\\
 \begin{equation}
 \textbf{M}=\left[%
\begin{array}{cc}
  \left[%
\begin{array}{cc}
  \left[%
\begin{array}{cc}
  1 & 0 \\
  1 & 1 \\
\end{array}%
\right] & \left[%
\begin{array}{cc}
  1 & 1 \\
  3 & 2 \\
\end{array}%
\right] \\
  \left[%
\begin{array}{cc}
  0 & 0 \\
  0 & 0 \\
\end{array}%
\right] & \left[%
\begin{array}{cc}
  1 & 1 \\
  1 & 1 \\
\end{array}%
\right] \\
\end{array}%
\right] & \left[%
\begin{array}{cc}
  \left[%
\begin{array}{cc}
  1 & 0 \\
  1 & 2 \\
\end{array}%
\right] & \left[%
\begin{array}{cc}
  7 & 8 \\
  9 & 0 \\
\end{array}%
\right] \\
  \left[%
\begin{array}{cc}
  3 & 4 \\
  5 & 6 \\
\end{array}%
\right] & \left[%
\begin{array}{cc}
  9 & 8 \\
  7 & 6 \\
\end{array}%
\right] \\
\end{array}%
\right] \\
  \left[%
\begin{array}{cc}
  \left[%
\begin{array}{cc}
  1 & 1 \\
  1 & 1 \\
\end{array}%
\right] & \left[%
\begin{array}{cc}
  0 & 0 \\
  3 & 2 \\
\end{array}%
\right] \\
  \left[%
\begin{array}{cc}
  4 & 5 \\
  1 & 6 \\
\end{array}%
\right] & \left[%
\begin{array}{cc}
  1 & 7 \\
  8 & 9 \\
\end{array}%
\right] \\
\end{array}%
\right] & \left[%
\begin{array}{cc}
  \left[%
\begin{array}{cc}
  5 & 4 \\
  3 & 2 \\
\end{array}%
\right] & \left[%
\begin{array}{cc}
  1 & 0 \\
  1 & 2 \\
\end{array}%
\right] \\
  \left[%
\begin{array}{cc}
  3 & 4 \\
  5 & 6 \\
\end{array}%
\right] & \left[%
\begin{array}{cc}
  7 & 8 \\
  9 & 0 \\
\end{array}%
\right] \\
\end{array}%
\right] \\
  \left[%
\begin{array}{cc}
  \left[%
\begin{array}{cc}
  1 & 2 \\
  3 & 4 \\
\end{array}%
\right] & \left[%
\begin{array}{cc}
  9 & 8 \\
  7 & 6 \\
\end{array}%
\right] \\
  \left[%
\begin{array}{cc}
  5 & 6 \\
  7 & 8 \\
\end{array}%
\right] & \left[%
\begin{array}{cc}
  5 & 4 \\
  3 & 2 \\
\end{array}%
\right] \\
\end{array}%
\right] & \left[%
\begin{array}{cc}
  \left[%
\begin{array}{cc}
  9 & 8 \\
  7 & 6 \\
\end{array}%
\right] & \left[%
\begin{array}{cc}
  5 & 4 \\
  3 & 2 \\
\end{array}%
\right] \\
  \left[%
\begin{array}{cc}
  1 & 0 \\
  1 & 2 \\
\end{array}%
\right] & \left[%
\begin{array}{cc}
  3 & 4 \\
  5 & 6 \\
\end{array}%
\right] \\
\end{array}%
\right] \\
\end{array}%
\right]\linebreak
 \end{equation}

$\textbf{M}=\left(%
  M_{j_1j_2j_3}^{i_1i_2i_3} \\
\right)$\\

$i_1i_2i_3=111,\; 112,\; 121,\; 122,\; 211,\; 212,\; 221,\; 222,\;
311,\;312,\;
321,\; 322$\\
indices de ligne\\

 $j_1j_2j_3=111,\; 112,\; 121,\; 122,\; 211,\; 212,\; 221,\; 222$\\
 indices de colonne\\
 Les premiers indices $i_1$  et $j_1$ sont les indices des crochets ext\'erieurs que nous appelons crochet du premier ordre; les deuxi\`eme indices $i_2$  et $j_2$ sont les indices des crochets suivants que nous appelons les crochets du deuxi\`emes ordre; les troisi\`emes indices $i_3$  et $j_3$  sont les indices des crochets les plus int\'erieurs, de cet exemple, que nous appelons les crochets du troisi\`emes ordre. Ainsi, par exemple, $M_{121}^{321}=5$. \\
 Si nous supprimons les crochets du troisi\`emes ordre, alors les \'el\'ements de la matrice $\textbf{M}$ sont consid\'er\'es comme les composantes d'un tenseurs d'ordre quatre, deux fois contravariant et deux fois covariant.\\
  \indent
  Consid\'erons un cas plus g\'en\'eral, $\textbf{M}=\left(%
  M_{j_1j_2\ldots j_k}^{i_1i_2\ldots i_k} \\
\right)$ o\`u les \'el\'ements de $M$ sont consid\'er\'es comme les composantes d'un tenseur d'ordre $2k$, $k$ fois contravariant et $k$ fois covariant. Les crochets du premier ordre sont les crochets d'une matrice $n_1\times m_1$; les crochets du deuxi\`emes ordre sont les crochets d'une matrice $n_2\times m_2$ ; $\cdots $; les crochets du $k$-i\`eme ordre sont les crochets d'une matrice $n_k\times m_k$. $\textbf{M}=\left(%
  \gamma_t^s \\
\right)_{1\;\leq\;s\;\leq\;n_1n_2\ldots
n_k,\;1\;\leq\;t\;\leq\;m_1m_2\ldots m_k}$ si les \'el\'ements de $\textbf{M}$
sont consid\'er\'es comme les composantes d'un tenseur du second ordre, une fois contravariant et une covariant. Alors, \cite{rao77}
\begin{equation}\label{e21}
s=n_k\ldots n_3n_2(i_1-1)+n_kn_{k-1}\ldots n_3(i_2-1)+\ldots
+n_k(i_{k-1}-1)+i_k
\end{equation}
\begin{equation}\label{e22}
t=m_k\ldots m_3m_2(j_1-1)+m_km_{k-1}\ldots m_3(j_2-1)+\ldots
+m_k(j_{k-1}+j_k
\end{equation}
 Les \'el\'ements de la matrice $\textbf{N}=\left(N_k^{ij}\right)=\left[%
\begin{array}{c}
  \left[%
\begin{array}{cc}
  1 & 1 \\
  1 & 1 \\
\end{array}%
\right] \\
  \left[%
\begin{array}{cc}
  0 & 1 \\
  2 & 3 \\
\end{array}%
\right] \\
  \left[%
\begin{array}{cc}
  4 & 5 \\
  6 & 7 \\
\end{array}%
\right] \\
  \left[%
\begin{array}{cc}
  8 & 9 \\
  0 & 1 \\
\end{array}%
\right] \\
\end{array}%
\right]$   ,
 avec crochets int\'erieurs, peut \^etre consid\'er\'es comme les composantes d'un tenseur d'ordre trois, deux fois  contravariant et une fois covariant. Alors, par exemple, $\left(N_2^{12}\right)=1$.

\chapter{Produit Tensoriel de matrices}\label{AppB}

\begin{definition}
Consid\'erons $\textbf{A}=(A^i_j)\in\mathbb{C}^{m\times
n}$, $\textbf{B}=(B^i_j)\in\mathbb{C}^{p\times
r}$. La matrice d\'efinie par
\begin{center}
$\textbf{A}\otimes \textbf{B}=
\begin{pmatrix}
  A^1_1\textbf{B} & \ldots & A^1_j\textbf{B} & \ldots & A^1_n\textbf{B} \\
  \vdots &  & \vdots &  & \vdots \\
  A^i_1\textbf{B} & \ldots & A^i_j\textbf{B} & \ldots & A^i_n\textbf{B} \\
  \vdots &  & \vdots &  & \vdots \\
  A^m_1\textbf{B} & \ldots & A^m_j\textbf{B} & \ldots & A^m_n\textbf{B} 
\end{pmatrix}%
$
\end{center}
obtenue apr\`es les multiplications par un scalar, $A^i_j\textbf{B}$, est appel\'ee le produit tensoriel de la matrice
$\textbf{A}$ par la matrice $\textbf{B}$.
\begin{center}
$\textbf{A}\otimes \textbf{B}\in\mathbb{C}^{mp\times
nr}$
\end{center}
\end{definition}
\begin{property}
Le produit tensoriel de matrices est associatif.
\end{property}
\begin{property}
Le produit tensoriel de matrices est distributif par rapport \`a l'addition.
\end{property}

\begin{property}
$\left(%
\textbf{B}_1\cdot\textbf{A}_1%
\right) \otimes \left(%
\textbf{B}_2\cdot\textbf{A}_2
\right)=\left(%
\textbf{B}_1 \otimes \textbf{B}_2
\right)\cdot\left(%
\textbf{A}_1 \otimes \textbf{A}_2
\right)$ pour toutes matrices $\textbf{B}_1$, $\textbf{A}_1$, $\textbf{B}_2$, $\textbf{B}_2$ si les produits habituels de matrices $\textbf{B}_1\cdot\textbf{A}_1$ et $\textbf{B}_2\cdot\textbf{A}_2$ sont d\'efinis.
\end{property}

\begin{proposition}
If $\textbf{A} \otimes \textbf{B}=\textbf{O}$, then  $\textbf{A}
=\textbf{O}$  or $\textbf{B}=\textbf{O}$, pour toutes matrices $\textbf{A}, \textbf{B}$.
\end{proposition}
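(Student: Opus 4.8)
Le plan est de raisonner par contraposition, en exploitant directement la structure par blocs du produit tensoriel rappel\'ee dans la D\'efinition de l'Annexe \ref{AppB}. Plut\^ot que de supposer $\textbf{A}\otimes \textbf{B}=\textbf{O}$ pour en d\'eduire l'alternative, je montrerais que si $\textbf{A}\neq \textbf{O}$ \emph{et} $\textbf{B}\neq \textbf{O}$, alors n\'ecessairement $\textbf{A}\otimes \textbf{B}\neq \textbf{O}$.

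D'abord, en supposant $\textbf{A}=(A^i_j)\neq \textbf{O}$ et $\textbf{B}=(B^i_j)\neq \textbf{O}$, je choisirais un couple d'indices $(i_0,j_0)$ tel que $A^{i_0}_{j_0}\neq 0$ et un couple $(k_0,l_0)$ tel que $B^{k_0}_{l_0}\neq 0$; de tels indices existent par d\'efinition m\^eme d'une matrice non nulle. Ensuite, d'apr\`es la D\'efinition du produit tensoriel, le bloc situ\'e en position $(i_0,j_0)$ de $\textbf{A}\otimes \textbf{B}$ est exactement la matrice $A^{i_0}_{j_0}\textbf{B}$. Son \'el\'ement \`a la place $(k_0,l_0)$ vaut donc $A^{i_0}_{j_0}B^{k_0}_{l_0}$, qui est un produit de deux nombres complexes non nuls, donc non nul. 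Par cons\'equent $\textbf{A}\otimes \textbf{B}$ poss\`ede au moins un \'el\'ement non nul, ce qui donne $\textbf{A}\otimes \textbf{B}\neq \textbf{O}$, soit la contrapos\'ee voulue.

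La seule pr\'ecaution \`a prendre — et c'est le point le plus d\'elicat, quoique parfaitement \'el\'ementaire — est la tenue des indices: il faut s'assurer que l'\'el\'ement du bloc $(i_0,j_0)$ \`a la place $(k_0,l_0)$ correspond bien \`a un unique couple ligne-colonne de la grande matrice $\textbf{A}\otimes \textbf{B}$. Ceci est garanti par les formules \eqref{e21} et \eqref{e22} de l'Annexe \ref{appA}, qui fournissent la bijection entre les indices multiples $(i_0,k_0)$ et $(j_0,l_0)$ et les indices simples de la matrice vue comme tenseur du second ordre; on peut ainsi affirmer sans ambigu\"it\'e que cet \'el\'ement non nul figure effectivement dans $\textbf{A}\otimes \textbf{B}$. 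Aucune hypoth\`ese sur les dimensions n'est requise, et l'argument vaut pour toutes matrices $\textbf{A}$, $\textbf{B}$ de tailles quelconques.
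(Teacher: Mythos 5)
Votre d\'emonstration est correcte, mais il n'y a ici rien \`a comparer: dans l'Annexe \ref{AppB}, le m\'emoire \'enonce cette proposition \emph{sans} d\'emonstration, de sorte que votre argument comble une lacune plut\^ot qu'il ne recoupe une preuve existante. Votre contrapos\'ee est l'argument naturel et elle fonctionne exactement comme vous l'\'ecrivez: si $A^{i_0}_{j_0}\neq 0$ et $B^{k_0}_{l_0}\neq 0$, alors d'apr\`es la d\'efinition par blocs du produit tensoriel le bloc en position $(i_0,j_0)$ de $\textbf{A}\otimes\textbf{B}$ est $A^{i_0}_{j_0}\textbf{B}$, dont l'\'el\'ement $(k_0,l_0)$ vaut $A^{i_0}_{j_0}B^{k_0}_{l_0}\neq 0$. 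Le seul fait non trivial sous-jacent, qu'il vaudrait la peine de rendre explicite, est que $\mathbb{C}$ est int\`egre: le produit de deux scalaires non nuls est non nul. C'est pr\'ecis\'ement l\`a que l'\'enonc\'e cesserait d'\^etre vrai pour des matrices \`a coefficients dans un anneau poss\'edant des diviseurs de z\'ero, et c'est donc cette propri\'et\'e du corps des complexes, plus que la comptabilit\'e des indices, qui porte la preuve. \`A ce propos, votre appel aux formules \eqref{e21} et \eqref{e22} de l'Annexe \ref{appA} est l\'egitime mais superflu: avec les notations de la D\'efinition de l'Annexe \ref{AppB} (o\`u $\textbf{A}\in\mathbb{C}^{m\times n}$ et $\textbf{B}\in\mathbb{C}^{p\times r}$), l'\'el\'ement $(k_0,l_0)$ du bloc $(i_0,j_0)$ est sans ambigu\"{\i}t\'e l'\'el\'ement situ\'e \`a la ligne $p(i_0-1)+k_0$ et \`a la colonne $r(j_0-1)+l_0$ de $\textbf{A}\otimes\textbf{B}$, ce qui suffit \`a conclure que cette matrice n'est pas nulle.
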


\begin{proposition}
$I_n \otimes I_m$\;=\;$I_{nm}$
\end{proposition}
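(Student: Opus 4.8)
The plan is to prove the identity by characterising $I_n \otimes I_m$ through its action on column vectors, exactly as the earlier propositions (e.g. Proposition \ref{prop23}) identify matrices via their behaviour on tensor products $\textbf{a}\otimes\textbf{b}$. First I would recall that $I_{nm}$ is, by definition, the unique matrix satisfying $I_{nm}\cdot v = v$ for every $v \in \mathbb{C}^{nm\times 1}$, and that $\mathbb{C}^{nm\times 1}$ is spanned by the decomposable vectors $\textbf{a}\otimes\textbf{b}$ with $\textbf{a}\in\mathbb{C}^{n\times 1}$ and $\textbf{b}\in\mathbb{C}^{m\times 1}$.

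Next I would compute the action on such a decomposable vector using the mixed-product property stated above (namely $(\textbf{B}_1\cdot\textbf{A}_1)\otimes(\textbf{B}_2\cdot\textbf{A}_2)=(\textbf{B}_1\otimes\textbf{B}_2)\cdot(\textbf{A}_1\otimes\textbf{A}_2)$): one gets $(I_n\otimes I_m)\cdot(\textbf{a}\otimes\textbf{b}) = (I_n\cdot\textbf{a})\otimes(I_m\cdot\textbf{b}) = \textbf{a}\otimes\textbf{b}$. Since $I_n\otimes I_m$ fixes every generator of $\mathbb{C}^{nm\times 1}$ and is linear, it fixes every vector, hence it coincides with $I_{nm}$.

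A second, purely computational route stays inside the index bookkeeping of the paper: writing $I_n = (\delta^{i_1}_{j_1})$ and $I_m = (\delta^{i_2}_{j_2})$, the block definition of the Kronecker product gives that the $(i_1,j_1)$-block of $I_n\otimes I_m$ is $(I_n)^{i_1}_{j_1}\,I_m = \delta^{i_1}_{j_1} I_m$, so the diagonal blocks are $I_m$ and the off-diagonal blocks vanish. In the multi-index convention used throughout this read $(I_n\otimes I_m)^{i_1 i_2}_{j_1 j_2} = \delta^{i_1}_{j_1}\delta^{i_2}_{j_2}$, which equals $1$ precisely when $(i_1,i_2)=(j_1,j_2)$ and $0$ otherwise, i.e. the $nm\times nm$ identity.

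There is essentially no obstacle here, since the statement follows directly from the definition of the Kronecker product together with $I_n\textbf{a}=\textbf{a}$. The only point demanding a little care is the bookkeeping: one must match the block-diagonal pattern $\delta^{i_1}_{j_1} I_m$ with the linearised index $s = m(i_1-1)+i_2$ of formula (\ref{e21}), so that the diagonal blocks $I_m$ land in the correct positions of the $nm\times nm$ array; once this matching is checked, the identification with $I_{nm}$ is immediate.
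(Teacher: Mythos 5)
Your proposal is correct. Note that the paper itself states this proposition without any proof, treating it as an immediate consequence of the definition of the Kronecker product, so there is no authorial argument to compare against. Of your two routes, the second is exactly the verification the paper's conventions are designed to make one-line: from the block definition, the $(i_1,j_1)$-block of $I_n\otimes I_m$ is $\delta^{i_1}_{j_1}I_m$, i.e.\ $(I_n\otimes I_m)^{i_1 i_2}_{j_1 j_2}=\delta^{i_1}_{j_1}\delta^{i_2}_{j_2}$, and your matching of this with the linearised index $s=m(i_1-1)+i_2$ from formula (\ref{e21}) of Annexe \ref{appA} is the only bookkeeping point worth writing down. Your first route, characterising $I_n\otimes I_m$ by its action on the decomposable vectors $\textbf{a}\otimes\textbf{b}$ via the mixed-product property and then extending by linearity since these vectors span $\mathbb{C}^{nm\times 1}$, is also sound and has the virtue of mirroring how the paper identifies matrices elsewhere (compare Proposition \ref{prop23} and the defining property of the MCT), but it invokes more machinery than this statement needs; for a result the paper regards as evident, the entrywise computation is the more proportionate argument.
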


\begin{proposition}\label{thm12}
Consid\'erons $\left(%
 \textbf{A}_i%
\right)_{1\leq i\leq n\times m}$ une base de $\mathbb{C}^{n\times m}$, $\left(%
\textbf{B}_j %
\right)_{1\leq j\leq p\times r}$ une base de $\mathbb{C}^{p\times r}$. Alors, $\left(%
 \textbf{A}_i\otimes \textbf{B}_j%
\right)_{1\leq i\leq n\times m, 1\leq j\leq p\times r}$ est une base de $\mathbb{C}^{np\times mr}$.
\end{proposition}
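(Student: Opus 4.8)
The plan is to combine a dimension count with a spanning argument, so that I never have to verify linear independence head-on. First I would record the dimensions: $\dim_{\mathbb C}\mathbb C^{n\times m}=nm$, $\dim_{\mathbb C}\mathbb C^{p\times r}=pr$, and $\dim_{\mathbb C}\mathbb C^{np\times mr}=nmpr$. The proposed family $(\textbf A_i\otimes \textbf B_j)$ has exactly $nm\cdot pr=nmpr$ members, which already equals the dimension of the target space. Consequently it suffices to prove that this family spans $\mathbb C^{np\times mr}$; a spanning set whose cardinality equals the dimension is automatically a basis.

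To prove spanning I would first note that the Kronecker product is bilinear: the distributivity property, together with the scalar rule built into the definition of $\textbf A\otimes\textbf B$, gives $(\lambda\textbf A+\mu\textbf A')\otimes\textbf B=\lambda(\textbf A\otimes\textbf B)+\mu(\textbf A'\otimes\textbf B)$ and likewise in the second slot. The key computational step is then to show that the Kronecker products of elementary matrices exhaust the standard basis of $\mathbb C^{np\times mr}$. Concretely, for $1\le a\le n$, $1\le b\le m$, $1\le c\le p$, $1\le d\le r$ one has
\[
  \textbf E_{n\times m}^{(a,b)}\otimes \textbf E_{p\times r}^{(c,d)}=\textbf E_{np\times mr}^{(s,t)},
\]
where the row index $s$ and column index $t$ are produced from $(a,c)$ and $(b,d)$ by the index bookkeeping already set up in Appendix~\ref{appA} (formulas \eqref{e21} and \eqref{e22}). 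As the quadruple $(a,b,c,d)$ runs over its $nmpr$ values, the pair $(s,t)$ runs bijectively over all positions of an $np\times mr$ matrix, so these products are precisely the $nmpr$ distinct elementary matrices $\textbf E_{np\times mr}^{(s,t)}$. In particular the set $\{\textbf A\otimes\textbf B:\textbf A\in\mathbb C^{n\times m},\ \textbf B\in\mathbb C^{p\times r}\}$ spans $\mathbb C^{np\times mr}$.

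Finally I would pass from the standard basis to the given bases. Since $(\textbf A_i)$ is a basis of $\mathbb C^{n\times m}$ and $(\textbf B_j)$ a basis of $\mathbb C^{p\times r}$, each elementary matrix expands as $\textbf E_{n\times m}^{(a,b)}=\sum_i\alpha_i\textbf A_i$ and $\textbf E_{p\times r}^{(c,d)}=\sum_j\beta_j\textbf B_j$. Bilinearity then yields
\[
  \textbf E_{n\times m}^{(a,b)}\otimes \textbf E_{p\times r}^{(c,d)}=\sum_{i,j}\alpha_i\beta_j\,\textbf A_i\otimes\textbf B_j,
\]
so every elementary matrix of $\mathbb C^{np\times mr}$ lies in the span of $(\textbf A_i\otimes\textbf B_j)$. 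Hence this family spans the whole space, and by the dimension count it is a basis.

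I expect the only delicate point to be the middle step, namely confirming that $(a,b,c,d)\mapsto(s,t)$ hits every matrix position exactly once with no collisions; but this is the same indexing already carried out for the commutation matrix in Proposition~\ref{prop23} and in Appendix~\ref{appA}, so it can be quoted rather than redone. A fully self-contained alternative would skip elementary matrices and establish independence directly: writing $\sum_{i,j}\lambda_{ij}\textbf A_i\otimes\textbf B_j=\textbf O$ and reading off each $(a,b)$-block $\sum_i (\textbf A_i)_{ab}\,\bigl(\sum_j\lambda_{ij}\textbf B_j\bigr)=\textbf O$, the invertibility of the matrix $\bigl[(\textbf A_i)_{ab}\bigr]$ (whose columns are the linearly independent $\textbf A_i$ vectorized) forces $\sum_j\lambda_{ij}\textbf B_j=\textbf O$ for each $i$, and independence of the $\textbf B_j$ then gives all $\lambda_{ij}=0$.
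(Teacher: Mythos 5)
Your proposal is correct, but there is nothing in the paper to compare it against: Proposition \ref{thm12} is stated in Annexe \ref{AppB} as a background fact with no proof given (unlike Property \ref{thm7}, which the paper does prove), so your argument fills a gap rather than paralleling an existing one. Both routes you sketch are sound. The main route --- count $nmpr$ elements against $\dim\mathbb{C}^{np\times mr}=nmpr$ and then prove spanning --- hinges on the identity $\textbf{E}_{n\times m}^{(a,b)}\otimes\textbf{E}_{p\times r}^{(c,d)}=\textbf{E}_{np\times mr}^{(s,t)}$ with $s=p(a-1)+c$ and $t=r(b-1)+d$, and the map $(a,c)\mapsto p(a-1)+c$ is indeed a bijection onto $\{1,\dots,np\}$ (likewise for columns), which is exactly the $k=2$ case of the indexing formulas \eqref{e21}--\eqref{e22} of l'Annexe \ref{appA}; so the ``delicate point'' you flag is genuinely safe to quote. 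The bilinearity used to pass from elementary matrices to the bases $(\textbf{A}_i)$, $(\textbf{B}_j)$ follows from the paper's distributivity property together with the scalar rule in the definition of the Kronecker product. Your alternative direct-independence argument is also complete: the $(a,b)$-block of $\sum_{i,j}\lambda_{ij}\,\textbf{A}_i\otimes\textbf{B}_j$ is $\sum_i(\textbf{A}_i)_{ab}\bigl(\sum_j\lambda_{ij}\textbf{B}_j\bigr)$, and since the $nm\times nm$ matrix whose columns are the vectorized $\textbf{A}_i$ is invertible (a linearly independent family of $nm$ vectors in an $nm$-dimensional space), each $\sum_j\lambda_{ij}\textbf{B}_j$ vanishes, whence all $\lambda_{ij}=0$; this version is the more self-contained of the two, as it needs no bookkeeping of indices at all, at the cost of proving independence directly instead of getting it free from the dimension count.
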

\begin{property}\label{thm7}
Consid\'erons $\left(%
 \textbf{A}_i%
\right)_{1\leq i\leq n}$ un syst\`eme d'\'el\'ements de $\mathbb{C}^{p\times r}$, $\left(%
\textbf{B}_i %
\right)_{1\leq i\leq n}$ un syst\`eme d'\'el\'ements de
$\mathbb{C}^{l\times m}$, $\textbf{A}\in\mathbb{C}^{p\times r}$ et $\textbf{B}\in\mathbb{C}^{l\times m}$. Si 
\begin{equation}\label{e12}
\textbf{A}\otimes \textbf{B}=\displaystyle\sum_{i=1}^{n}\textbf{A}_i\otimes \textbf{B}_i
\end{equation}
 alors, pour toute matrice $\textbf{K}$, 
 \begin{equation*}
 \textbf{A}\otimes \textbf{K}\otimes \textbf{B}=\displaystyle\sum_{i=1}^{n}\textbf{A}_i\otimes \textbf{K}\otimes \textbf{B}_i   
\end{equation*}
\end{property}
\begin{proof}
Prenons $K=\left(K_{j_2}^{j_1}\right)\in\mathbb{C}^{q\times s}$, $A_i=\left(A_{(i)k_2}^{\;\;\;\;k_1}\right)$, $A=\left(A_{k_2}^{k_1}\right)$, $B_i=\left(B_{(i)l_2}^{\;\;\;\;l_1}\right)$, $B=\left(B_{l_2}^{l_1}\right)$. Alors, $A_{k_2}^{k_1}K_{j_2}^{j_1}B_{l_2}^{l_1}=K_{j_2}^{j_1}A_{k_2}^{k_1}B_{l_2}^{l_1}$ et $\displaystyle\sum_{i=1}^{n}A_{(i)k_2}^{\;\;\;\;k_1}K_{j_2}^{j_1}B_{(i)l_2}^{\;\;\;\;l_1}=K_{j_2}^{j_1}\displaystyle\sum_{i=1}^{n}A_{(i)k_2}^{\;\;\;\;k_1}B_{(i)l_2}^{\;\;\;\;l_1}$ sont respectivement les \'el\'ements de $A\otimes K\otimes B$ et $\displaystyle\sum_{i=1}^{n}A_i\otimes K\otimes B_i$ sur la $k_1j_1l_1$-i\`eme ligne, $k_2j_2l_2$-i\`eme colonne. En utilisant  \eqref{e12}, $A_{k_2}^{k_1}B_{l_2}^{l_1}=\displaystyle\sum_{i=1}^{n}A_{(i)k_2}^{\;\;\;\;k_1}B_{(i)l_2}^{\;\;\;\;l_1}$. Donc, $A_{k_2}^{k_1}K_{j_2}^{j_1}B_{l_2}^{l_1}=\displaystyle\sum_{i=1}^{n}A_{(i)k_2}^{\;\;\;\;k_1}K_{j_2}^{j_1}B_{(i)l_2}^{\;\;\;\;l_1}$.
 Ceci est vraie pour tous indices $k_1$, $j_1$, $l_1$, $k_2$, $j_2$, et $l_2$. D'o\`u, $A\otimes K\otimes B= \displaystyle\sum_{i=1}^{n}A_i\otimes K\otimes B_i$.   
\end{proof}

\backmatter

\end{document}